\numberwithin{equation}{section}
\numberwithin{theorem}{section}
\numberwithin{lemma}{section}
\newcommand{\ignore}[1]{}
\newcommand{\opn}{\operatorname}
\newcommand{\re}{\operatorname{Re}}
\newcommand{\im}{\operatorname{Im}}
\newcommand{\mbb}[1]{\mathbb{#1}}
\newcommand{\mb}[1]{\mathbf{#1}}
\newcommand{\mc}[1]{\mathcal{#1}}
\newcommand{\pa}{\partial}
\newcommand{\der}[2]{\frac{\partial #1}{\partial #2}}
\newcommand{\jd}{\displaystyle}
\newcommand{\jt}{\textstyle}
\newcommand{\e}[1]{{(#1)}}
\newcommand{\wtil}[1]{\widetilde{#1}}
\newcommand{\sech}{\operatorname{sech}}
\newcommand{\lp}{\left(}
\newcommand{\rp}{\right)}
\newcommand{\vast}{\bBigg@{3}}
\newcommand{\Vast}{\bBigg@{3.5}}
\newcommand{\VVast}{\bBigg@{4}}
\DeclareRobustCommand{\gaussk}{\DOTSB\gaussk@\slimits@}
\newcommand{\gaussk@}{\mathop{\vphantom{\sum}\mathpalette\bigcal@{K}}}
\newcommand{\bigcal@}[2]{%
  \vcenter{\m@th
    \sbox\z@{$#1\sum$}%
    \dimen@=\dimexpr\ht\z@+\dp\z@
    \hbox{\resizebox{!}{0.8\dimen@}{$\mathcal{K}$}}%
  }%
}
\newcommand{\cfracplus}{\mathbin{\cfracplus@}}
\newcommand{\cfracplus@}{%
  \sbox\z@{$\dfrac{1}{1}$}%
  \sbox\tw@{$+$}%
  \raisebox{\dimexpr\dp\tw@-\dp\z@\relax}{$+$}%
}
\newcommand{\cfracdots}{\mathord{\cfracdots@}}
\newcommand{\cfracdots@}{%
  \sbox\z@{$\dfrac{1}{1}$}%
  \sbox\tw@{$+$}%
  \raisebox{\dimexpr\dp\tw@-\dp\z@\relax}{$\cdots$}%
}
\begin{document}

\title{The semi-analytic theory and computation of finite-depth
  standing water waves}

\titlerunning{Finite-depth standing water waves}

\author{Ahmad Abassi \and Jon Wilkening}

\institute{A. Abassi \at
  Department of Mathematics, University of California, Berkeley, CA 94720-3840\\
  \email{zaid\_abassi@berkeley.edu} \and
  J. Wilkening \at
  Department of Mathematics, University of California, Berkeley, CA 94720-3840\\
  \email{wilkening@berkeley.edu}
}

\date{\today}

\maketitle

\begin{abstract}
We propose a Stokes expansion ansatz for finite-depth standing water
waves in two dimensions and devise a recursive algorithm to compute
the expansion coefficients. We implement the algorithm on a
supercomputer using arbitrary-precision arithmetic.  The Stokes
expansion introduces hyperbolic terms that require exponentiation of
power series, which we handle efficiently using Bell polynomials.
Although exact resonances occur at a countable dense set of fluid
depths, we prove that for almost every depth, the divisors that arise
in the recurrence are bounded away from zero by a slowly decaying
function of the wave number.  A direct connection between small
divisors and imperfect bifurcations is observed. They are found to
activate secondary standing waves that oscillate non-uniformly in
space and time on top of the primary wave, with different amplitudes
and phases on each bifurcation branch.  We compute new families of
standing waves using a shooting method and find that Pad\'e
approximants of the Stokes expansion continue to converge to the
shooting method solutions at large amplitudes as new small divisors
enter the recurrence.  Closely spaced poles and zeros of the Pad\'e
approximants are observed, which suggests that the bifurcation branches
are separated by branch cuts.
\end{abstract}

\keywords{standing water waves \and finite depth \and Stokes
  expansion \and conformal map \and bifurcation \and Pad\'e approximation}

\vspace*{0.6pc}
\noindent
\textbf{MSC Classification}\, 76B15, 35C20, 37G15, 65N22, 65N35, 68W10

\thispagestyle{plain}
\fancyfoot[C]{\fontsize{10pt}{10pt}\selectfont\thepage}

\section{Introduction}\label{sec:intro}

Standing water waves have a long scientific history dating back at
least to 1831, when Faraday observed beautiful ink patterns at the
surface of milk driven by a tuning force. Standing waves in the ocean
are responsible for microseisms
\cite{higgins:50,ardhuin} and can play an important role in the
dynamics of wave breaking \cite{mcallister:24}. Their resonances must
be accounted for in the design of oscillating wave energy converters
\cite{renzi:12} and breakwaters \cite{peregrine:03} to maximize
efficiency and minimize violent impacts during storms. Two-dimensional
standing waves can be viewed as symmetric, time-periodic solutions of
the free-surface Euler equations in an enclosed container
\cite{penney:52} or as a superposition of identical
counter-propagating spatially periodic traveling waves
\cite{bridges2004,waterTS}.  Low-order perturbation expansion
techniques for standing waves were developed by Penney \&
Price~\cite{penney:52}, Tadjbakhsh \& Keller~\cite{tadjbakhsh}
(in the finite-depth case), Concus~\cite{concus:62,concus:64}
(considering the effects of surface tension), and Verma \& Keller
\cite{verma62} (considering standing waves in rectangular
  three-dimensional containers). Roberts \cite{roberts:83} and
Marchant \& Roberts \cite{marchant:87} carried out high-order
perturbation expansions numerically to study short-crested waves,
which include standing waves as a special case.

For the infinite-depth case with zero surface tension, Schwartz \&
Whitney \cite{schwartz1981semi} proposed an arbitrary-order
semi-analytic theory of standing waves using a conformal mapping
formulation of the equations. They computed their expansions to
twenty-ninth
order in quadruple-precision floating-point arithmetic. Amick \&
Toland \cite{amick1987semi} devised an ingenious implicit function
theorem argument to prove the `Schwartz and Whitney conjecture' that
their algorithm does not break down. They showed that although there
are infinitely many exact resonances, each resonant equation is
solvable the first time it enters the system, and the free parameter
associated with each resonance is uniquely determined by the
solvability condition at the next higher order. Iooss presented an
alternative proof of the Schwartz and Whitney conjecture based on
normal forms \cite{iooss:99} and generalized the results to the case
of several dominant modes.

An important open question that was not resolved by Amick and Toland
is whether the coefficients in these formal asymptotic expansions grow
slowly enough that the resulting series has a positive radius of
convergence.  In the present paper, we generalize the Schwartz and
Whitney algorithm to handle standing waves of finite depth and explore
the growth rates of the coefficients of the Stokes expansion through
numerical computation. In \cite{abassi:semi2}, we will show how to
include the effects of surface tension in infinite depth.  Each step
of these recursive algorithms involves computing forcing terms that
arise from lower-order terms in the expansion and dividing them by the
numbers
\begin{equation}\label{eq:lam:lamcap}
  \lambda_{p,j} = p\frac{\tanh(p\mu_0)}{\tanh\mu_0}-j^2 \qquad\quad
  \text{or} \qquad\quad
  \lambda^\text{cap}_{p,j} = \frac{1+Bp^2}{1+B}p-j^2,
\end{equation}
where the first formula is for the finite depth case while the second
is for gravity-capillary waves in infinite depth \cite{abassi:semi2}.
Here $\mu_0$ is a dimensionless fluid depth parameter,
$B=\frac{4\pi^2\tau}{\rho g L^2}$ is the (inverse) Bond number
\cite{shelton:stand}, $\tau$ is the surface tension, $\rho$ is the
fluid density, $L$ is the wavelength, $g$ is the acceleration of
gravity, and the integers $p$ and $j$ are the wave number and angular
frequency of the mode being computed (after non-dimensionalization).

In infinite depth with zero surface tension ($B=0$), every pair
$(p,j)$ with $p=j^2$ leads to a zero divisor that has to be treated
specially \cite{schwartz1981semi,amick1987semi,abassi:semi2}. Zero
divisors also arise at specific finite depths. Physically,
$\lambda_{p,j}=0$ means that within linear water wave theory, the
frequency of the $p$th spatial harmonic is an integer multiple, $j$,
of the fundamental frequency.  An interesting feature of standing
waves is that these resonant depths form a countable dense subset of
the positive real numbers \cite{concus:64}.  We prove that
there are no depths where the divisors $\lambda_{p,j}$ in equation
\eqref{eq:lam:lamcap} are uniformly bounded away from zero, but for
every $\delta>0$ and almost every fluid depth (in the Lebesgue sense),
there is an $a>0$ such that
$|\lambda_{p,j}|\ge\min\big(a,p^{-1/2-\delta}\big)$ for all $p\ge2$
and $j\in\mbb Z$. While this lower bound presumably does not ensure a
positive radius of convergence, it limits the growth rate of the
Stokes expansion coefficients sufficiently that Pad\'e approximants of
the Stokes expansion appear to be convergent at large amplitudes in
our numerical experiments. In the electronic supplementary material,
  we use a result from elliptic curve theory \cite{tate} to show
  that the density of resonant bond numbers does not imply that
  the divisors $\lambda_{p,j}^\text{cap}$ become arbitrarily small
  for every choice of $B$.

For both travelling waves \cite{chen:79,schwartz:79,chen:80,roberts:81}
and standing waves
\cite{vandenBroeck:84,mercer:94,smith:roberts:99,okamura:99,water2,rycroft:13},
harmonic resonance leads to non-uniqueness.  Combination waves
\cite{chen:79} with multiple dominant modes co-exist with pure waves
of one dominant mode, and there are often perfect or imperfect
bifurcations connecting the various families. The resulting branching
behavior of standing waves near resonant depths has been studied
extensively by Mercer \& Roberts \cite{mercer:94}, Smith \& Roberts
\cite{smith:roberts:99}, and Wilkening \& Yu \cite{water2}. In the
present work, we investigate the role of small divisors in the
formation of these bifurcation branches. We observe sudden changes in
the growth rate of the Stokes coefficients when especially small
divisors $\lambda_{p,j}$ enter the recursion. We investigate this in
detail for several fluid depths $\mu_0$.  For $\mu_0=1$, there is a
cluster of three small divisors that each yields an imperfect bifurcation
in the solution curve computed using a shooting method
\cite{mercer:92,water2}. Following the side branches associated with
the $(p,j)$ small divisor leads to visible secondary `standing waves
on standing waves' with $p$ spatial oscillations that execute $j$
temporal oscillations over one cycle of the primary wave. Similar
secondary waves have been reported in various settings
\cite{mercer:94,smith:roberts:99,okamura:99,water2,shelton:stand},
including standing waves in three-dimensional fluids
\cite{rycroft:13}. Solutions on the side branches differ in how the
amplitude and phase of the secondary wave matches up with the phase of
the primary wave. We explore the effects of nonlinearity on the shapes
of the secondary waves, which deviate from the sinusoidal patterns of
linear water wave theory that led to the small divisors.

In a model problem, Roberts \cite{roberts:81} showed that a nonlinear
Shanks transform can extend the validity of non-resonant asymptotic
expansions across discontinuities in the bifurcation curves associated
with nearby harmonic resonances. We adopt this strategy and study the
convergence of Pad\'e approximants of the Stokes expansions, which
continue to improve in accuracy (relative to the shooting method) as
more terms are included in their continued fraction representation,
even at large amplitudes where successive terms in the Stokes
expansion diverge wildly.  Poles in the Pad\'e approximation allow for
accurate branch jumping. We achieve errors between $10^{-32}$ and
$10^{-27}$ on both sides of the first two imperfect bifurcations we
observe in the $\mu_0=1$ case.  We use the poles to locate new
bifurcation branches and present a new method of identifying which
harmonic resonance is most strongly activated on each branch.  We
often find multiple Pad\'e poles in gaps between turning points
\cite{smith:roberts:99} of the wave height. This suggests that the
turning points are branch points and the poles on the branch cut act
as a quadrature formula to approximate a Cauchy integral with the same
branch point singularity structure at its endpoints
\cite{allen:75,stahl:97,yamada:14}.

\section{Preliminaries}
\label{FDSecConformalMap}

In this section we introduce the conformal map used to represent the
fluid motion in finite depth, non-dimensionalize the partial
differential equations governing water waves, propose an ansatz for a
Stokes expansion of the solution in powers of an amplitude parameter,
derive the governing equations of the spatial Fourier modes of the
solution, and show how to use Bell polynomials to efficiently
re-expand the hyperbolic sine or cosine of a power series.

\subsection{The conformal map and governing equations}

We consider standing waves on an inviscid, irrotational,
incompressible two-dimensional fluid of finite depth. We denote the
velocity potential in physical space by $\phi(x,y,t)$, where the fluid
velocity satisfies $\mb u=\nabla\phi$.  We identify $\mbb R^2$ with
the complex plane and parameterize the free surface and surface
velocity potential by
\begin{equation}\label{eq:eta:phi:start}
  \zeta(\alpha,t) = \xi(\alpha,t) + i\eta(\alpha,t) \quad \text{and} \quad
  \varphi(\alpha,t) = \phi\big(\xi(\alpha,t),\eta(\alpha,t),t\big).
\end{equation}
The kinematic condition and dynamic Bernoulli equation governing
the time evolution of the free surface are
\begin{equation}\label{eq:evol1}
  \begin{aligned}
    &\zeta_t\cdot\mb n = \mb u\cdot\mb n = \der\phi n, \\
    &\phi_t=  -\frac12|\nabla\phi|^2-g\eta+C(t),
  \end{aligned}
\end{equation}
where $\mb n$ is the outward normal to the free surface, $g$ is the
acceleration of gravity, the subscript $t$ is a partial derivative,
and $C(t)$ is an arbitrary function of time but not space, which
accounts for the fact that only gradients of the velocity potential
have physical significance. This term can be set to zero, but we find
that it is useful to retain it in the finite-depth problem. Here we
neglect the effects of surface tension, which would introduce a
curvature term in the Bernoulli equation; see \cite{abassi:semi2}. The
governing equation for the surface velocity potential is obtained from
the Bernoulli equation using
\begin{equation}\label{eq:convect}
  \varphi_t = \phi_t + \nabla\phi\cdot\zeta_t,
\end{equation}
where $\pa\phi/\pa n$ is computed from $\varphi$ by applying the
Dirichlet-Neumann operator \cite{craig:sulem}.

\begin{figure}[t]
  \begin{center}
    \includegraphics[scale=.62]{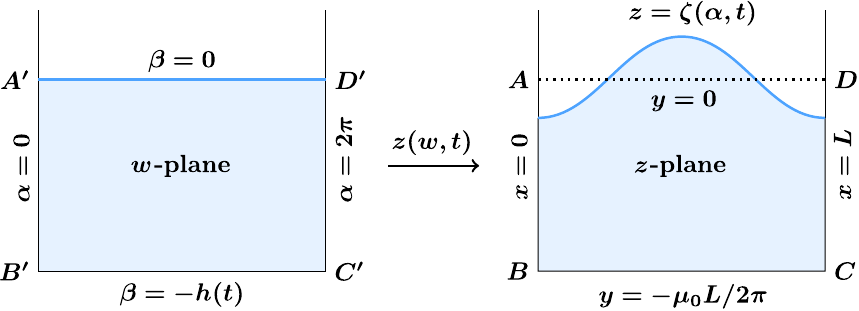}
  \end{center}
  \caption{The conformal map transforms conformal space (left)
    to physical space (right). The dotted line on the right
    illustrates a mean free-surface height of zero in physical space.}
  \label{FDConformalMapImage}
\end{figure}

We assume the standing wave and fluid velocity are spatially periodic
with period $L$ in physical space. In the infinite-depth case,
Schwartz \& Whitney \cite{schwartz1981semi}, Amick \& Toland
\cite{amick1987semi}, and Dyachenko et al
\cite{dyachenko1996nonlinear} introduce a conformal map to pull back
the fluid domain to the complex lower half-plane. We follow the same
plan, but since the fluid depth is finite, the pre-image of the
conformal map is a strip rather than a half-plane.  As illustrated in
figure~\ref{FDConformalMapImage}, we introduce
\begin{equation}\label{eq:w:alpha:beta}
  w = \alpha+i\beta
\end{equation}
as the spatial variable in conformal space and let
\begin{equation}
  z(w,t) = x(w,t) + iy(w,t) \quad \text{and} \quad \zeta(\alpha,t)=z(\alpha,t)
\end{equation}
denote the conformal map and its restriction to the real axis, which
is mapped to the free surface of the fluid.  We choose the period of
the conformal domain to be fixed at $2\pi$ and denote the lower
boundary of the strip by $\beta=-h(t)$, which evolves in time
\cite{ruban:2005,turner2016time,quasi:finite}.  We denote the
conjugate harmonic function to $\phi$ in physical space by $\psi$,
which is the stream function, and define the complex velocity
potential in conformal space by
\begin{equation}\label{eq:Phi:def}
  \Phi(w,t) = \phi\big(x(w,t),y(w,t),t\big) + i\psi\big(x(w,t),y(w,t),t\big).
\end{equation}
The Cauchy-Riemann equations give $\phi_y=-\psi_x$, so
\begin{equation}\label{eq:udotn}
  \mb u\cdot\mb n = (\phi_x,-\psi_x)\cdot\frac{(-\eta_\alpha,\xi_\alpha)}{s_\alpha} =
  -\frac{\im\{\Phi_w\}}{s_\alpha},
\end{equation}
where $s_\alpha=|\zeta_\alpha(\alpha,t)|$ is the arclength element of
the parameterization. From equations \eqref{eq:evol1}, \eqref{eq:convect},
\eqref{eq:Phi:def} and \eqref{eq:udotn}, we obtain
\begin{equation}\label{eq:gov:conf}
  \begin{aligned}
    \xi_\alpha\eta_t - \eta_\alpha\xi_t = -\im\{\Phi_w\}, \\
    \pa_t\re\{\Phi\} - \re\left\{\frac{\Phi_w}{z_w}\zeta_t\right\} +
    \frac12\left|\frac{\Phi_w}{z_w}\right|^2 + g\eta = C(t).
  \end{aligned}
\end{equation}
As shown in \cite{quasi:finite}, it follows from Cauchy's theorem and
the fact that $z_t/z_w$ is an analytic function in the strip
$-h(t)<\beta<0$ that
\begin{equation}\label{eq:ht}
  h_t = -\frac1{2\pi}\int_0^{2\pi} \frac{\im\{\Phi_w(\alpha,t)\}}{s_\alpha^2}\,d\alpha.
\end{equation}
In the construction of this paper, it is not necessary to impose equation
\eqref{eq:ht} explicitly since it follows from equation \eqref{eq:gov:conf} and
Cauchy's theorem.

We are searching for standing water waves, which are symmetric
time-periodic solutions of equation \eqref{eq:gov:conf}. Let $T$ denote the
temporal period. Following Schwartz \& Whitney \cite{schwartz1981semi},
we non-dimensionalize the variables via
\begin{equation}\label{eq:ZFW:def}
  Z(w,t) = \frac{2\pi}L z\left(w,\frac T{2\pi}t\right), \qquad
  F(w,t) = \frac{2\pi T}{L^2}\Phi\left( w,\frac T{2\pi}t\right), \qquad
  S = \frac{gT^2}{2\pi L}
\end{equation}
and introduce an auxiliary function, $W(w,t)=F_w/Z_w$, which, when
conjugated, is a dimensionless velocity pulled back from physical to
conformal space,
\begin{equation}
  \overline W(w,t) = \frac TL \mb u\big(x(w,t),y(w,t),t\big).
\end{equation}
Expressing equation \eqref{eq:gov:conf} in terms of the dimensionless variables
gives
\begin{subequations}\label{eq:aux:kin:ber}
  \begin{alignat}{3}
    \label{fdCRE}
    -F_w + WZ_w &= 0, & \; \; \; \; &\text{} & \; \; \; \big({-}h(t) \leq \beta 
      & \leq 
      0 \big)
    \text{,} \\
    \label{fdKE}
    \im{ \lp F_{\alpha}-Z_{\alpha} \overline{Z}_t \rp } &= 0, & \; \; \; 
    \; 
    &\text{} & \; \; \; 
    \big(\beta&=0 \big)
    \text{,} \\
    \label{fdBE}
    \re \lp F_t+\frac{1}{2} W \overline{W}-iSZ-WZ_t \rp &= C(t), & \; \; 
    \;\; 
    &\text{} & \; 
    \;\; 
    \big(\beta&=0 \big) \text{.}
  \end{alignat}
\end{subequations}

\subsection{The Stokes expansion ansatz} \label{FDSecAnsatzStokes}

Building on the infinite-depth conformal mapping framework of Schwartz
\& Whitney \cite{schwartz1981semi,amick1987semi} and finite-depth
graph-based formulations \cite{tadjbakhsh,marchant:87,okamura:99}, we
propose the following ansatz for the Fourier representations of $Z$,
$W$, and $F$:
\begin{subequations}\label{ansatz}
  \begin{align}
    \label{ansatzZ}
    Z(w,t) &= w +ih(t) -i\mu_0 + \sum_{p=1}^{\infty} 
    a_p(t)\frac{\sin[p(w+ih(t))]}{\cosh(p\mu_0)}, \\
    \label{ansatzW}
    W(w,t) &= \sum_{p=1}^{\infty} 
    b_p(t)\frac{\sin[p(w+ih(t))]}{\cosh(p\mu_0)}, \\
    \label{ansatzF}
    F(w,t) &= \sum_{p=0}^{\infty} 
    c_p(t)\frac{\cos[p(w+ih(t))]}{\cosh(p\mu_0)}.
  \end{align}
\end{subequations}
On the bottom boundary, where $w=\alpha-ih(t)$, we have
\begin{equation}
  \frac{2\pi}L\im\{z(w,t)\} = \im\{Z(w,t)\}=-\mu_0, \quad\;
  \im\{W(w,t)\}=0, \quad\;
  \im\{F(w,t)\}=0.
\end{equation}
This shows that $\mu_0L/2\pi$ is the fluid depth in physical space,
that the vertical component of velocity is zero on the bottom
boundary, and that the stream function is constant (in fact zero) on
the bottom boundary, indicating that there is no fluid flux crossing
this boundary.

We employ identical $\epsilon$-expansions to those of Schwartz \&
Whitney \cite{schwartz1981semi,amick1987semi} for the coefficients
$a_p(t)$, $b_p(t)$ and $c_p(t)$, and for the dimensionless period
parameter $S$.  However, in the finite-depth setting we must also
expand the strip width $h(t)$ in terms of the standing wave amplitude:
\begin{subequations} \label{Stokes}
  \begin{gather}
    \label{stokesA}
    a_p(t) = \sum_{n=0}^{\infty} \alpha_{p,n}(t) \epsilon^{p+2n}, \quad
    b_p(t) = \sum_{n=0}^{\infty} \beta_{p,n}(t) \epsilon^{p+2n}, \quad
    c_p(t) = \sum_{n=0}^{\infty} \gamma_{p,n}(t) \epsilon^{p+2n}, \\
    \label{stokesS}
    S = \sum_{n=0}^{\infty} \sigma_{n} \epsilon^{2n}, \qquad
    h(t) = \mu_0 + \sum_{n=1}^{\infty} \mu_{n}(t) \epsilon^{2n}.
  \end{gather}
\end{subequations}
Here $\epsilon$ is the standing wave amplitude; the functions
$\alpha_{p,n}(t)$, $\beta_{p,n}(t)$, $\gamma_{p,n}(t)$, and
$\mu_{n}(t)$ are real-valued; the coefficients $\sigma_n$ are real
constants; and the coefficient $\mu_0$ is a positive constant, the
non-dimensionalized fluid depth in physical space (see figure
  \ref{FDConformalMapImage}).  As in the infinite-depth problem,
the amplitude is defined as half the vertical crest to trough height
after non-dimensionalization, i.e.,
\begin{equation}\label{eq:eps:def}
  \epsilon = \frac{1}{2} \im \Big( Z(0,0)-Z(\pi,0) \Big) =
  \sum_{m=0}^{\infty} a_{2m+1}(0)\frac{\sinh[(2m+1) 
		h(0)]}{\cosh[(2m+1)\mu_0]},
\end{equation} 
where the right-hand side must still be expanded in powers of
$\epsilon$ with the coefficient of the linear term evaluating to 1 and
the others evaluating to 0.

On the free surface, the sine and cosine terms in equation \eqref{ansatz} may
be written
\begin{equation}\label{eq:trig:idents}
  \begin{aligned}
    \sin[p(\alpha+ih)] &= \cosh(ph)\sin(p\alpha) + i\sinh(ph)\cos(p\alpha), \\
    \cos[p(\alpha+ih)] &= \cosh(ph)\cos(p\alpha) - i\sinh(ph)\sin(p\alpha),
  \end{aligned}
\end{equation}
so the spatial Fourier coefficients of the real and imaginary parts of
$Z(\alpha,t)$, $W(\alpha,t)$ and $F(\alpha,t)$ involve products
of the form
\begin{equation}\label{eq:re:expand}
  \frac{u_p(t)\cosh[ph(t)]}{\cosh(p\mu_0)} \qquad \text{ or } \qquad
  \frac{u_p(t)\sinh[ph(t)]}{\cosh(p\mu_0)},
\end{equation}
where $u_p(t)$ represents $a_p(t)$, $b_p(t)$ or $c_p(t)$.  One
advantage of the conformal mapping approach over previous graph-based
formulations \cite{penney:52,tadjbakhsh,marchant:87,okamura:99} is
that the argument $ph(t)$ of the hyperbolic functions in equation
\eqref{eq:re:expand} is an $\epsilon$-expansion with terms depending
only on time (and not also space); see \S\ref{FDBellPolynomials}
below.  An advantage of the graph-based formulation is that it extends
to three-dimensional short-crested waves \cite{marchant:87}, covering
standing waves as a special case.

\subsection{Time-evolution of the spatial Fourier modes}
\label{sec:time:evol:fourier}

Substitution of the ansatz \eqref{ansatz} in the auxiliary
equation \eqref{fdCRE} gives
\begin{equation} \label{CRAnsatzed}
  \begin{split}
    \frac{(pc_p+b_p)}{\cosh(p\mu_0)}  &+ \sum_{k=1}^{p-1} 
    \frac{k a_{k} b_{p-k}}{2\cosh(k\mu_0)\cosh\!\big[(p-k)\mu_0\big]}
    \\ & \quad + \sum_{k=1}^{\infty} 
    \frac{k a_kb_{p+k}-(k+p)a_{k+p}b_k}{2\cosh(k\mu_0)\cosh\!\big[(p+k)\mu_0\big]}
    =0,
  \end{split} \qquad
  \Big( p\in\mbb N \Big).
\end{equation}
Similarly, the kinematic free-surface equation \eqref{fdKE} gives
\begin{subequations}
  \begin{equation}	\label{fdKFSp0}
    \begin{aligned} 
      \dot{h} + \sum_{k=1}^{\infty} 
      ka_k\dot{a}_k\frac{\sinh(2kh)}{2\cosh^2(k \mu_0)} +\dot{h} 
      \sum_{k=1}^{\infty} k^2a^2_k\frac{\cosh(2kh)}{2\cosh^2(k\mu_0)} = 0
    \end{aligned}
  \end{equation}
  and
  \begin{equation}	\label{fdKFSp1}
    \begin{aligned}
      & (\dot{a}_p - pc_p )\frac{\sinh(ph)}{\cosh(p\mu_0)}
      + 2\dot{h} p a_p\frac{  \cosh(ph)}{\cosh(p\mu_0)}
      - \sum_{k=1}^{p-1}  \frac{ (p-k) a_{p-k} \dot{a}_k\,
	\sinh\!\big[(p-2k)h\big]
      }{2\cosh\!\big[(p-k)\mu_0\big]\cosh(k\mu_0)}
      \\ & \;\; + \sum_{k=1}^{\infty} 
      \Big(\lp p+k\rp 
        a_{p+k}\dot{a}_k+ka_k\dot{a}_{p+k}\Big)\frac{\sinh\!\big[(p+2k)h\big]}{2\cosh(k\mu_0)
        \cosh\!\big[(p+k)\mu_0\big]}
      \\ & \;\; + \dot{h} \sum_{k=1}^{\infty} 
      \Big((p+k)a_{p+k}ka_k+ka_k(p+k)a_{p+k}\Big)
      \frac{\cosh\!\big[(p+2k)h\big]}{2\cosh(k\mu_0)\cosh\!\big[(p+k)\mu_0\big]}
      \\ & \;\; +\dot{h} \sum_{k=1}^{p-1} 
	 (p-k)a_{p-k}ka_k\frac{\cosh\!\big[(p-2k)h\big]}{2\cosh\!\big[(p-k)\mu_0\big]\cosh(k\mu_0)} = 0,
         \qquad \Big(p\in\mbb N\Big),
    \end{aligned}
  \end{equation}
\end{subequations}
where a dot represents a time-derivative.  In the Bernoulli equation
\eqref{fdBE}, we choose the integration constant $C(t)$ so that
$\dot{c}_0=0$, which allows us to set $c_0(t)=0$ and commence the
series for $F$ in equation (\ref{ansatzF}) from $p=1$. This leads to
\begin{equation} \label{BernoulliAnsatzed}
  \begin{aligned}
    & \dot{c}_p\frac{\cosh(ph)}{\cosh(p\mu_0)}
    + S a_p\frac{ \sinh(ph)}{\cosh(p\mu_0)} 
    + \sum_{k=1}^{\infty} \frac{ b_k b_{p+k}\,
      \cosh\!\big[(p+2k)h\big] 
      }{2\cosh(k\mu_0)\cosh\!\big[(p+k)\mu_0\big]} \\
    & \quad - \sum_{k=1}^{p-1}
    \frac{ b_{p-k}b_k\, \cosh\!\big[(p-2k)h\big] 
      }{4\cosh\!\big[(p-k)\mu_0\big]\cosh(k\mu_0)}
    -\sum_{k=1}^{\infty} \frac{b_{p+k} \dot{a}_k+b_k\dot{a}_{p+k} 
    }{2\cosh(k\mu_0)\cosh\!\big[(p+k)\mu_0\big]} \cosh(ph)
    \\ & \quad + \sum_{k=1}^{p-1} \frac{b_{p-k} \dot{a}_k 
    }{2\cosh\!\big[(p-k)\mu_0\big]\cosh(k\mu_0)}\cosh(ph) = 0, \qquad \Big(p\in\mbb N\Big).
  \end{aligned}
\end{equation}
The $p=0$ term in the ansatz $Z(w,t)=w+i\sum_{p=0}^\infty a_p(t)
e^{-ipw}$ for the infinite-depth case
\cite{amick1987semi,abassi:semi2} has been replaced by $[h(t)-\mu_0]$
in equation \eqref{ansatzZ}. We only need to solve for $a_p(t)$, $b_p(t)$ and
$c_p(t)$ with $p\ge1$ since we set $c_0(t)=0$ above and $b_0(t)$ is
absent in the ansatz \eqref{ansatzW} due to $\sin(0)=0$.

\subsection{Bell polynomials and the exponential of a power series}	
\label{FDBellPolynomials}

In the equations of the previous section, there appear terms involving
the hyperbolic sine and cosine of integer multiples of the strip
width, $h(t)$, which has a Stokes expansion in powers of
$\epsilon$. An efficient formula \cite{bell1934exponential} to
re-expand the exponential of a power series is given by
\begin{equation} \label{BellExponentialPowerSeries}
	\exp \lp \sum_{k=0}^{\infty} a_k x^k \rp = 
	e^{a_0}\sum_{n=0}^{\infty} \frac{B_n(a_1 1!,\ldots,a_n n!)}{n!} 
	x^n,
\end{equation}
where the complete Bell polynomials $B_n(x_1,\dots,x_n)$ are defined
recursively by
\begin{equation} \label{completeBellPolynomials}
  B_0=1, \qquad B_{n+1}(x_1, \ldots,x_{n+1}) = \sum_{i=0}^{n} \binom{n}{i}
  B_{n-i} (x_1,\ldots,x_{n-i}) x_{i+1}, \quad (n\ge0).
\end{equation}
For our specific setting, we need $\cosh(qh)$ and $\sinh(qh)$
for various integers $q \in \mbb{Z}$, so we expand
\begin{equation}\label{eq:cqh:sqh}
  \begin{alignedat}{2}
  \exp(qh) &= \sum_{n=0}^\infty B_{q,n}(t)\epsilon^{2n},\qquad &
  &B_{q,n}(t)=\frac{e^{q\mu_0}}{n!}B_n\big( q\mu_1(t)1!,\dots,q\mu_n(t)n!\big), \\
  \cosh(qh) &= \sum_{n=0}^{\infty} c_{q,n}(t) \epsilon ^{2n}, \qquad &
  &\sinh(qh) = \sum_{n=0}^{\infty} s_{q,n}(t) \epsilon ^{2n},
  \end{alignedat}
\end{equation}
where
\begin{equation}\label{eq:cqn:sqn:bell}
  c_{q,n}(t) = \frac{B_{q,n}(t)+B_{-q,n}(t)}{2}, \qquad
  s_{q,n}(t) = \frac{B_{q,n}(t)-B_{-q,n}(t)}{2}.
\end{equation}
Using equation \eqref{completeBellPolynomials}, we obtain
\begin{equation}\label{eq:Bqn:recur}
  B_{q,0}(t) = e^{q\mu_0}, \qquad
  B_{q,n}(t) = q\sum_{i=1}^n \frac{i}{n}\,B_{q,n-i}(t)\,\mu_i(t), \quad (n\ge1).
\end{equation}
Roberts \cite{roberts:83} and Marchant \& Roberts \cite{marchant:87}
derived an identical recurrence from first principles (without
  employing Bell polynomials) in a graph-based formulation of the
short-crested wave problem. In the special case of standing waves in
this graph-based approach, one has to evaluate factors of $e^{qy}$
(infinite-depth) or $\cosh[q(\mu_0+y)]$ (finite-depth) in the velocity
potential expansion at $y=\eta^\text{graph}(x,t)=
\sum_{\nu\ge1}\eta^\text{graph}_\nu(x,t)\,\epsilon^\nu$.  Replacing
$n$ by $\nu$ and $\mu_i(t)$ by $\eta^\text{graph}_i(x,t)$ in equation
\eqref{eq:Bqn:recur} and calling the result $\wtil B_{q,\nu}(x,t)$
leads to a function that depends on both $x$ and $t$.  $B_{q,n}(t)$ is
represented by $O(n)$ temporal Fourier coefficients while $\wtil
B_{q,\nu}(x,t)$ contains $O(\nu^2)$ non-zero Fourier modes in both
time and space. Only even powers of $\epsilon$ are present in equation
\eqref{eq:cqh:sqh}, so the sum \eqref{eq:Bqn:recur} contains half as
many terms as the corresponding sum for $\wtil B_{q,\nu}(x,t)$ at a
given order $\epsilon^\nu$ with $\nu=2n$. This reduces the memory and
computational costs of the data structures required to re-expand the
hyperbolic functions in the conformal mapping approach of the present
work.

\section{ODEs for the Stokes coefficients and a recursive algorithm}
\label{FDSecSolutionAlgorithm}

Substitution of the Stokes expansions \eqref{Stokes} into the
equations \eqref{CRAnsatzed}--\eqref{BernoulliAnsatzed} governing the
time-evolution of the spatial Fourier modes yields a system of ODEs
for the Stokes coefficients,
\begin{gather}
  \dot{\mu}_{n} + T^1_{0,n} = 0,	\tag{I} \label{eq:I} \\
  \beta_{p,n}+p\gamma_{p,n} + T^2_{p,n} = 0,	\tag{II} \label{eq:II} \\
  \dot{\alpha}_{p,n} - p\gamma_{p,n} + T^3_{p,n} = 0,	\tag{III} \label{eq:III} \\
  \dot{\gamma}_{p,n} + \sigma_0\tanh(p\mu_0) \alpha_{p,n} + T^4_{p,n} = 
  0,
  \tag{IV} \label{eq:IV}
\end{gather}
where $p\ge1$ and $n\ge0$.  Formulas for the forcing terms $T^r_{p,n}$
are derived in the electronic supplementary material.  We require that
solutions of this system have certain symmetries and functional forms,
namely, that $\mu_n(t)$ and $\alpha_{p,n}(t)$ are even trigonometric
polynomials of the form
\begin{alignat}{2}
  \label{fdFourierSeriesMu}
  \mu_n(t) &= \sum_{j\in E_{2n}} \mu_{n,j}e^{ijt} =
  \sideset{}{'}\sum_{j=0}^{2n} \mu_{n,j} H_j \lp e^{ijt} + e^{-ijt} \rp, & \qquad
  &\big(\,\mu_{n,j}\in\mbb R \, \big), \\
  \label{fdFourierSeriesAlpha}
  \alpha_{p,n}(t) &= \sum_{j\in E_{p+2n}} \alpha_{p,n,j}e^{ijt} =
  \sideset{}{'}\sum_{j=0}^{p+2n} \alpha_{p,n,j} H_j \lp e^{ijt} + e^{-ijt} \rp, & \qquad
  &\big( \, \alpha_{p,n,j} \in \mbb R \, \big),
\end{alignat}
where $\mu_{n,-j}=\mu_{n,j}$,\, $\alpha_{p,n,-j}=\alpha_{p,n,j}$,
\begin{equation} \label{eq:Enu:Hj:def}
  \begin{aligned}
    E_\nu &= \big\{\, \nu-2m\;\,\big\vert\;\,0\le m\le\nu \,\big\} \\[-2pt]
    &= \big\{\,{-}\nu,\,-\nu+2,\,\dots,\,\nu-2,\,\nu \,\big\},
    \end{aligned} \qquad\quad
  H_j = \begin{cases} 1/2, & j=0, \\ 1, & j\ge1, \end{cases}
\end{equation}
and a prime on a sum indicates that terms in the given range should be
included only if the summation index has the same parity as the upper
limit. Moreover, $\beta_{p,n}(t)$ and $\gamma_{p,n}(t)$ are odd
trigonometric polynomials of the form
\begin{alignat}{2}
  \label{fdFourierSeriesBeta}
  \beta_{p,n}(t) &= \sum_{j\in E_{p+2n}} i\beta_{p,n,j}e^{ijt} =
  i \sideset{}{'}\sum_{j=1}^{p+2n} \beta_{p,n,j} \lp e^{ijt} - e^{-ijt} \rp, & \qquad
  & \big( \, \beta_{p,n,j} \in \mbb R \, \big), \\
  \label{fdFourierSeriesGamma}
  \gamma_{p,n}(t) &= \sum_{j\in E_{p+2n}} i\gamma_{p,n,j}e^{ijt} =
  i \sideset{}{'}\sum_{j=1}^{p+2n} \gamma_{p,n,j} \lp e^{ijt} - e^{-ijt} \rp, &\qquad
  &\big( \, \gamma_{p,n,j} \in \mbb R \, \big),
\end{alignat}
where $\beta_{p,n,0}=\gamma_{p,n,0}=0$,\,
$\beta_{p,n,-j}=-\beta_{p,n,j}$,\, and\,
$\gamma_{p,n,-j}=-\gamma_{p,n,j}$.  The symmetry assumptions
\eqref{fdFourierSeriesMu}, \eqref{fdFourierSeriesAlpha},
\eqref{fdFourierSeriesBeta} and \eqref{fdFourierSeriesGamma} mostly
take the place of initial conditions for the ODEs, but we also need to
impose
\vspace*{-10pt}
\begin{gather}
	\alpha_{1,0}(0)=\coth(\mu_0), \qquad
	\sum_{q=0}^{n}\sum_{k=0}^{n-q} 
	\frac{\alpha_{2q+1,k}(0)s_{2q+1,n-q-k}(0)}{\cosh\!\big[(2q+1)\mu_0\big]}
	= 0, \quad \big(n \in \mbb{N}\big),
	\tag{i} \label{eq:i} \\
	\mu_n(0) + \sum_{q=1}^{n} \sum_{k=0}^{n-q} \sum_{l=0}^{n-q-k} 
	\frac{q}{4\cosh^2(q\mu_0)} \alpha_{q,k}(0)\alpha_{q,l}(0) s_{2q,n-q-k-l}(0) 
	= 0, \quad \big(n \in \mbb{N} \big).	\tag{ii} \label{eq:ii}
\end{gather}
Here \eqref{eq:i} is a consequence of the amplitude definition
(\ref{eq:eps:def}) and \eqref{eq:ii} ensures that the fluid depth is
independent of $\epsilon$. As shown in
figure~\ref{FDConformalMapImage} above, the bottom boundary is at
$y=-\mu_0L/2\pi$, but we also need to specify the mean free-surface
height. It is easy to show that $(\partial/\partial
  t)\int_0^{2\pi}\eta\xi_\alpha\,d\alpha=0$, so mass is conserved in
time and the mean free-surface height remains zero if it is zero
initially. We obtain \eqref{eq:ii} using equation \eqref{ansatzZ}
together with \eqref{eq:trig:idents} in $\int_0^{2\pi}
\im\{Z(\alpha,0)\}\re\{Z_w(\alpha,0)\}\,d\alpha=0$.

At this stage, following \cite{amick1987semi} for the infinite-depth
case, it is useful to replace \eqref{eq:II} and \eqref{eq:III} by the
equivalent conditions
\begin{align}
  \ddot{\alpha}_{p,n} + p \sigma_0 \tanh(p\mu_0) \alpha_{p,n} + S_{p,n} &= 0, \qquad
  \big( S_{p,n} = \dot{T}^3_{p,n}+pT^4_{p,n} \big),	\tag{II*} \label{eq:II:star} \\
  \beta_{p,n}+\dot{\alpha}_{p,n} + T^2_{p,n} + T^3_{p,n} &= 0.	\tag{III*} \label{eq:III:star}
\end{align}
This allows us to solve \eqref{eq:II:star}, \eqref{eq:III:star} and
\eqref{eq:IV} sequentially to obtain $\alpha_{p,n}$, $\beta_{p,n}$ and
$\gamma_{p,n}$, respectively.  Here we have eliminated $\gamma_{p,n}$
from the equations for $\alpha_{p,n}$ and $\beta_{p,n}$, though
lower-order terms $\gamma_{p,j}$ with $j<n$ appear in the formulas for
the forces $T^3_{p,n}$ and $T^4_{p,n}$.

Our goal in the remainder of this section is to demonstrate the
existence of a solution of \eqref{eq:I}, \eqref{eq:II:star},
\eqref{eq:III:star}, \eqref{eq:IV}, \eqref{eq:i} and \eqref{eq:ii} by
proposing an algorithm in the spirit of \cite{amick1987semi}.  Let
\begin{equation}	\label{fdOmegaPN}
  \Omega_{p,n} = \Big\{(q,m) \;\,\Big\vert\;\, q\ge1, \;
  0 \leq m \leq n, \; q+m \leq p+n, \; (q,m) \neq (p,n) \Big\},
\end{equation}
which are the integer lattice points in the region shown in figure
\ref{FDOmegaSetImage}.  We also define the following sets of functions
and real numbers for $n\in\mbb{N}$:
\begin{equation}	\label{fdMuN}
  \mc M_n = \big\{\mu_0,\,\mu_1(t),\dots,\mu_n(t)\big\}, \qquad
  \Sigma_n = \big\{\sigma_0,\dots,\sigma_n\big\}.
\end{equation}
Each $\mu_k(t)$ is required to be of the form
\eqref{fdFourierSeriesMu}, and $\mu_0$ is a given parameter of the
problem statement, namely, the fluid depth in physical space after
non-dimensionalization.

\begin{figure}[t]
  \begin{center}
    \includegraphics[scale=0.62]{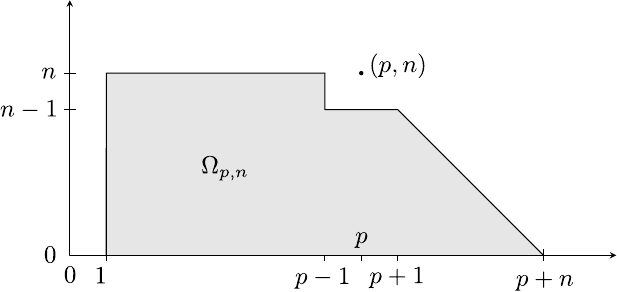}
  \end{center}
  \caption{The set $\Omega_{p,n}$ consists of the integer lattice points 
    in the shaded region, including its boundary.}
  \label{FDOmegaSetImage}
\end{figure}

Similar to the infinite-depth case \cite{amick1987semi}, the forcing
terms $T^1_{0,n}$, $T^2_{p,n}$, $T^3_{p,n}$, and $T^4_{p,n}$ are
functions of $\alpha_{q,m}$, $\beta_{q,m}$, and $\gamma_{q,m}$ for
$(q,m) \in \Omega_{p,n}$. In addition, $T^1_{0,n}$ depends on $\mc
M_{n-1}$; $T^3_{p,n}$ and $T^4_{p,n}$ depend on $\mc M_n$; and
$T^4_{p,n}$ depends on $\Sigma_n$. To say that $\Omega_{p,n}$ is known
means that $\alpha_{q,m}$, $\beta_{q,m}$, and $\gamma_{q,m}$ are known
for all $(q,m) \in \Omega_{p,n}$.  When $n=0$, we have for $p\ge2$
\begin{equation} \label{fdOmegaP0}
	\Omega_{p,0} = \big\{(1,0),\dots,(p-1,0)\big\}.
\end{equation}
For convenience and consistency, we define $\Omega_{0,n} =
\Omega_{1,0}=\emptyset$ for $n\ge0$.  In particular, for
$(p,n)=(1,0)$, the forcing terms $T^r_{1,0}$ are zero for
$r=2,3,4$. This allows us to immediately solve \eqref{eq:II:star},
\eqref{eq:III:star}, \eqref{eq:IV} and \eqref{eq:i} to conclude that
\begin{equation}\label{eq:pn:10}
  \begin{alignedat}{2}
    \sigma_0 &= \coth(\mu_0), & \qquad \beta_{1,0} &= \coth(\mu_0)\sin(t), \\
    \alpha_{1,0} &= \coth(\mu_0)\cos(t), & \qquad \gamma_{1,0} &= -\coth(\mu_0)\sin(t).
  \end{alignedat}
\end{equation}
From this point, $\sigma_0$ will be considered a known constant of the
problem.

\subsection{Non-resonant depths and small divisors}
\label{sec:res:depths}

Equation \eqref{eq:II:star} is a second-order, linear, non-homogeneous
ODE with constant coefficients, much like the analogous equation (IV*)
in \cite{amick1987semi}.  The forcing term $S_{p,n}$ in
\eqref{eq:II:star} will be shown to be an even trigonometric
polynomial of the form $\sum_{j\in E_{p+2n}}S_{p,n,j}e^{ijt}$, where
$E_\nu$ was defined above in equation \eqref{eq:Enu:Hj:def} and
$S_{p,n,j}=pT^4_{p,n,j}-jT^3_{p,n,j}$ for $j\in E_{p+2n}$. Thus,
\eqref{eq:II:star} may be written
\begin{equation}\label{eq:lam:pj:def}
  \lambda_{p,j}\alpha_{p,n,j} = -S_{p,n,j}, \qquad
  \lambda_{p,j} = p\frac{\tanh(p\mu_0)}{\tanh\mu_0} - j^2, \qquad
  \big(j\in E_{p+2n}\big).
\end{equation}
It has a unique solution of the form \eqref{fdFourierSeriesAlpha}
provided that $\lambda_{p,j}\ne0$ for $j\in E_{p+2n}$. Here $p\ge1$,
$n\ge0$, and we can restrict attention to $j\ge0$ since
$\alpha_{p,n,-j}=\alpha_{p,n,j}$.  We always have $\lambda_{1,1}=0$,
which is a special case that determines $\sigma_n$ for $n\ge1$, as
shown in the proof of Lemma~\ref{fdPNInductionStep} below.  Since
$1<\frac{\tanh(p\mu_0)}{\tanh\mu_0}< p$ for $p\ge2$ and
$\mu_0\in(0,\infty)$, any solution of $\lambda_{p,j}=0$ with $p\ge2$
satisfies $\sqrt{p}< j< p$. It follows from $j<p$ that if $j$ and
$p$ have the same parity, then $j\in E_{p+2n}$ for all $n\ge0$. Thus,
$S_{p,n}=\sum_{l\in E_{p+2n}}S_{p,n,l}e^{ilt}$ contains the term
$S_{p,n,j}e^{ijt}$ already when $n=0$.  (This is not true for
  gravity-capillary waves \cite{abassi:semi2}, where
  $\lambda^\text{cap}_{p,j}$ in \eqref{eq:lam:lamcap} can be zero with
  $j>p$.) The Stokes expansion ansatz \eqref{Stokes} is expected to
break down at resonant depths since it would be surprising if a
cancellation caused $S_{p,0,j}=0$ to occur at exactly the same depth
that led to $\lambda_{p,j}=0$.

In the infinite depth case with zero surface tension,
$\lambda_{p,j}=p-j^2$ is zero whenever $p\ge2$ is a perfect square and
$j=\sqrt p$. Nevertheless, imposing compatibility conditions leads to
existence and uniqueness of a formal expansion solution.  This is a
key point and challenge in the work of \cite{amick1987semi}.  In the
finite-depth case, Concus \cite{concus:64} proved that for any real
interval $(a,b)$ with $0<a<b<\infty$, there exists a $\mu_0\in(a,b)$
and integers $p\ge2$, $j\ge1$ such that $\lambda_{p,j}$ in equation
\eqref{eq:lam:pj:def} is zero. His proof is easily adapted to produce
$j$ and $p$ of the same parity. It follows that the resulting resonant
depths are dense in the positive real numbers.  They are enumerated by
$p\ge5$ and $\sqrt p<j<p$ with $p-j$ even since
$\tanh(p\mu_0)/\tanh\mu_0$ decreases monotonically from $p$ to 1 as a
function of $\mu_0\in(0,\infty)$.  Tables containing the first several
resonant depths with this enumeration are given in \cite{marchant:87}.

The complement of this countable dense set of resonant depths has full
Lebesgue measure, and consists of depths $\mu_0$ for which the
recursive algorithm described below will not lead to a division by
zero at any order. For these non-resonant depths, it is desirable to
know how small the $\lambda_{p,j}$ may become and what effect such
small divisors have on the recursive solution.  Let us define
\begin{equation}\label{eq:lam:p:def}
  \lambda_p = \min_{j\in p+2\mbb Z} \big| \lambda_{p,j} \big|.
\end{equation}
When necessary for clarity, we will write $\lambda_p(\mu_0)$ and
$\lambda_{p,j}(\mu_0)$.  The next lemma and theorem show that all
rational depths are non-resonant, and almost every depth $\mu_0$ leads
to a sequence $\{\lambda_p(\mu_0)\}_{p=2}^\infty$ that is bounded
below by a slowly decaying function of $p$.

\begin{lemma} \label{fdTranscendentalNumberTheoryLemma}
  Let $\mu_0$ be a positive algebraic number and let $p\ge2$ be an integer.
  Then $p\frac{\tanh(p\mu_0)}{\tanh\mu_0}$ is transcendental and $\lambda_{p,j}$
  in equation \eqref{eq:lam:pj:def} is non-zero for all integers $j$.
\end{lemma}

\begin{proof}
  Suppose $\mu_0$ and $p$ satisfy the hypotheses and that
  $r=p\frac{\tanh(p\mu_0)}{\tanh\mu_0}=p \frac{e^{2\mu_0}+1}{e^{2\mu_0}-1} 
    \frac{e^{2p\mu_0}-1}{e^{2p\mu_0}+1}$ is an algebraic number.
  After rearranging, we obtain
  \begin{gather*}
    (r-p)e^{2(p+1)\mu_0}-(r+p)e^{2p\mu_0}+(r+p)e^{2\mu_0}+(p-r)e^0=0.
  \end{gather*}
  All four exponents are distinct algebraic numbers. By the
  Lindemann-Weierstrass theorem \cite{baker:tnt}, the coefficients of
  the exponentials are zero, implying that $p=r=-r$, a contradiction
  to $p\ge2$. So $r$ is transcendental and there is no integer $j$
  satisfying $r=j^2$. \hspace*{\fill} \raisebox{-3pt}{\qed}
\end{proof}

\begin{theorem} \label{thm:nonresonant}
  For each $\delta>0$, the set
  \begin{equation}\label{eq:E:delta:def}
    \mc E_\delta = \Big\{\mu_0>0 \;\,\Big\vert\;\, \exists \; a>0 \;\,
    \text{such that} \;\, \forall\;p\ge2, \; 
    \lambda_{p}(\mu_0)\ge \min\big(a,p^{-\frac12-\delta}\big) \Big\}
  \end{equation}
  has full Lebesgue measure. If $\delta>\frac12$ and $\mu_0>0$ is
  rational, then $\mu_0\in\mc E_\delta$. For $\delta\le0$, $\mc
  E_\delta$ has Lebesgue measure 0.  For $\delta\le-\frac12$, $\mc
  E_\delta$ is the empty set.
\end{theorem}

We prove this theorem in the electronic supplementary material and
outline the key steps of the proof here. The first assertion makes
precise the claim that for almost every fluid depth, $\min_{2\le q\le
  p}\lambda_q$ is positive for $p\ge2$ and does not decay to zero much
faster than $1/\sqrt{p}$ as $p\to\infty$.  To prove it, we show that
$\mu_0\not\in\mc E_\delta\;\Rightarrow\; \tanh\mu_0\in\mc F_\delta$,
where
\begin{equation}\label{eq:Fdelta:def:0}
  \mc F_\delta = \Big\{ x\in\mbb R \;\,\Big\vert\;\, \exists \;\,
  \text{infinitely many pairs}
  \;\,(p,j)\in\mbb Z\times\mbb N \;\,\text{s.t.} \;\,
  \Big| x - \frac{p}{j^2} \Big| < \frac1{j^{3+\delta}} \Big\},
\end{equation}
which has been proved \cite{borosh} to have Hausdorff dimension
$\frac{3}{3+\delta}$ and Lebesgue measure zero. It follows
\cite{rudin:cx} that $\mc E_\delta$ has full Lebesgue measure.  We use
a theorem of Schmidt \cite{schmidt:64} to prove that $\mc E_\delta$
has measure zero for $\delta\in\big({-}\frac12,0\big]$. For the
$\delta\le-\frac12$ result, we use Weyl's equidistribution theorem
\cite{stein} that if $x$ is irrational then $\{j^2x\,|\,j\in\mbb N\}$
is equidistributed on $[0,1]$ modulo 1. Our proof that rational depths
belong to $\mc E_\delta$ for $\delta>\frac12$ makes use of Lambert's
continued fraction \cite{lorentzen:book} for $\tanh\mu_0$ to establish
that the irrationality exponent of $\tanh\mu_0$ is 2 via the method of
\cite{hancl:15}.  It may be possible to replace $\delta>\frac12$ by
$\delta>0$ for rational depths, but we do not know how to take
advantage of $p/j^2$ appearing with $j$ squared in equation
\eqref{eq:Fdelta:def:0}. One can estimate values of $\delta$ for which
$a$ is not too small numerically. For example, $\mu_0=1/16$ appears to
belong to $\mc E_\delta$ with $\delta=0.07$ and $a=0.0155$, based on
checking $\lambda_p$ for $2\le p\le 6.24\times 10^{22}$.  This is
shown in the electronic supplementary material, where we also argue
that these lower bounds on small divisors are important for the
convergence of Pad\'e approximants of the Stokes expansion.  There is
an asymmetry in which $a$ approaches 0 as $\mu_0$ approaches a fixed
resonant depth through the rationals, but $a$ is positive for a fixed
rational depth $\mu_0$, even though there are sequences of resonant
depths approaching $\mu_0$.

\subsection{Recursive algorithm}\label{sec:recursive:alg}

In this section we assume $\mu_0$ is not a resonant depth.
For every integer $\nu\ge1$, we define a set of lattice points
\begin{equation} \label{fdLN}
  L_\nu = \big\{\, (p,n)\;\,\big\vert\;\, n\ge0\,,\,p\ge1\,,\,
  p+2n\le\nu \,\big\},
\end{equation}
as well as a corresponding set of ordered triples
\begin{equation}	\label{fdGammaN}
  \Gamma_\nu = \big\{\,(\alpha_{p,n},\beta_{p,n},\gamma_{p,n})
  \;\,\big\vert\;\, (p,n)\in L_\nu\,\big\},
\end{equation} 
where $\alpha_{p,n}$, $\beta_{p,n}$ and $\gamma_{p,n}$ are assumed to
be of the form \eqref{fdFourierSeriesAlpha},
\eqref{fdFourierSeriesBeta} and \eqref{fdFourierSeriesGamma},
respectively.  We now state an induction hypothesis,
$\mathscr{P}_\nu$, for $\nu\ge1$.  The proof establishes the validity
of the algorithm, and thus the existence and uniqueness of a solution
of \eqref{eq:I}, \eqref{eq:II:star}, \eqref{eq:III:star},
\eqref{eq:IV}, \eqref{eq:i} and \eqref{eq:ii}. \\[-6pt]

\noindent

$\mathscr{P}_\nu$ (induction hypothesis): with
$N=\lfloor(\nu-1)/2\rfloor$, there exist unique Stokes expansion
coefficients $\Gamma_\nu$, $\mc M_N$, and $\Sigma_N$ satisfying
\eqref{eq:II:star}, \eqref{eq:III:star} and \eqref{eq:IV} for
$(p,n)\in L_\nu$; satisfying \eqref{eq:I} and \eqref{eq:i} for $0\le
n\le N$; satisfying \eqref{eq:ii} for $1\le n\le N$; and satisfying
\begin{align}	\label{fdOrthogonalityForSigmaN}
	\int_{0}^{2\pi} \cos(t) S_{1,n}(t) \, dt = 0, \qquad (0\le n\le N).
\end{align}

$S_{p,n}$ is the forcing term in \eqref{eq:II:star}, so the
orthogonality condition \eqref{fdOrthogonalityForSigmaN} ensures
solvability of \eqref{eq:II:star} at $p=1$ by eliminating secular
terms in the solution that destroy time-periodicity.  It is also the
constraint needed to uniquely determine the $\sigma_n$ values. We will
prove $\mathscr{P}_\nu$ inductively and exhibit the algorithm through
the proof. The computational scheme is illustrated in figure
\ref{FDAlgorithmImage}.

\begin{theorem}		\label{fdTheoremPNHoldsEverywhere}
	$\mathscr{P}_\nu$ holds for all $\nu \in \mbb{N}$.
\end{theorem}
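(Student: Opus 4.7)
The plan is a straightforward induction on $\nu$. The base case $\nu=1$ is already settled by \eqref{eq:pn:10}, where $(\alpha_{1,0},\beta_{1,0},\gamma_{1,0},\sigma_0)$ are written down in closed form. For the inductive step, assume $\mathscr{P}_{\nu-1}$ and set $N=\lfloor(\nu-1)/2\rfloor$. When $\nu$ is even, $N$ is unchanged and no new $\mu_n$ or $\sigma_n$ must appear at this level; when $\nu=2N+1$ is odd, a new $\mu_N$ and a new constant $\sigma_N$ must also be produced. A short index check on the sets $\Omega_{p,n}$ shows that no two pairs $(p,n)$ with $p+2n=\nu$ are linked by the forces $T^r_{p,n}$, so once the prerequisites are in place the pairs at level $\nu$ can be handled independently.

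The order of operations within step $\nu$ is as follows. If $\nu$ is odd, first integrate (I) to obtain $\mu_N$: its forcing $T^1_{0,N}$ involves only coefficients in $\Gamma_{\nu-1}$ and $\mc M_{N-1}$, so it is known, and the constant of integration is fixed by (ii). Next, for each $(p,n)$ with $p+2n=\nu$ and $p\ge 2$, solve (II*) for $\alpha_{p,n}$. This is a linear harmonic-oscillator ODE $\ddot\alpha_{p,n}+\omega_p^2\alpha_{p,n}=-\dot T^3_{p,n}-pT^4_{p,n}$ with $\omega_p^2=p\coth(\mu_0)\tanh(p\mu_0)$ and even-trig-polynomial forcing; by Lemma \ref{fdTranscendentalNumberTheoryLemma}, $\omega_p^2$ is transcendental and in particular not a perfect square, so the oscillator is non-resonant against every Fourier mode of the forcing, and a unique even-trig-polynomial $2\pi$-periodic solution exists. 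Then (III*) gives $\beta_{p,n}$ algebraically, and (IV) is a first-order ODE for $\gamma_{p,n}$; periodicity of $\gamma_{p,n}$ follows from the fact that (II*) is exactly the compatibility condition obtained by eliminating $\gamma_{p,n}$ from (II), (III) and (IV), while the constant of integration is pinned down by the oddness required in \eqref{fdFourierSeriesGamma}.

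The delicate step, and the main obstacle, is the resonant case $(p,n)=(1,N)$, which only occurs when $\nu$ is odd. Since $\omega_1^2=1$, the forcing $S_{1,N}=\dot T^3_{1,N}+T^4_{1,N}$ must have vanishing $\cos t$ Fourier coefficient or else the solution of (II*) picks up a secular term. Tracing through $T^4_{1,N}$, the only contribution containing $\sigma_N$ comes from the term $\alpha_{1,0}\sigma_N s_{1,0}/c_{1,0}$, which, using $\alpha_{1,0}=\coth(\mu_0)\cos t$ and $s_{1,0}/c_{1,0}=\tanh(\mu_0)$, equals precisely $\sigma_N\cos t$. Hence the orthogonality constraint \eqref{fdOrthogonalityForSigmaN} with $n=N$ reads $\pi\sigma_N+(\text{known})=0$ and uniquely determines $\sigma_N$. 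With $\sigma_N$ fixed, (II*) at $p=1$ admits a one-parameter family of even $2\pi$-periodic solutions, parametrized by the coefficient of $\cos t$; this remaining freedom is pinned down by the initial condition (i) with $n=N$. After this, (III*) and (IV) yield $\beta_{1,N}$ and $\gamma_{1,N}$ exactly as in the non-resonant case.

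Beyond the resonance bookkeeping, the verification that parity, $2\pi$-periodicity, and the Fourier-degree bound $p+2n$ propagate through each operation is routine but tedious: products of trigonometric polynomials of degrees $d_1,d_2$ have degree at most $d_1+d_2$, and the summation patterns in $T^r_{p,n}$ preserve this bound. The finite-depth setting is in this respect cleaner than the infinite-depth analysis of \cite{amick1987semi}, where infinitely many exact resonances occur at every $p=j^2$ and must each be resolved at the next higher order; here Lemma \ref{fdTranscendentalNumberTheoryLemma} confines resonance entirely to $p=1$.
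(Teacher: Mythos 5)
Your proposal is correct and follows essentially the same strategy as the paper's (which reduces Theorem~\ref{fdTheoremPNHoldsEverywhere} to the base case \eqref{eq:pn:10} plus the inductive Lemma~\ref{fdPNInductionStep}): induct on the order $\nu$, use Lemma~\ref{fdTranscendentalNumberTheoryLemma} to solve (II*), (III*), (IV) non-resonantly at the lattice points with $p\ge 2$, compute $\mu_N$ from (I) and (ii), fix $\sigma_N$ by the orthogonality condition \eqref{fdOrthogonalityForSigmaN} so that (II*) at $p=1$ has no secular term, and pin down the residual $\cos t$ freedom in $\alpha_{1,N}$ with (i). The only cosmetic differences are that you advance one order at a time rather than the paper's two-step jump $\mathscr{P}_{2N+1}\Rightarrow\mathscr{P}_{2N+2},\mathscr{P}_{2N+3}$, and you compute $\mu_N$ at the start of an odd step rather than after the $p\ge2$ points, both of which are valid since $T^1_{0,N}$ depends only on data already available from $\mathscr{P}_{\nu-1}$.
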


\begin{proof}
  We have already established in equation \eqref{eq:pn:10} that $\mathscr{P}_1$
  holds. Thus, it follows from Lemma~\ref{fdPNInductionStep} below
  that $\mathscr{P}_\nu$ holds for all $\nu\ge1$. \hspace*{\fill} \raisebox{-3pt}{\qed}
  \\[-6pt]
\end{proof}

\begin{figure}[t]
  \begin{center}
    \includegraphics[scale=0.61,trim=0 23 0 0,clip]{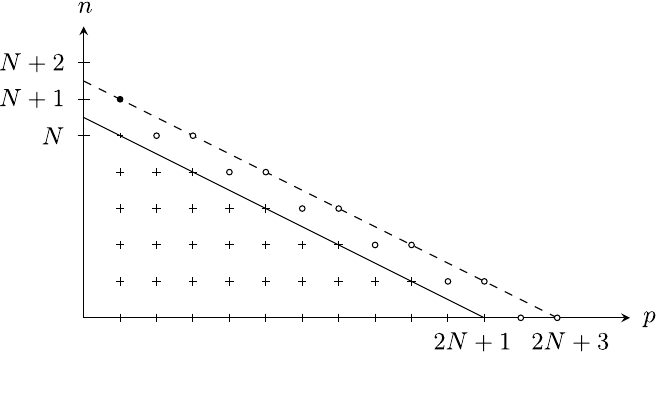}
  \end{center}
  \caption{Points marked with $+$ represent $\Gamma_{2N+1}$ while
    points marked with $\circ$ and $\bullet$ extend $\Gamma_{2N+1}$ to
    $\Gamma_{2N+3}$. On each iteration, we use \eqref{eq:II:star},
    \eqref{eq:III:star} and \eqref{eq:IV} to compute $\alpha_{p,n}$,
    $\beta_{p,n}$ and $\gamma_{p,n}$ at the points marked by $\circ$
    in two batches, first those in $L_{2N+2}^\circ$, followed by those
    in $L_{2N+3}^\circ$. The latter points lie on the dashed line.
    Next we compute $\mu_{N+1}$ using \eqref{eq:I} and
    \eqref{eq:ii}. Finally, we reach $\bullet$, where we compute
    $\sigma_{N+1}$ using the orthogonality condition
    (\ref{fdOrthogonalityForSigmaN}), followed by $\alpha_{1,N+1}$,
    $\beta_{1,N+1}$, and $\gamma_{1,N+1}$ via \eqref{eq:II:star} and
    \eqref{eq:i}, \eqref{eq:III:star}, and \eqref{eq:IV},
    respectively.}\label{FDAlgorithmImage}
\end{figure}

\begin{lemma}\label{fdPNInductionStep}
  If $N\ge0$ and $\mathscr{P}_{2N+1}$ holds, then $\mathscr{P}_{2N+2}$
  and $\mathscr{P}_{2N+3}$ also hold, and the extensions (in the
    set-theoretic sense) of $\Gamma_{2N+1}$, $\mc M_N$, and $\Sigma_N$
  to $\Gamma_{2N+3}$, $\mc M_{N+1}$, and $\Sigma_{N+1}$, respectively,
  are unique.
\end{lemma}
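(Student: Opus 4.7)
The plan is to execute the algorithm of Figure~\ref{FDAlgorithmImage} in five stages that together extend $\mathscr{P}_{2N+1}$ to $\mathscr{P}_{2N+2}$ and then to $\mathscr{P}_{2N+3}$, showing at each stage that the new unknowns are uniquely determined by data already in hand. Before beginning I would record three structural facts used throughout: (a)~if $(j,k)\in\Omega_{p,n}$ then $j+2k\le p+2n-1$, since $j+2k=(j+k)+k\le(p+n)+n=p+2n$ and equality would force $k=n$ and $j=p$, which is excluded---so every $T^r_{p,n}$ depends only on strictly lower-order Stokes data; (b)~the prescribed parities in \eqref{fdFourierSeriesMu}--\eqref{fdFourierSeriesGamma} make $T^2_{p,n}$ and $T^3_{p,n}$ odd and $T^1_{0,n}$ and $T^4_{p,n}$ even in~$t$, so the algorithm preserves the assumed trigonometric-polynomial class; (c)~each $T^r_{p,n}$ has Fourier frequencies bounded by $p+2n$, matching \eqref{fdFourierSeriesAlpha}--\eqref{fdFourierSeriesGamma}.

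Stages~1 and~2 loop over the points of $L_{2N+2}\setminus L_{2N+1}$ (all with even $p\ge2$) and then over those of $L_{2N+3}\setminus L_{2N+2}$ with $p\ge3$ (odd). For any such $(p,n)$, equation~(II*) is the constant-coefficient ODE $\ddot\alpha_{p,n}+p\sigma_0\tanh(p\mu_0)\,\alpha_{p,n}=-\dot T^3_{p,n}-pT^4_{p,n}$ whose forcing is an even trigonometric polynomial built from data certified by the inductive hypothesis (augmented in Stage~2 by the $\circ$-points computed in Stage~1). By Lemma~\ref{fdTranscendentalNumberTheoryLemma} the coefficient $p\sigma_0\tanh(p\mu_0)=p\coth(\mu_0)\tanh(p\mu_0)$ is transcendental, hence distinct from every $j^2$ with $j\in\mbb N$, so there is a unique even trigonometric-polynomial solution $\alpha_{p,n}$ of the prescribed form. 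Equations~(III*) and~(IV) then yield $\beta_{p,n}$ and $\gamma_{p,n}$ algebraically, with the mean-vanishing condition needed to integrate~(IV) holding automatically because (II) is equivalent to (II*)$+$(III*) on this class. After Stage~1 the hypothesis $\mathscr{P}_{2N+2}$ is established; after Stage~2 only the lattice point $(1,N+1)$ is missing from $\Gamma_{2N+3}$.

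Stage~3 solves~(I) at $n=N+1$: $\dot\mu_{N+1}=-T^1_{0,N+1}$ has right-hand side depending only on $\Gamma_{2N+1}$ and $\mc M_N$, odd in~$t$ and hence of vanishing mean, so its antiderivative is an even trigonometric polynomial whose integration constant is fixed by~(ii) at $n=N+1$, producing $\mc M_{N+1}$. At Stage~4, $p\sigma_0\tanh(p\mu_0)=1$ at $p=1$ is resonant at frequency~$j=1$; however, the $\sigma_{N+1}$-dependence of $S_{1,N+1}=\dot T^3_{1,N+1}+T^4_{1,N+1}$ comes solely from the $(j,k)=(0,N+1)$ summand in \eqref{fdBernoulliGreekT4}, which collapses via $\alpha_{1,0}(t)=\coth(\mu_0)\cos t$, $c_{1,0}\equiv\cosh(\mu_0)$, and $s_{1,0}\equiv\sinh(\mu_0)$ to $\sigma_{N+1}\cos(t)$, so $S_{1,N+1}=\sigma_{N+1}\cos(t)+R(t)$ with $R$ independent of $\sigma_{N+1}$, and the orthogonality condition \eqref{fdOrthogonalityForSigmaN} becomes the scalar equation $\pi\sigma_{N+1}+\int_0^{2\pi}\cos(t)R(t)\,dt=0$, determining $\sigma_{N+1}$ uniquely. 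Stage~5 then solves the now nonresonant~(II*) at $p=1$: the even particular solution is unique up to an additive $A\cos(t)$, and condition~(i) at $n=N+1$ fixes $\alpha_{1,N+1}(0)$ and therefore $A$; (III*) and (IV) deliver $\beta_{1,N+1}$ and $\gamma_{1,N+1}$, completing $\Gamma_{2N+3}$ and $\mathscr{P}_{2N+3}$.

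The main obstacle I anticipate is the bookkeeping in Stage~4: one must verify that the $\sigma_{N+1}$-part of $S_{1,N+1}$ is exactly $\sigma_{N+1}\cos(t)$, with no hidden contributions entering through $\mu_{N+1}$ or through the Bell-polynomial recursions \eqref{eq:cqn:sqn:bell}--\eqref{eq:Bqn:recur} defining $c_{q,n}$ and $s_{q,n}$, and that the leading coefficient of the resulting orthogonality equation is nonzero, so that it is a genuine linear equation for $\sigma_{N+1}$. Granting this short but delicate calculation, all five stages yield uniquely determined Stokes coefficients, establishing the lemma.
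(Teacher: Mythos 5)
Your proposal follows the same five-stage structure as the paper's proof: the off-resonant lattice points in $L_{2N+2}^\circ$ and $L_{2N+3}^\circ$, then $\mu_{N+1}$ via (I) and (ii), then $\sigma_{N+1}$ via the orthogonality condition \eqref{fdOrthogonalityForSigmaN}, then the final point $(1,N+1)$ with the $\cos(t)$ freedom fixed by (i). The concern you flag at Stage~4 resolves exactly as you anticipate, since $\mu_{N+1}$ and hence the Bell polynomials are determined in Stage~3 with no reference to $\sigma_{N+1}$, and the sole $\sigma_{N+1}$-dependence of $S_{1,N+1}$ is the term you identified, $\frac{1}{c_{1,0}}\alpha_{1,0}\sigma_{N+1}s_{1,0}=\sigma_{N+1}\cos(t)$, which is precisely the observation the paper makes.
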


\begin{proof}
We assume $\mathscr{P}_{2N+1}$ holds and $\Gamma_{2N+1}$, $\mc M_N$,
and $\Sigma_N$ are known.  We begin by extending $\Gamma_{2N+1}$ to
$\Gamma_{2N+2}$, i.e., to the open circles of figure
\ref{FDAlgorithmImage} that do not lie on the dashed line.  We denote
these lattice points by $L_{2N+2}^\circ$, where
\begin{equation}
  L_\nu^\circ = \big\{\,(p,n) \;\,\big\vert\;\, n\ge0\,,\,p\ge2\,,\,
  p+2n=\nu\,\big\}.
\end{equation}
For each $(p,n)\in L_{2N+2}^\circ$, $\Omega_{p,n}$ is a subset of
$L_{2N+1}$, so by the induction hypothesis, each $\alpha_{q,m}$,
$\beta_{q,m}$ and $\gamma_{q,m}$ with $(q,m)\in \Omega_{p,n}$ is known
and has the form \eqref{fdFourierSeriesAlpha},
\eqref{fdFourierSeriesBeta} or \eqref{fdFourierSeriesGamma} with $p$
replaced by $q$ and $n$ replaced by $m$. Since $\mu_n(t)$ has the form
\eqref{fdFourierSeriesMu} for $1\le n\le N$, it follows from the
recurrence \eqref{eq:Bqn:recur} that $B_{q,m}(t)$ in equation
\eqref{eq:cqh:sqh} has the form
\begin{equation}\label{eq:Bqm:form}
  B_{q,m}(t) = \sideset{}{'}\sum_{j=0}^{2m} B_{q,m,j} H_j  \lp
  e^{ijt} + e^{-ijt} \rp, \qquad
  \big(0\le m\le N\,,\, q\in\mbb Z\big),
\end{equation}
where the $B_{q,m,j}$ are real coefficients and, as before, a prime on
a sum indicates that indices of opposite parity to the upper limit
should be excluded. The case $q=0$ is trivial since $\exp(0h)=1$ in
equation \eqref{eq:cqh:sqh}. As a result, $B_{0,m}(t)$ still has the form
\eqref{eq:Bqm:form}, but all the coefficients $B_{0,m,j}$ are zero
except for $B_{0,0,0}=1$. From equation \eqref{eq:cqn:sqn:bell}, we then also
have
\begin{equation}\label{eq:cqm:sqm:form}
  c_{q,m}(t) = \sideset{}{'}\sum_{j=0}^{2m} c_{q,m,j} H_j \lp e^{ijt} + e^{-ijt} \rp, \quad
  s_{q,m}(t) = \sideset{}{'}\sum_{j=0}^{2m} s_{q,m,j} H_j \lp e^{ijt} + e^{-ijt} \rp
\end{equation}
for $0\le m\le N$ and $q\in\mbb Z$, where $c_{q,m,j}$ and
$s_{q,m,j}$ are real. 
The coefficients $c_{q,m,j}$ and
$s_{q,m,j}$ that arise in the forcing terms $T^r_{p,n}$ with
$(p,n)\in L_{2N+3}\setminus L_{2N+1}$ satisfy $|q|\le
2N+3$.

From the definitions \eqref{eq:T2:def}, \eqref{kinematicGreekT3} and
\eqref{fdBernoulliGreekT4} of the forces, still assuming $(p,n)\in
L_{2N+2}^\circ$, we see that $T^2_{p,n}$ and $T^3_{p,n}$ are odd
trigonometric polynomials of the form \eqref{fdFourierSeriesBeta}
while $T^4_{p,n}$ and $S_{p,n}$ are even trigonometric polynomials of
the form \eqref{fdFourierSeriesAlpha}. For example, one of the terms
that appears in $T^3_{p,n}$ is a multiple of
$\alpha_{p-j,k}\alpha_{j,l}\dot\mu_m c_{p-2j,n-k-l-m}$, which is a
trigonometric polynomial of degree
\begin{equation}
  \big[(p-j)+2k\big]+\big(j+2l\big)+\big(2m\big)+\big(2n-2k-2l-2m\big)=p+2n.
\end{equation}
It is an odd function as $\dot\mu_m(t)$ is odd while the other factors
are even.  And it includes only terms $e^{ijt}$ with $j$ of the same
parity as $p+2n$. Indeed, each factor (indexed by $q$) has the form
$e^{i\nu_q t}P_q(e^{-2it})$ where $P_q$ is a polynomial of degree
$\nu_q$, so the product is also of this form.  Since $\mu_0$ is not a
resonant depth, we may solve \eqref{eq:II:star} uniquely to obtain
$\alpha_{p,n}$ of the form \eqref{fdFourierSeriesAlpha}, then
\eqref{eq:III:star} uniquely to obtain $\beta_{p,n}$ of the form
\eqref{fdFourierSeriesBeta}, and finally \eqref{eq:IV} uniquely to
obtain $\gamma_{p,n}$ of the form \eqref{fdFourierSeriesGamma}. This
establishes $\mathscr P_{2N+2}$.  Given $\mathscr P_{2N+2}$, identical
arguments show that \eqref{eq:II:star}, \eqref{eq:III:star} and
\eqref{eq:IV} uniquely determine $\alpha_{p,n}(t)$, $\beta_{p,n}(t)$
and $\gamma_{p,n}(t)$ of the form \eqref{fdFourierSeriesAlpha},
\eqref{fdFourierSeriesBeta} and \eqref{fdFourierSeriesGamma},
respectively, at the lattice points in $L_{2N+3}^\circ$, which are the
$\circ$ markers on the dashed line in figure~\ref{FDAlgorithmImage}.

The next step is to compute $T^1_{0,N+1}$, which, by
equation \eqref{fdKinematicGreekP0} and the above reasoning, is an odd
trigonometric polynomial of degree $2(N+1)$ that omits terms $e^{ijt}$
with $j$ odd. We then solve \eqref{eq:I} and \eqref{eq:ii} uniquely
for $\mu_{N+1}$ of the form \eqref{fdFourierSeriesMu}. We also learn
that equations \eqref{eq:Bqm:form} and \eqref{eq:cqm:sqm:form} hold for $m=N+1$
in addition to the cases $m\le N$ established above.  This fact is
needed for the last lattice point $(p,n)=(1,N+1)$ to conclude that
$T^3_{1,N+1}(t)$ and $T^4_{1,N+1}(t)$ are, respectively, odd and even
trigonometric polynomials of degree $2N+3$ that omit terms $e^{ijt}$
with $j$ even. The analogous conclusion for $T^2_{1,N+1}(t)$ follows
directly from $\mathscr{P}_{2N+2}$ since $\mu_m$, $c_{q,m}$ and
$s_{q,m}$ do not appear in \eqref{CRGreekT2}. At this point, all terms
in $S_{1,N+1}$ in \eqref{eq:II:star} are known except $\sigma_{N+1}$,
which is determined using the orthogonality condition
\begin{equation}\label{eq:cos:S1}
  \int_{0}^{2\pi} \cos(t) S_{1,N+1}(t) dt = 0.
\end{equation}
The term in $S_{1,N+1}$ that contains $\sigma_{N+1}$ is
$(1/c_{1,0})\alpha_{1,0} \sigma_{N+1}
s_{1,0}=\sigma_{N+1}\cos(t)$, so \eqref{eq:cos:S1} is a linear
equation in $\sigma_{N+1}$ whose coefficient is not zero. This
eliminates secular growth in the solution of \eqref{eq:II:star} for the lattice
point $(1,N+1)$ and uniquely determines $\alpha_{1,N+1}$ of the
form \eqref{fdFourierSeriesAlpha}, up to an arbitrary real
multiple of $\cos(t)$. To determine this unknown coefficient, we
compute $\alpha_{1,N+1}(0)$ from \eqref{eq:i}, where all other quantities
are known.  Finally, we use \eqref{eq:III:star} and \eqref{eq:IV} to compute
$\beta_{1,N+1}$ and $\gamma_{1,N+1}$ of the forms
\eqref{fdFourierSeriesBeta} and \eqref{fdFourierSeriesGamma},
respectively.

We have shown that the necessary extensions exist, are unique, and
preserve the trigonometric polynomial structure of the induction
hypothesis, thus proving the lemma. \hspace*{\fill} \raisebox{-3pt}{\qed}
\end{proof}

\section{Numerical results}	\label{fdSecNumericalResults}

We computed the expansion coefficients $\alpha_{p,n,j}$,
$\beta_{p,n,j}$, $\gamma_{p,n,j}$, $\mu_{n,j}$ and $\sigma_n$ for
dimensionless fluid depths
\begin{equation}\label{eq:mu0:list}
  \mu_0 \; \in \; \left\{ \, \frac1{16} \;,\; \frac14 \;,\; \frac35 \;,\;
  1 \;,\; 4 \;,\; 10 \;,\; 16 \;,\; \infty \, \right\}
\end{equation}
up to order $\nu=p+2n=109$, and then again to $\nu=149$ for
$\mu_0\in\{3/5,1,\infty\}$ to further explore the convergence of the
Pad\'e approximants studied in \S\ref{sec:pade} below.  In
infinite depth, we implemented a variant of the Schwartz \& Whitney
algorithm \cite{schwartz1981semi} that will be explained in detail
elsewhere \cite{abassi:semi2} as a special case of standing
gravity-capillary waves in infinite depth. Our code employs the MPFR
multiple precision library \cite{mpfr:toms} with a fixed mantissa
size. We implemented it on a supercomputer using a hybrid MPI/OpenMP
parallel framework \cite{chopp:book}. We ran each calculation at least
twice, with different precisions, to observe how floating-point errors
accumulate, estimate these errors, and repeat with more precision if
necessary. The precisions used were 64, 90, 144, 192 and 256
  digits (212, 300, 480, 638 and 850 bits). Computational aspects and
implementation details of the algorithm are given in the electronic
supplementary material along with a discussion of the generation,
propagation and estimation of floating-point errors.

\subsection{Growth of the coefficients in the asymptotic expansion}
\label{sec:growth}

It is useful to consolidate the $\epsilon$-expansions of $a_p(t)$ and
$h(t)$ in the formula \eqref{ansatzZ} for $Z(w,t)$. This gives a
single $\epsilon$-expansion of the non-dimensionalized wave profile,
which we denote by
\begin{equation}\label{eq:tilde:eta:series}
  \tilde\eta(\alpha,t) =
  \frac{2\pi}{L}\eta\Big(\alpha,\frac{T}{2\pi}t\Big) =
  \im\big\{ Z(\alpha,t) \big\} =
  \sum_{\nu=1}^\infty \tilde\eta^\e{\nu}(\alpha,t) \epsilon^\nu.
\end{equation}
As in equation \eqref{eq:ZFW:def}, the dimensionless variables $(\alpha,t)$
range over the torus $\mbb T^2$ rather than over a domain that depends
on $\epsilon$.  We denote the Fourier representation of
$\tilde\eta^\e\nu(\alpha,t)$ by
\begin{equation}
  \tilde\eta^\e\nu(\alpha,t) = \sum_{p,j\in E_\nu} \tilde \alpha^\e{\nu}_{p,j}
  e^{ip\alpha}e^{ijt},
\end{equation}
where $E_\nu$ was defined in \eqref{eq:Enu:Hj:def}.  We also introduce
the area-weighted $L^2$-norm
\begin{equation}\label{eq:A:nu:def}
  A_\nu 
  = \left(\frac1{(2\pi)^2}\int_0^{2\pi}\int_0^{2\pi}
    \big[\tilde\eta^\e\nu(\alpha,t)\big]^2\,d\alpha\,dt\right)^{1/2}
  = \sqrt{\sum_{j,p\in E_\nu} \Big| \tilde \alpha^\e\nu_{p,j} \Big|^2}
\end{equation}
to measure the growth of successive terms in equation
\eqref{eq:tilde:eta:series}.  Using equations \eqref{Stokes},
\eqref{fdFourierSeriesMu} and \eqref{fdFourierSeriesAlpha}, it follows
from
\begin{equation}\label{eq:im:Z:expand}
  \im\big\{ Z(\alpha,t) \big\} =
    h(t) - \mu_0 + \sum_{p=1}^\infty a_p(t)\frac{\sinh\!\big(ph(t)\big)}{\cosh(p\mu_0)}
    \cos(p\alpha)
\end{equation}
that $\tilde\alpha^\e{2n}_{0,j} = \mu_{n,|j|}$ for $n\ge1$, $j\in
E_{2n}$; that $\tilde\alpha^\e\nu_{p,j}=\tilde\alpha^\e\nu_{|p|,|j|}$
for $p,j\in E_\nu$; and that
\begin{equation}
  \sum_{j\in E_{p+2n}}\tilde\alpha^\e{p+2n}_{p,j}e^{ijt} =
  \sum_{m=0}^n \frac{\alpha_{p,m}(t)s_{p,n-m}(t)}{2\cosh(p\mu_0)}, \qquad
  p\ge1\,,\,n\ge0.
\end{equation}
We compute the right-hand side in real space on a uniform grid in the
$t$ variable with enough gridpoints to avoid aliasing errors. The
coefficients $\tilde\alpha^\e{p+2n}_{p,j}$ are then easily obtained
using the Fast Fourier Transform (FFT) \cite{recipes,fft:2345}. In the
infinite-depth case, the formula is simpler:
$\tilde\alpha^\e{|p|+2n}_{p,j} = [1/(2H_{|p|})]\alpha_{|p|,n,j}$,
where $\alpha_{p,n,j}$ is still related to $a_p(t)$ via equations
\eqref{stokesA} and \eqref{fdFourierSeriesAlpha} but the ansatz
\eqref{ansatz} is replaced by equation (2.12a) from \cite{schwartz1981semi},
i.e., $Z(w,t)=w+i\sum_{p=0}^\infty a_p(t) e^{-ipw}$.

Figure~\ref{fig:growthFD}(a,b) shows the growth rate factors
$\sqrt{A_\nu/A_{\nu-2}}$ of the norms $A_\nu$ for different fluid
depths. Plotting $\sqrt{A_\nu/A_{\nu-2}}$ instead of $A_\nu/A_{\nu-1}$
decouples the even and odd orders, which eliminates oscillations that
obscure the plots.  In the cases we studied, the growth rates approach
limiting values separated by occasional `stairstep jumps' from one
plateau height to another over a narrow transition region. Notable
jumps occur for depths $\mu_0\in\{1/4,3/5,1\}$ near
orders $\nu\in\{70,102,66\}$, respectively.  We show a
connection between these jumps in growth rate and new small divisors
entering the recurrence in the electronic supplementary material.
Figure~\ref{fig:growthFD}(c) shows the norms $A_\nu\epsilon^\nu$ of
successive terms of the series \eqref{eq:tilde:eta:series} for
$\mu_0=1$ and $\mu_0=3/5$ for various choices of the amplitude
$\epsilon$. Jumps in the growth rate in
figure~\ref{fig:growthFD}(b) lead to kinks in the plots of
figure~\ref{fig:growthFD}(c). Successive terms of the
  series decay geometrically until the inverse growth rate factor
\begin{equation}\label{eq:rho:nu:def}
  \rho_\nu = \sqrt{A_{\nu-2}/A_\nu}, \qquad\quad (\nu\ge3)
\end{equation}
drops below $\epsilon$, after which they grow geometrically.
This is illustrated with $\epsilon=0.03$ for $\mu_0=3/5$ in
figure~\ref{fig:growthFD}(c). The other two curves in
  figure~\ref{fig:growthFD}(c) show that when $\epsilon=\rho_\nu$ in a
plateau region where $\rho_\nu$ is nearly constant, the norms
  $A_\nu\epsilon^\nu$ also remain nearly constant.  As a
  function of $\epsilon$, if the series is evaluated through order
  $\nu_\text{max}$ by direct summation, it begins to grow rapidly once
  $\epsilon$ exceeds $\rho_{\nu_\text{max}}$.  For fixed $\epsilon$,
with direct summation,
the series should be truncated at or before the last $\nu$ for which
$\rho_\nu>\epsilon$. However, we find in \S\ref{sec:pade}
  below that Pad\'e approximants of the series continue to improve in
  accuracy as the order is
  increased, without requiring $\rho_\nu>\epsilon$.

\begin{figure}[t]
  \begin{center}
    \includegraphics[width=\linewidth]{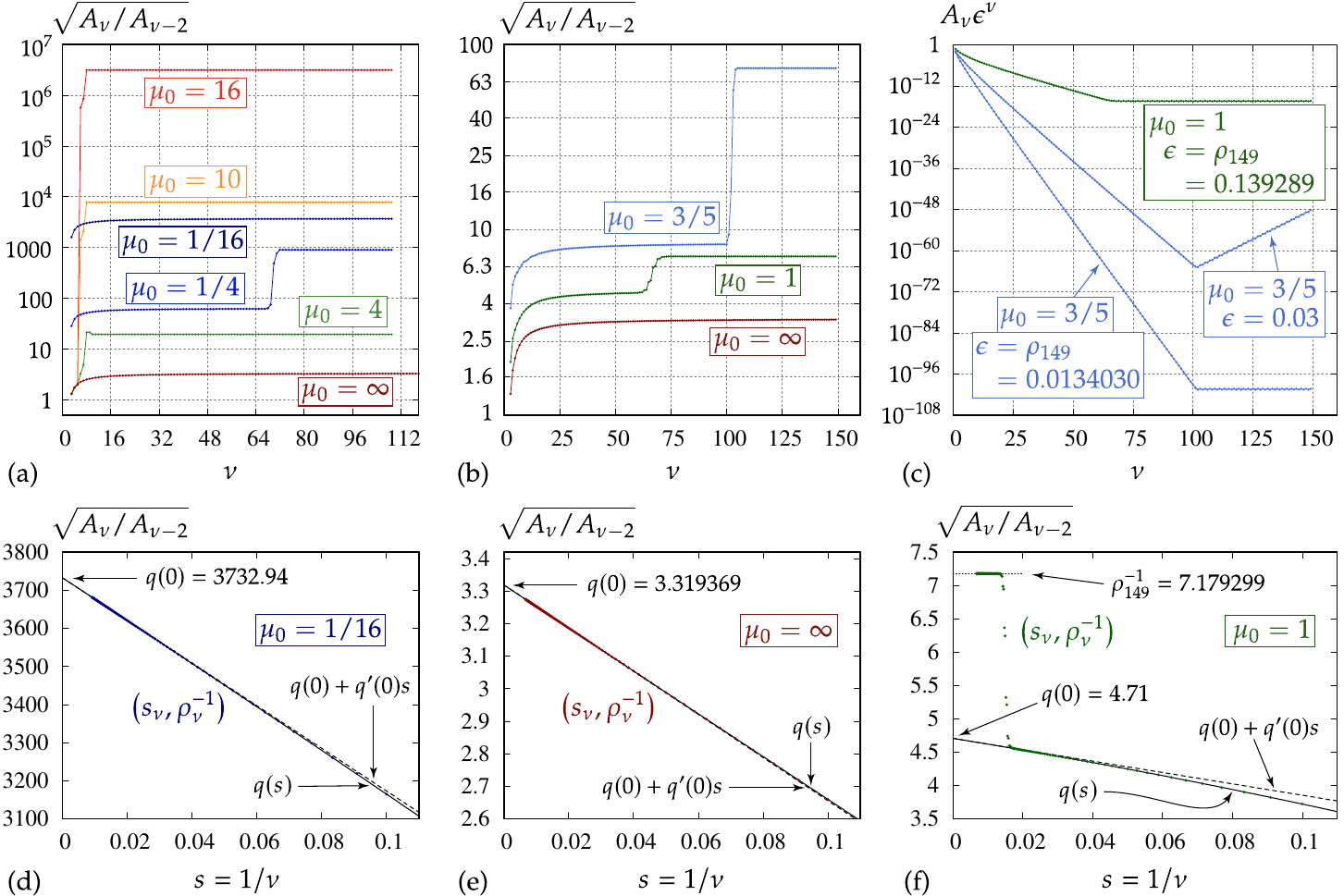}
  \end{center}
  \caption{\label{fig:growthFD} Growth rate factors
    $\rho_\nu^{-1}=\sqrt{A_\nu/A_{\nu-2}}$ and norms
    $A_\nu\epsilon^\nu$ of successive terms in the asymptotic
    expansion. (a,b) Jumps in the growth rate occur for
    $\mu_0\in\{1/4,3/5,1,4,10,16\}$ when small divisors enter the
    recurrence and lead to new growth patterns among the coefficients
    $\tilde\alpha^\e\nu_{p,j}$ in equation \eqref{eq:A:nu:def}. (c) Optimal
    truncations of the asymptotic series occur at kinks in the curves
    where successive terms stop decreasing due the jump in
    $\rho_\nu^{-1}$ from (b). (d,e,f) Domb-Sykes plots to estimate
    the location of singularities in the family of solutions.
    }
\end{figure}

Figure~\ref{fig:growthFD}(d,e) shows Domb-Sykes plots
  \cite{domb:sykes:57} of $\rho_\nu^{-1}$ versus $s_\nu=1/\nu$ for
the two cases where a jump is not observed, $\mu_0=1/16$ and
$\mu_0=\infty$. The solid curves show low-degree polynomials
$q(s)$ that were fit to the data points
$\big(s_\nu,\rho_\nu^{-1}\big)$ as described in the electronic
supplementary material. The dashed lines show $q(0)+q'(0)s$,
which is the extrapolated estimate of the leading-order asymptotic
behavior as $\nu\to\infty$ and $s_\nu\to0^+$. This predicts, by
  the ratio test, a radius of convergence of $\rho=0.000267885$ for
$\mu_0=1/16$ and $\rho=0.301262103$ for $\mu_0=\infty$. (We report the
  number of digits that appear justified from the polynomial fit.)
If the series corresponds to a family of solutions that depend
  analytically on the amplitude $\epsilon$, we expect a singularity in that
  family at some $\epsilon_*\in\mbb C$ with $|\epsilon_*|\approx\rho$.  If
  $\rho_\nu^{-1}$ jumps to a new plateau height, it indicates that a
  new singularity has been detected that is even closer to the
  origin. Figure~\ref{fig:growthFD}(f) demonstrates this for the
  depth $\mu_0=1$.  Extrapolation to $s=0$ through order
  $\nu_\text{max}=55$ suggests there is a singularity $\epsilon_*$ with
  $|\epsilon_*|\approx\rho$ and $\rho^{-1}=q(0)=4.71$. But there is a
  transition region where $\rho_\nu^{-1}$ stops following
  $q(s_\nu)$ extrapolated from $10\le\nu\le55$ and instead jumps rapidly
  from $\rho_{59}^{-1}=4.562$ to $\rho_{73}^{-1}=7.172$. It then
  stabilizes at $\rho_{\nu}^{-1}\approx\rho_{149}^{-1}=7.179299$ for
  $85\le\nu\le149$, suggesting another singularity $\epsilon_*$ with
  $|\epsilon_*|\approx\rho_{149}$. We find that the plateau regions after a
  jump occurs are extremely flat.  For example, all 17 digits we
recorded for $\rho_\nu^{-1}$ remain unchanged for $10\le\nu\le109$ in
the case of $\mu_0=16$.  It is not helpful to fit the data after a
jump with anything but a constant function.

We expect that at every non-resonant finite depth, there will
eventually be infinitely many stairstep jumps in
  $\rho_\nu^{-1}$ that cause $\rho=\lim_{\nu\to\infty}\rho_\nu=0$.
This is consistent with previous studies
\cite{roberts:81,roberts:83,marchant:87} that concluded that
asymptotic expansions of standing waves and short-crested waves have a
zero radius of convergence for all depths.  However,
Theorem~\ref{thm:nonresonant} shows that for almost every fluid depth,
the small divisors that lead to the jumps in $\rho_\nu^{-1}$ arise
infrequently as $p$ increases, which limits how fast $\rho_\nu$
approaches zero. In the case $\mu_0=1/16$, which we did not optimize
in advance, we find that $\lambda_p\ge\lambda_2\approx0.0155$ for
$2\le p\le 24773$. So one will not encounter a divisor that is smaller
than the first one in practice. There are no bifurcations associated
with $|\lambda_{2,2}|$ being small when $\mu_0=1/16$ since
$(p,j)=(5,3)$ is the first harmonic resonance in finite depth;
see \S\ref{sec:res:depths} above and the tables of
  resonant depths in \cite{marchant:87}. Consistent with this, we find
  that the closest Pad\'e poles to the origin lie on the imaginary
axis up to the order we computed ($\nu=109$). These results on
imaginary Pad\'e poles and a discussion of the importance of
$\rho_\nu$ approaching zero slowly as $\nu\to\infty$ for the
convergence of the Pad\'e approximants at larger amplitudes are
included in the supplementary material and will be investigated
further in future work.

\subsection{Imperfect bifurcations computed using a shooting method}
\label{sec:bif}

To verify the correctness of the asymptotic expansions and benchmark
their accuracy, we compare them quantitatively to standing waves
computed via numerical continuation using the overdetermined shooting
method of Wilkening \& Yu \cite{water2}.  We focus on the cases
$\mu_0=3/5$ and $\mu_0=1$ as they both possess interesting bifurcation
structures associated with nearby harmonic resonances
\cite{mercer:94,smith:roberts:99}. The depth $\mu_0=1/16$ is studied
in the electronic supplementary material. Following
\cite{mercer:92,water2}, we exploit a symmetry to cut the simulation
time down to a quarter period. This rules out symmetry-breaking
  bifurcations and enforces the ansatz
  \eqref{ansatz} and \eqref{fdFourierSeriesMu}--\eqref{fdFourierSeriesGamma}.
In the current paper, $t=0$
corresponds to a maximum-amplitude `rest' state. With this
convention, the initial conditions of the shooting method are imposed
at $t_0=-\pi/2$ and the objective function of the shooting method
drives the velocity potential to zero at the final time, $t_N=0$. Here
we discretize time into $N\ge1$ segments $[t_{n-1},t_n]$ with
$-\pi/2=t_0<t_1<\cdots<t_N=0$ and use a uniform grid with $I_n$
timesteps and $M_n$ gridpoints on each segment,
\begin{equation}\label{eq:adapt:grids}
  \begin{aligned}
    &t_{ni} = t_{n-1}+i\Delta t_n, \\[-2pt]
    &x_{nm} = 2\pi m/M_n,
  \end{aligned}
  \qquad
  \Big(\Delta t_n = \frac{t_n-t_{n-1}}{I_n}\,,\,
      1\le n\le N\,,\, 0\le i\le I_n\,,\, 0\le m< M_n \, \Big).
\end{equation}
We use the eighth-order Dormand/Prince Runge-Kutta method
\cite{hairer:I} for double-precision calculations and a
fifteenth-order spectral deferred correction method \cite{dutt}
for quadruple-precision calculations. The shooting method employs a
graph-based formulation of the water wave equations expressed in terms
of wave height $\eta^\text{graph}(x,t)$ and velocity potential
$\varphi^\text{graph}(x,t)$. In the code, the time variable is $\breve
t = \big(\frac14+\frac{t}{2\pi}\big)T$, which evolves from $\breve
t_0=0$ to $\breve t_N=T/4$, but we use dimensionless time $t$
here for simplicity. To compute standing waves, we minimize the
objective function
\begin{equation}\label{eq:obj:fcn}
  f(\theta) = \frac12 r(\theta)^Tr(\theta), \qquad r_m(\theta) =
  \frac1{\sqrt{M_N}}\varphi^\text{graph}(x_{Nm},t_N), \qquad
  \left(\begin{aligned} 0\le m &< M_N \\[-4pt] t_N &= 0 \end{aligned} \right),
\end{equation}
where $r(\theta)$ is the vector in $\mbb R^{M_N}$ with components $r_m(\theta)$, and
$\theta$ contains the period and initial Fourier modes of the
solution up to a given order $d$,
\begin{equation}\label{eq:dof}
  \theta \; = \;
  \Big( T, \;\,
  \big\{\hat\eta^\text{graph}_{2l}(t_0)\big\}_{l=1}^{\lfloor d/2 \rfloor}, \;\,
  \big\{\hat\varphi^\text{graph}_{2l-1}(t_0)\big\}_{l=1}^{\lceil d/2 \rceil}\,\Big), \qquad
    (t_0=-\pi/2).
\end{equation}
Here $\hat\eta^\text{graph}_k(t) =
\frac1{M_n}\sum_{m=0}^{M_n-1}\eta^\text{graph}(x_{nm},t)e^{-ikx_{nm}}$
and $\hat\varphi^\text{graph}_k(t)$ are computed via the FFT from the
grid values of the wave profile and surface velocity potential
(assuming $t_{n-1}\le t\le t_n$).  The floor and ceiling functions
satisfy $\lfloor d/2\rfloor+\lceil d/2\rceil=d$ for all integers
$d\ge1$.

The components of $\theta$ in equation \eqref{eq:dof} are real and all other
Fourier modes of the initial condition are set to zero.  This imposes
the desired symmetry \cite{mercer:92,water2} that
$\eta^\text{graph}(x,t_0)$ is an even function of $x$ that remains
unchanged if $x$ is shifted by $\pi$ while
$\varphi^\text{graph}(x,t_0)$ is an even function that changes sign
when $x$ is shifted by $\pi$.  One of the degrees of freedom in equation
\eqref{eq:dof} is specified as an amplitude parameter in the numerical
continuation algorithm and is removed from the list of unknowns when
minimizing the objective function, so $\theta\in\mbb R^d$. We use
$\hat\varphi^\text{graph}_1(t_0)$ as a default; $T$ to line up the
periods of labeled solutions such as $ABC$ in
figure~\ref{fig:bif:evol:06}; and the most resonant component of
$\theta$ to navigate turning points in
$\hat\varphi^\text{graph}_1(t_0)$, e.g., on bifurcation branches.
Alternative amplitude parameters include crest acceleration
\cite{mercer:92,mercer:94,smith:roberts:99,water1} and energy
\cite{waterTS}.  Further details on the boundary integral method used
to evolve the water wave equations, our nonlinear least-squares
solver, and the variational equations used to compute $J=\nabla_\theta
r$ are given in \cite{water2}.

The shooting method results need to be converted to conformal
variables in order to compare them to the asymptotic expansions of
\S\ref{sec:recursive:alg}. Focusing on the initial conditions,
we use Newton's method to solve $F[\eta]=0$, where $\eta(\alpha)$ is
shorthand for $\eta(\alpha,t_0)$ and
\begin{equation}\label{eq:graph2conf}
  F[\eta](\alpha) = \eta(\alpha) - \eta^\text{graph}\big( \xi(\alpha),t_0 \big),
  \qquad \xi(\alpha) = \alpha + H^{h,\opn{coth}}[\eta](\alpha).
\end{equation}
Here $H^{h,\opn{coth}}$ is the variant of the Hilbert transform with
symbol $\hat H^{h,\opn{coth}}_k=-i\coth(kh)$, and
$h=\mu_0+\frac1{2\pi}\int_0^{2\pi}\eta(\alpha)\,d\alpha$ is the fluid
depth in conformal space, which is calculated from $\eta$ as a
preliminary step in the evaluation of $F[\eta]$. The shooting method
places the bottom boundary at $y=-\mu_0$ and ensures that
$\int_0^{2\pi}\eta^\text{graph}(x,t_0)\,dx=0$.  Equation
\eqref{eq:graph2conf} is imposed at the collocation points
$\alpha_j=2\pi j/M_1$, $0\le j<M_1$, with $M_1$ as in equation
\eqref{eq:adapt:grids}, and $h$ is computed via the trapezoidal rule
at these same points, which preserves the spectral accuracy of the
solution. We also compute $\varphi(\alpha,t_0) =
\varphi^\text{graph}\big(\xi(\alpha),t_0\big)$ to convert the surface
velocity potential to conformal variables.

\begin{figure}[t]
  \begin{center}
    \includegraphics[width=\linewidth,trim=9 0 8 0,clip]{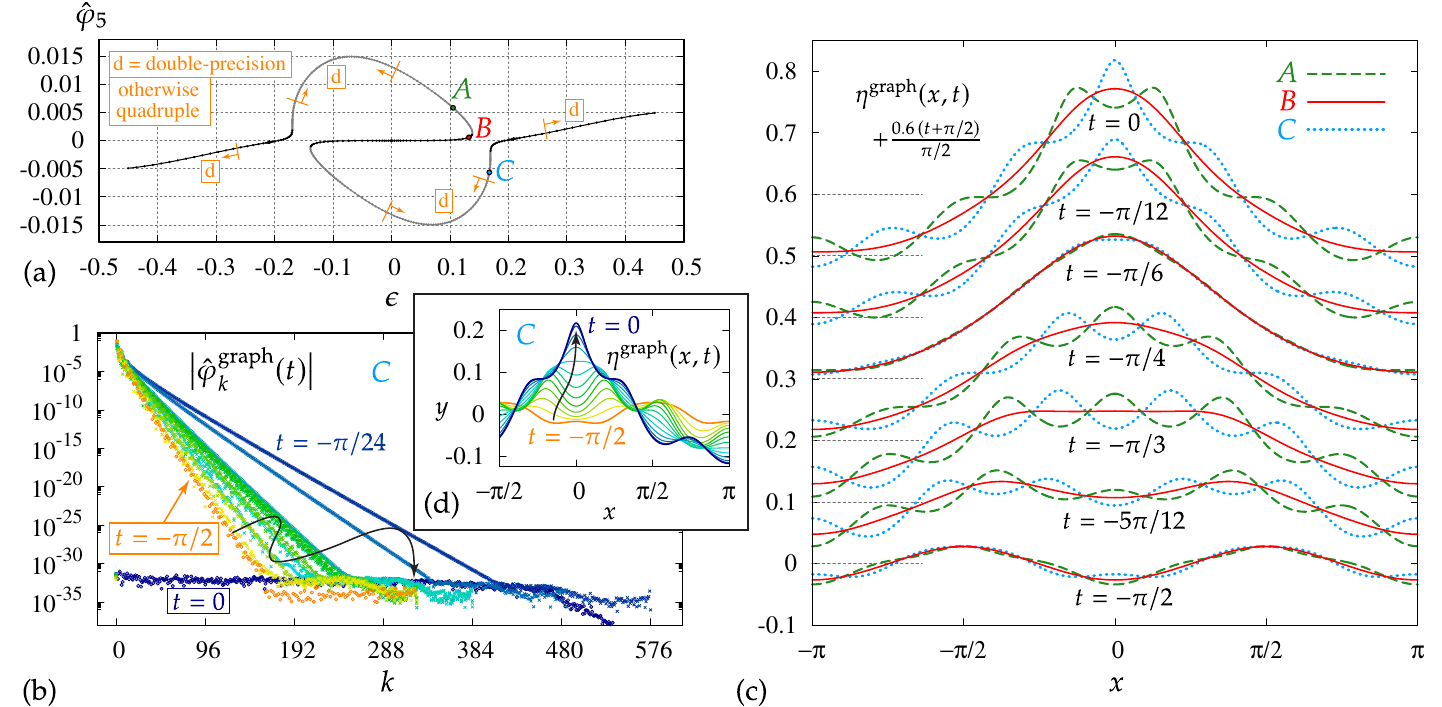}
  \end{center}
  \caption{\label{fig:bif:evol:06} Standing waves of depth
    $\mu_0=3/5$.  (a) The nearby $(5,3)$ harmonic resonance at
    $\mu_0=0.6232354$ leads to an imperfect bifurcation with a large
    gap separating solution $B$ from solution $C$. (b) Snapshots of
    the surface velocity potential (in Fourier space) and the wave
    profile (inset figure) for $t\in\mathscr{T}_{12}$ for solution
    $C$.  (c) The wave profile for solutions $A$, $B$ and $C$ at times
    $t\in\mathscr{T}_6$. The dashed gray horizontal lines near
    $x=-\pi$ are vertical offsets added for clarity. }
\end{figure}

Figure~\ref{fig:bif:evol:06} shows the results of the shooting method
for $\mu_0=3/5$, which is close to the resonant depth
$\mu_0=0.6232354$ where $\lambda_{53}=0$. Standing waves near this
resonance have been studied before
\cite{mercer:94,okamura:99,smith:roberts:99}, but we have new results
to report.  After converting the initial conditions from the shooting
method to conformal variables, we compute the Fourier expansions
\begin{equation}\label{eq:shooting:fourier}
  \eta(\alpha,t_0) = \sum_{p\in2\mbb Z} \hat\eta_pe^{ip\alpha}, \qquad
  \varphi(\alpha,t_0) = \sum_{p\in1+2\mbb Z} \hat\varphi_pe^{ip\alpha}, \qquad
  \big(t_0 = -\pi/2\big)
\end{equation}
numerically via the FFT, up to $|p|\le M_1/2$.
Figure~\ref{fig:bif:evol:06}(a) shows a bifurcation plot of
$\hat\varphi_5$ versus the amplitude,
$\epsilon=\frac12\big[\eta(0,0)-\eta(\pi,0)\big]$. A spatial shift by
$\pi$ leads to another standing wave with $\epsilon$ replaced by
$-\epsilon$. At $t=t_0$, this gives
$\varphi(\alpha,t_0;-\epsilon)=\varphi(\alpha-\pi,t_0;\epsilon)
=-\varphi(\alpha,t_0;\epsilon)$, which explains the odd symmetry of
the plot. There is an imperfect bifurcation near $\epsilon=0.15$ that
leads to a bubble structure in the bifurcation plot. Okamura observed
a similar structure within weakly nonlinear theory near this same
resonant depth \cite{okamura:99}.  Solutions $A$, $B$ and $C$
demonstrate typical behavior \cite{smith:roberts:99} of standing waves
near a resonant depth. Figure~\ref{fig:bif:evol:06}(c) shows
snapshots of these three solutions plotted on top of each other
at the dimensionless times
\begin{equation}\label{eq:scrT:def}
  \mathscr{T}_n = \Big\{\Big(\frac{j-n}{n}\Big)\Big(\frac\pi2\Big)
  \;\;\Big\vert\;\; 0\le j\le n\Big\},
\end{equation}
where $n=6$ in this plot. They were selected to have identical
periods, $T=8.45592$. The non-uniqueness is due to three possible
amplitudes of a secondary standing wave that evolves on top of the
primary wave and has features similar to the nearby harmonic resonance
($p=5$ spatial cycles and $j=3$ temporal cycles).  For solution $C$,
the secondary wave is in phase with the primary wave, which sharpens
the crest at $t=0$.  For solution $A$, it is out of phase, causing a
dimple to form at the wave crest at $t=0$. Since the secondary wave is
not active for solution $B$, we define the primary wave to be solution
$B$. Solutions $A$ and $C$ appear to oscillate around solution $B$,
though each is its own standing-wave solution of the fully nonlinear
water wave equations.

Figure~\ref{fig:bif:evol:06}(b) shows the time-evolution of the
Fourier modes of the surface velocity potential of solution $C$ in the
graph-based formulation of the shooting method. The modes decay exponentially
with respect to the wave number $k$, but the decay rate
fluctuates in time following the wavy black arrow in
figure~\ref{fig:bif:evol:06}(b). The
modes are also color coded, evolving from orange to yellow to green to
blue to navy, matching the time evolution of
  figure~\ref{fig:bif:evol:06}(d).  At the final time, $t=0$, the
velocity potential is driven nearly to zero by minimizing $f(\theta)$
in equation \eqref{eq:obj:fcn} to $4.8\times 10^{-62}$ so that all the Fourier
modes $\hat\varphi^\text{graph}_k(t_N)$ are below $10^{-31}$.  Except
in the regions indicated in figure~\ref{fig:bif:evol:06}(a),
all solutions were computed in quadruple-precision with $f(\theta)$
minimized below $10^{-60}$.

Figure~\ref{fig:bif:evol:1} shows the shooting method results for the
$\mu_0=1$ case. There are three nearby resonant depths
$\mu_0\in\{1.0397,\,0.9730,\,0.9962\}$ that lead to a cluster of small
divisors $\lambda_{p,j}$ with $(p,j)\in\{(7,3),\,(12,4),\,(19,5)\}$;
see figure~\ref{fig:smallDiv} of the electronic supplementary
material.  Figure~\ref{fig:bif:evol:1}(a,b,c) shows how the
nearly resonant Fourier modes of the initial condition
($\hat\varphi_7$, $\hat\eta_{12}$ and $\hat\varphi_{19}$) depend on
the amplitude $\epsilon$. These plots show different projections of
the same set of standing wave solutions and reveal a rich bifurcation
structure that has not been reported on before. As with the
$\mu_0=3/5$ case, when three branches meet at an imperfect pitchfork
bifurcation, solutions on the two side branches exhibit
higher-frequency, secondary standing waves oscillating with one of two
temporal phases on top of the primary wave. Solutions on the center
branch remain calm, without exciting this secondary wave. Similar
solutions with secondary standing waves have been reported previously
in
\cite{mercer:94,smith:roberts:99,okamura:99,water2,shelton:stand,rycroft:13}.
These secondary waves can deviate visibly from their form in the
linear water wave regime. This is demonstrated in
figures~\ref{fig:evol1defBig}--\ref{fig:bif027fit} in the electronic
supplementary material, which show solutions $DEF$, $HIJ$ and $KLM$ in
figure~\ref{fig:bif:evol:1} as well as the secondary wave associated
with another bifurcation at $\epsilon=0.27380806$.
In figure~\ref{fig:bif:evol:1}(a,b,c),
solution $G$ is the highest wave (with the
  largest crest-to-trough height) for
$\mu_0=1$, which will be discussed further in \S\ref{sec:pade}
below. Solution $O$ is the zero-amplitude flat rest state.

\begin{figure}[t]
  \begin{center}
    \includegraphics[width=\linewidth,trim=9 0 0 0,clip]{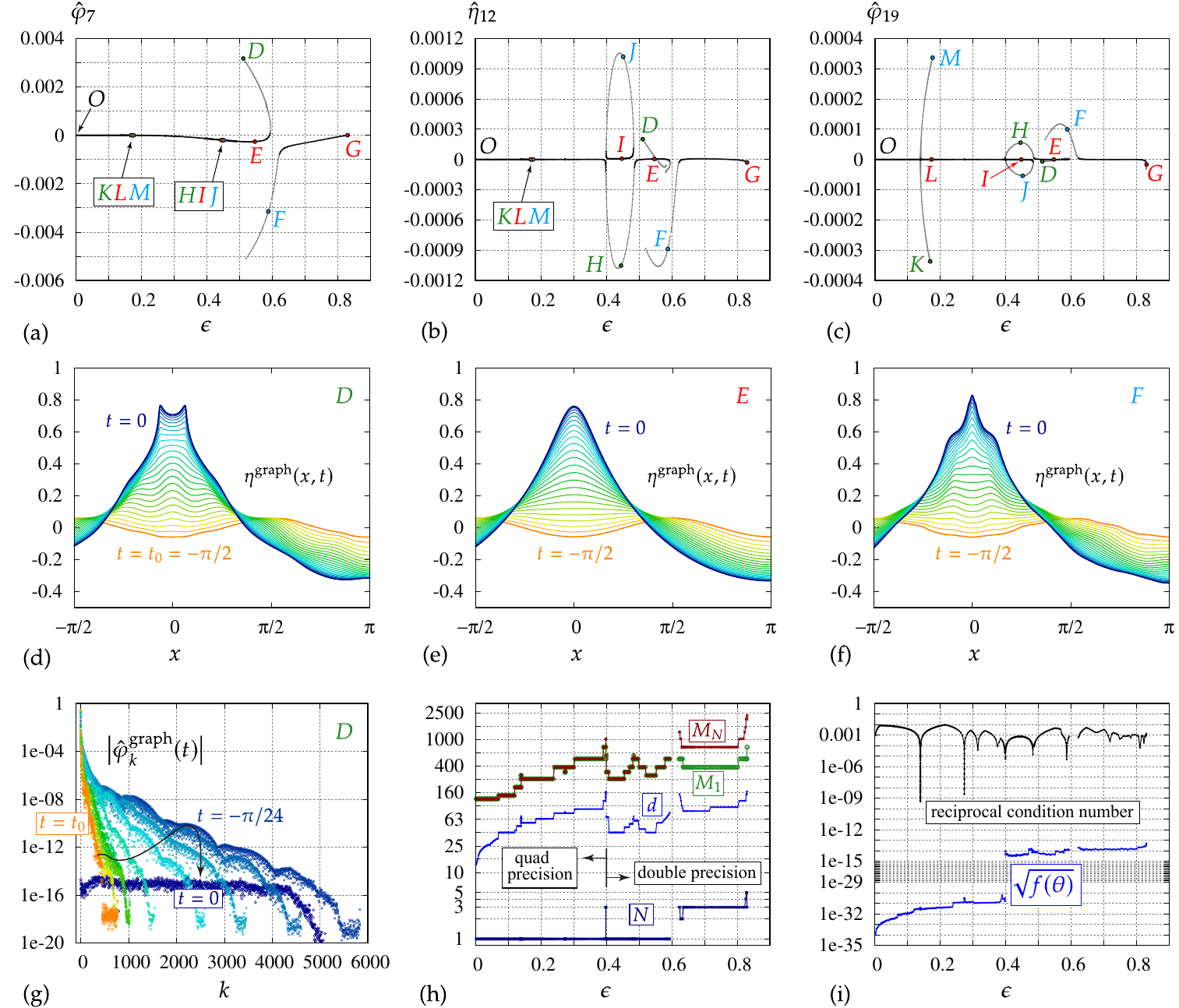}
  \end{center}
  \caption{\label{fig:bif:evol:1} Standing waves of depth $\mu_0=1$.
    (a,b,c) Bifurcation plots of $\hat\varphi_7$, $\hat\eta_{12}$ and
    $\hat\varphi_{19}$ versus $\epsilon$ reveal numerous imperfect
    bifurcations.  (d,e,f) Wave profile evolution for solutions $DEF$
    at dimensionless times $t\in\mathscr{T}_{24}$.  (g) Fourier mode
    evolution of $\hat\varphi^\text{graph}_k(t)$ for solution $D$ at
    times $t\in\mathscr{T}_{12}$.  (h) Parameters $d$, $N$, $M_1$ and
    $M_N$ of the shooting method for the solutions corresponding to
    the black markers in (a,b,c).  (i) Condition number and
    minimized value of $\sqrt{f(\theta)}$. }
\end{figure}

The $(7,3)$ resonance leads to the imperfect bifurcation separating
solution $F$ from solution $E$ in figure~\ref{fig:bif:evol:1}(a). It
became too expensive to maintain double-precision accuracy beyond
solution $D$, but if $\mu_0$ is increased to 1.03, this branch can be
continued further and meets up with the odd reflection (under
  $\epsilon\to-\epsilon$) of the branch passing through solution $F$;
see \cite{water2}.  Solutions $E$ and $F$ were chosen to match the
period, $T=7.26730$, of solution
$D$. Figure~\ref{fig:bif:evol:1}(d,e,f) shows solutions $DEF$
at dimensionless times $t\in\mathscr{T}_{12}$ from equation
\eqref{eq:scrT:def}. At $t=0$, the secondary standing wave causes a
dimple to form at the crest of solution $D$ and sharpens the crest of
solution $F$. The $(12,4)$ resonance leads to two imperfect
bifurcations on either side of solution $I$ in
figure~\ref{fig:bif:evol:1}(b) while the $(19,5)$ resonance
leads to the side branches passing through solutions $K$ and $M$ in
figure~\ref{fig:bif:evol:1}(c). The value of $\rho_\nu$ for
$81\le\nu\le149$ in figure~\ref{fig:growthFD}(b) is 0.139289, and
coincides with the amplitude $\epsilon$ where the imperfect
bifurcation to the $K$ and $M$ side branches occurs.  These side
branches (gray markers) show no sign of reconnecting with the main
branch (black markers), so we stopped when the calculations became
expensive. We will refer to the connected components of the main
branch as the center branches. To stay on the main branch at an
imperfect bifurcation, one has to jump from one center branch to the
next.

The shooting method parameters of these main branch solutions are
shown in figure~\ref{fig:bif:evol:1}(h). The side branch
solutions in figure~\ref{fig:bif:evol:1}(a--c)
are omitted to make the plot in figure~\ref{fig:bif:evol:1}(h)
single-valued. We switched from
quadruple-precision to double-precision at $\epsilon=0.4$ due to the
high cost of carrying out the shooting method with larger grid
sizes. We also used adaptive grids with $N\ge2$ in equation
\eqref{eq:adapt:grids} for the larger problem sizes in double and
quadruple-precision. The Fourier mode evolution of solution $D$, which
has the most `active' Fourier modes among the solutions we computed,
is shown in figure~\ref{fig:bif:evol:1}(g).  There are only 450
modes of magnitude larger than $10^{-14}$ at $t=t_0=-\pi/2$, whereas
there are close to 5000 at $t=0$. By evolving from $t=-\pi/2$ to 0
instead of $0$ to $\pi/2$, as was previously done for standing waves
\cite{mercer:94}, we reduced the dimension $d$ of $\theta$ in equation
\eqref{eq:dof} by a factor of $11$. By increasing the grid size
adaptively from $M_1=1536$ to $M_7=11664$ in this case, 60\% of the
cost goes into evolving the solution and its first variation with
respect to $\theta$ (a matrix with $d=450$ columns) through the last
$17\%$ of the simulation time. Figure~\ref{fig:bif:evol:1}(i)
shows the square root of the minimized value of the objective
  function $f(\theta)$ from equation \eqref{eq:obj:fcn} for each
solution on the main branch.  We minimize $f(\theta)$ until
floating-point error prevents further reduction. In all cases,
$\sqrt{f(\theta)}$ was reduced below $10^{-30}$ in quadruple-precision
and below $7\times 10^{-14}$ in double-precision.

Also plotted in figure~\ref{fig:bif:evol:1}(i) is the
reciprocal of the condition number of the Jacobian in the shooting
method on the final iteration of each Levenberg-Marquardt
minimization.  Each downward spike corresponds to an approximate
resonance of the nonlinear problem, where there are solutions of the
linearization about the standing wave that behave like secondary
standing waves. At a perfect bifurcation, the Jacobian is singular
\cite{quasi:bif}, and near an imperfect bifurcation, the Jacobian is
nearly singular. In the electronic supplementary material, we show how
to use the right singular vector corresponding to the smallest
singular value of the Jacobian to identify which harmonic resonance
is activated by following an imperfect bifurcation, and to
observe how strongly the wave profile of the associated secondary wave
is distorted away from being a multiple of $\cos(px)\cos(jt)$ due to
nonlinear interactions with the primary wave and itself.

\subsection{Pad\'e approximation}
\label{sec:pade}

Next we compare the unit-depth shooting method results for the
  period $T$ and a nearly resonant initial Fourier mode of the surface
  velocity potential, namely $\hat\varphi_{19}$ from equation
  \eqref{eq:shooting:fourier}, to Pad\'e approximants of their Stokes
expansions. We use continued fractions \cite{cuyt,lorentzen:book}
to efficiently represent the Pad\'e approximants of a power
series. Following \cite{cuyt,lorentzen:book}, we employ the notation
\begin{equation}
  \gaussk_{n=0}^\infty\frac{a_n}{b_n} =
  \frac{a_0}{b_0\jd+\frac{a_1}{b_1\jd+\frac{a_2}{b_2+\raisebox{-7pt}{$\ddots$}}}}, \qquad
  \frac1{\epsilon^2}\gaussk_{n=0}^2\frac{d_n\epsilon^2}{1} =
  \frac{d_0}{1\jd+\frac{d_1\epsilon^2}{1\jd+\frac{d_2\epsilon^2}{1}}},
\end{equation}
where the latter formula illustrates a finite truncation $\gaussk_{n=0}^N\cdots$
with $N=2$.  We expand the period first as a power
series and then as a continued fraction
\begin{equation}\label{eq:T:expand}
  T \; = \;
  \sum_{n=0}^\infty \tau_n\epsilon^{2n}
  \; = \; \frac1{\epsilon^2}\gaussk_{n=0}^\infty\frac{d_n\epsilon^2}{1},
\end{equation}
where the equal signs are intended in the sense of formal power series
\cite{cuyt}. Setting $g=1$ and $L=2\pi$ in equation \eqref{eq:ZFW:def} to match
the parameters used in the shooting method gives $T=2\pi\sqrt{S}$. For
any $N\ge0$, the coefficients $\tau_0,\dots,\tau_N$ are uniquely
determined from $\sigma_0,\dots,\sigma_N$ in the expansion
\eqref{stokesS} of $S$ by matching terms in
$(\sqrt{S})(\sqrt{S})=S$. We then use the quotient-difference (qd)
algorithm for continued fractions \cite{cuyt,lorentzen:book} to
compute $d_0,\dots,d_N$ from $\tau_0,\dots,\tau_N$. Note that $d_N$
only affects $\tau_n$ for $n\ge N$ in equation \eqref{eq:T:expand}. Similarly,
let $\tilde\tau_{p,n}$ and $\tilde d_{p,n}$ denote the coefficients of
the expansions
\begin{equation}\label{eq:hat:eta:phi:cfrac}
  \left\{ \!\!\!\begin{array}{cc} \hat\eta_p, & p\text{ even} \\[2pt] \hat\varphi_p, & p\text{ odd} \end{array}
  \!\!\!\right\}
  \quad = \quad \sum_{n=0}^\infty \tilde\tau_{p,n}\epsilon^{p+2n}
  \quad = \quad \frac{\epsilon^p}{\epsilon^2}\gaussk_{n=0}^\infty\frac{\tilde d_{p,n}\epsilon^2}{1},
  \qquad \big(p\ge0\big).
\end{equation}
We compute the $\tilde\tau_{p,n}$ from
$\eta(\alpha,t_0)=\im\{Z(\alpha,t_0)\}$ and $\varphi(\alpha,t_0) =
\frac{L^2}{2\pi T}\,\re\big\{F(\alpha,t_0)\big\}$ as follows. Setting
$L=2\pi$, we use equation \eqref{eq:im:Z:expand} and the analogous
equation for $\re\{F(\alpha,t_0)\}$ to obtain
\begin{equation}
  \hat\eta_p = a_p(t_0)\frac{\sinh\!\big(ph(t_0)\big)}{
    2\cosh\!\big(p\mu_0\big)}, \quad
  \left(\substack{\jd p\ge2\\[3pt]\jd p\text{ even}}\right), \qquad
  T\hat\varphi_p = 2\pi c_p(t_0) \frac{\cosh\!\big(ph(t_0)\big)}{
    2\cosh\!\big(p\mu_0\big)}, \quad
  \left(\substack{\jd p\ge1\\[3pt]\jd p\text{ odd}}\right)
\end{equation}
and  $\hat\eta_0 = [h(t_0)-\mu_0]$.
The $\epsilon^{p+2n}$ term of $a_p(t_0)\sinh\!\big(ph(t_0)\big)$ is
$\sum_{m=0}^n \alpha_{p,m}(t_0)s_{p,n-m}(t_0)$, with a similar formula
for $c_p(t_0)\cosh\!\big(ph(t_0)\big)$. Since the expansion of $T$ is
known from equation \eqref{eq:T:expand}, solving $T\hat\varphi_p=\cdots$ for
$\hat\varphi_p$ is also a simple matter of matching terms order by
order.

In our 192-digit calculation for $\mu_0=1$, we reached order $\nu=149$
in equation \eqref{fdGammaN} and computed $\tau_n$ and $d_n$ for
$0\le n\le 74$ and $\tilde\tau_{19,n}$ and $\tilde d_{19,n}$ for $0\le
n\le(\nu-19)/2=65$ via the qd-algorithm \cite{cuyt}.  Let us briefly
let $x$ denote $\epsilon^2$ rather than a spatial variable.  The $[m/k]$
Pad\'e approximant of the formal
power series $\sum_{n=0}^\infty\tau_nx^n$ is defined
\cite{cuyt,lorentzen:book} as the rational function
\begin{equation}\label{eq:pade:def}
  [m/k]_\tau(x) = P(x)/Q(x)
\end{equation}
that satisfies $P(x)-Q(x)\sum_{n=0}^{m+k}\tau_{n}x^n=O(x^{m+k+1})$,
where $P$ and $Q$ are polynomials of degree $m$ and $k$, respectively,
and $Q(0)=1$. The truncated continued fraction
$\frac1{x}\gaussk_{n=0}^N \frac{d_nx}{1}$ gives $[m/k]_\tau(x)$ with
$m=\lfloor N/2\rfloor$ and $k=\lceil N/2\rceil$, so that $m+k=N$ and
$m=k$ or $m=k-1$. Thus, truncating equations \eqref{eq:T:expand} and
\eqref{eq:hat:eta:phi:cfrac} to include the available terms
$\{d_n\}_{n=0}^{74}$ and $\{\tilde d_{19,n}\}_{n=0}^{65}$
gives the $[37/37]_{\tau}(\epsilon^2)$ and
$\epsilon^{19}[32/33]_{\tilde\tau_{19}}(\epsilon^2)$ Pad\'e
approximants of $T$ and $\hat\varphi_{19}$,
respectively.  We monitored the
floating-point arithmetic errors as explained in the electronic
supplementary material.

\begin{figure}[t]
  \begin{center}
    \includegraphics[width=\linewidth]{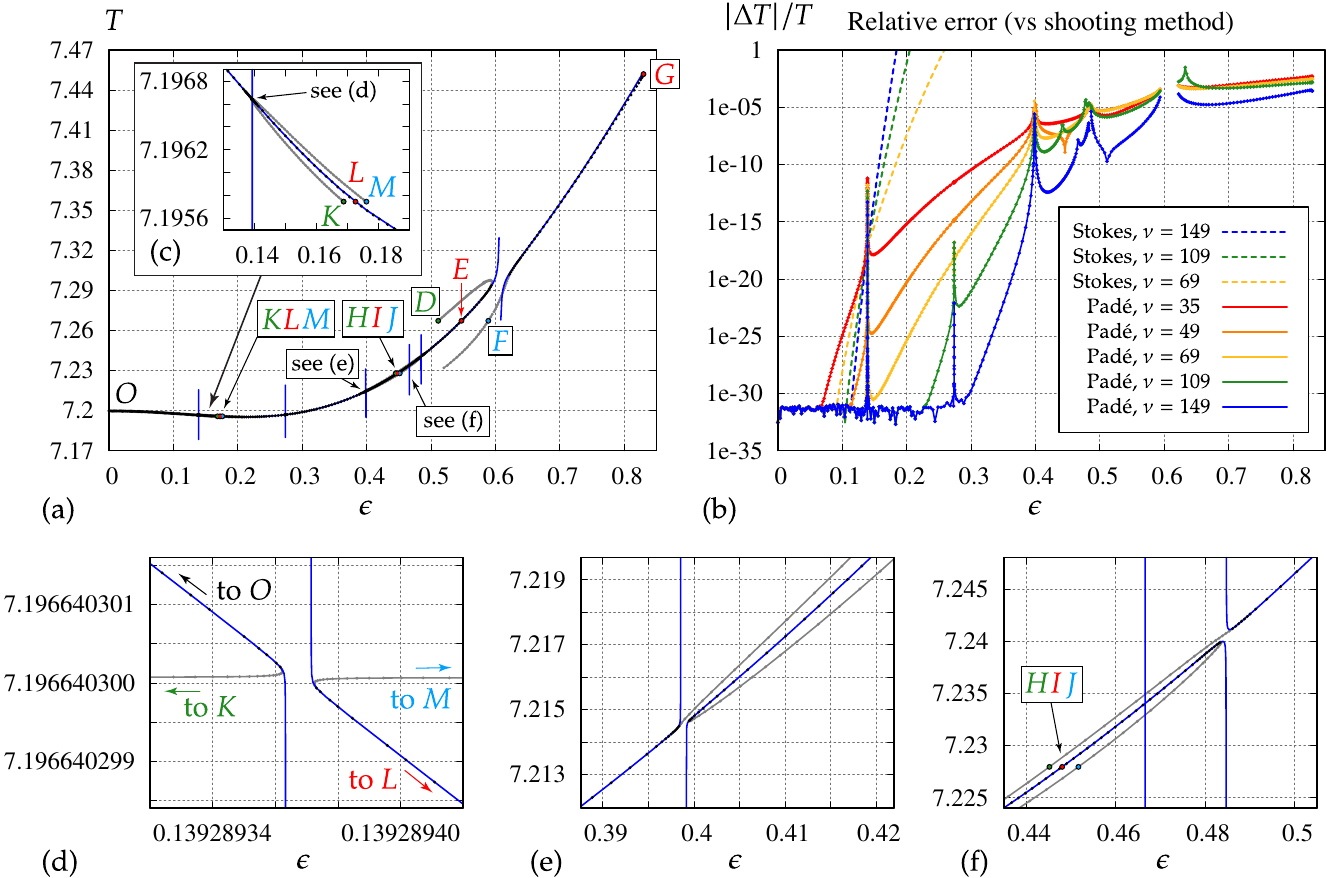}
  \end{center}
  \caption{\label{fig:bif:T1} Dependence of the period $T$ on the
    amplitude $\epsilon$ for standing waves of unit depth. (a) The
    black and gray markers are the shooting method results for
    solutions on the main and side branches, respectively. The blue
    curves are the $149^\text{th}$-order Pad\'e approximation, which
    has poles where the curves diverge from the main branch. (b)
    Relative error between the Stokes and Pad\'e expansion
      solutions at various orders and the shooting method results.
      (c,d,e,f) a closer look at the alignment of Pad\'e poles and
      imperfect bifurcations. (d) the gap contains four Pad\'e poles
      and zeros. }
\end{figure}

Figure~\ref{fig:bif:T1}(a) shows the period $T$ of the unit-depth
standing wave solutions of figure~\ref{fig:bif:evol:06}.  The blue
curve shows $[37/37]_\tau(\epsilon^2)=P(\epsilon^2)/Q(\epsilon^2)$,
which is a $149^\text{th}$-order approximation of $T$ since the first
incorrect term of its Taylor series is $O(\epsilon^{150})$. This
Pad\'e approximation has pole singularities at values of $\epsilon$
where $Q(\epsilon^2)=0$. This causes the blue curve to approach
$\pm\infty$ as $\epsilon$ approaches each pole.  In
figure~\ref{fig:bif:T1}(b), we quantify the agreement between
the shooting method solutions and the Stokes and Pad\'e expansions of
various orders. As in figure~\ref{fig:bif:evol:1}, the shooting method
solutions have been grouped into main and side branches, plotted with
black and gray markers, respectively.  For each solution on the main
branch, we compute its crest to trough height $\epsilon$ and use that
for the amplitude of the Stokes and Pad\'e expansions.  We use the
shooting method solution as a reference when computing
the relative error $|\Delta T|/T$ in the plot, where
\begin{equation}\label{eq:Delta:T:phi}
  \Delta T = T^\text{expansion} - T^\text{shooting}, \qquad
  \Delta \hat\varphi_p = \hat\varphi_p^\text{expansion} - \hat\varphi_p^\text{shooting}.
\end{equation}
The dashed lines in
figure~\ref{fig:bif:T1}(b) are the Stokes expansions
$\sum_{n=0}^{(\nu-1)/2}\tau_n\epsilon^{2n}$ of order
$\nu\in\{69,\,109,\,149\}$ while the solid lines are the Pad\'e
expansions $[m/k]_\tau(\epsilon^2)$ of order
$2(m+k)+1=\nu$, where
$m=\lfloor(\nu-1)/4\rfloor$ and $k=\lceil(\nu-1)/4\rceil$ for
$\nu\in\{35,\,49,\,69,\,109,\,149\}$.

The error curves in figure~\ref{fig:bif:T1}(b) reach a floor of
$10^{-32}$ as that is the accuracy limit of the shooting method in
quadruple-precision. In this region, $\Delta T$ and
  $\Delta\hat\varphi_p$ in equation \eqref{eq:Delta:T:phi} are dominated by the
  error in the shooting method since more precision was used in the
  Pad\'e and Stokes expansions. The Stokes expansions are extremely
accurate up to $\epsilon=0.1$, but then rapidly lose accuracy as
$\epsilon$ crosses $\rho_\nu$ in equation \eqref{eq:rho:nu:def}, which is
$0.139289367$ for $\nu\in\{109,\,149\}$ and coincides with the
amplitude where the $KLM$ bifurcation occurs in
figures~\ref{fig:bif:evol:1}(c) and~\ref{fig:bif:T1}(a). For each
shooting-method solution on the main branch, $\epsilon$ has a fixed
value and the Pad\'e approximants continue to improve in accuracy as
$\nu$ increases, even if multiple bifurcations have occurred at
smaller values of $\epsilon$.  For a given order $\nu$, the
errors in the Pad\'e approximation are largest for large $\epsilon$,
and in regions where the main branch transitions into the side
branches. In these transition regions, the output values $T$ and
$\epsilon$ from the shooting method carry the most error due to the
large condition numbers observed there in
figure~\ref{fig:bif:evol:1}(i). Poles in the Pad\'e approximants
  of $T$ make it possible to more accurately follow the main branch
toward the side branches and jump across disconnections in the
bifurcation curves. However, once $\epsilon$ approaches a pole too
closely, accuracy is lost.

\begin{figure}[t]
  \begin{center}
    \includegraphics[width=\linewidth]{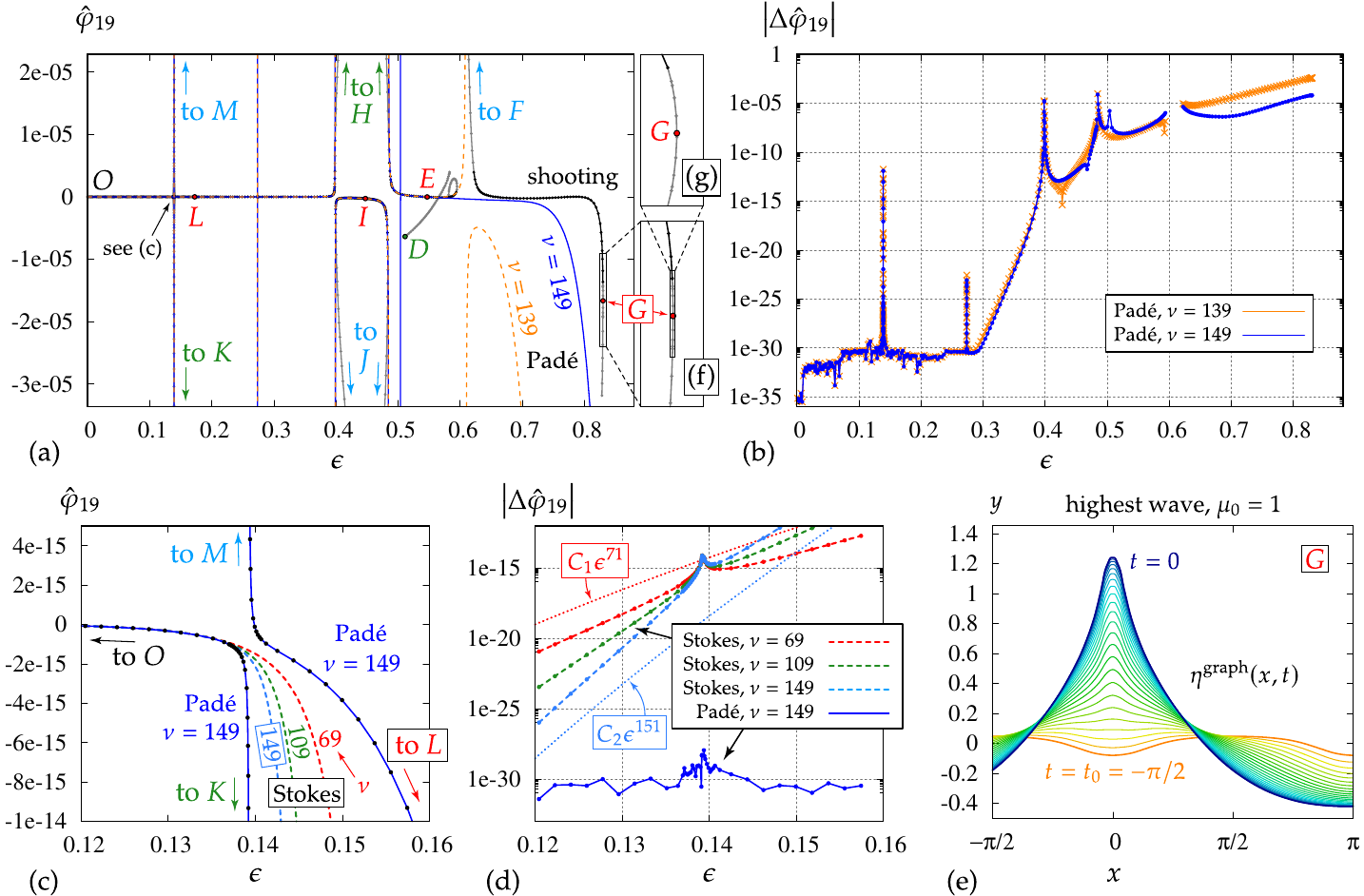}
  \end{center}
  \caption{\label{fig:bif:19} Unit-depth standing waves exhibit
    numerous imperfect bifurcations that are visible in plots of
    $\hat\varphi_{19}$ versus $\epsilon$. (a) The black and gray
    markers are the shooting method results. The solid blue curves and
    dashed orange curves show the $149^\text{th}$ and
    $139^\text{th}$-order Pad\'e approximants.  (c) A closer look at
    the first imperfect bifurcation.  The Pad\'e approximant cleanly
    jumps to the new branch while the Stokes expansions cannot.  (b,d)
    Difference between the shooting method and the Stokes or Pad\'e
    results in (a,c). (e) The highest wave, solution $G$ in
    (a,f,g), is not singular.}
\end{figure}

Figure~\ref{fig:bif:19}(a) shows $\hat\varphi_{19}$ versus
$\epsilon$ for unit-depth standing waves. The shooting method data is
the same as in figure~\ref{fig:bif:evol:1}(c), but the
$y$-axis has been scaled by a factor of 20 to better view the
imperfect bifurcations connecting the main-branch to the
side-branches. The blue and dashed orange curves are the
$149^\text{th}$ and $139^\text{th}$-order Pad\'e approximants,
$\epsilon^{19}[32/33]_{\tilde\tau_{19}}(\epsilon^2)$ and
$\epsilon^{19}[30/30]_{\tilde\tau_{19}}(\epsilon^2)$, respectively.
In addition to correctly navigating the $KLM$
bifurcation and both sides of the $HIJ$ bubble structure, both Pad\'e
approximants predict a bifurcation at $\epsilon=0.273808$. We computed
additional shooting method solutions with $\epsilon$ in this
neighborhood and find that there is indeed a bifurcation here
corresponding to the $(p,j)=(37,7)$ harmonic resonance. Details on how
the resonance was identified are given in the electronic supplementary
material. We would not have known to look for a bifurcation here
without computing the Pad\'e poles.  Figure~\ref{fig:bif:19}(b)
shows that both Pad\'e approximants plotted in panel (a) agree with
the shooting method to an absolute error of
$|\Delta\hat\varphi_{19}|<10^{-29}$ for $0\le\epsilon\le0.3$, except
in the transition regions to the side branches near
$\epsilon=0.139289$ and $\epsilon=0.273808$. We plotted the absolute
error since the shooting method involves computing $O(1)$ quantities
such as $T$ whereas $\hat\varphi_{19}$ is $O(\epsilon^{19})$. This
causes the shooting method to lose relative accuracy at small
amplitude in higher-frequency modes such as $\hat\varphi_{19}$.

Figure~\ref{fig:bif:19}(c,d) shows that the
$149^\text{th}$-order Pad\'e approximant of $\hat\varphi_{19}$
maintains absolute errors below $10^{-28}$ as it navigates the jump
across the disconnection in the bifurcation curve at
$\epsilon=0.139289$. The errors in (d) correspond to the points
  shown in (c). Further out on the side branches to $K$ and $M$, the
  errors are larger, leading to the spike in
  $|\Delta\hat\varphi_{19}|$ near $\epsilon=0.139289$ in (b). The
Stokes expansions cannot change course fast enough to follow the side
branch to solution $K$ in figure~\ref{fig:bif:19}(c), and
cannot jump branches since they are polynomials. Prior to this first
imperfect bifurcation, the error in the Stokes expansions converge at
the expected order, $O(\epsilon^{\nu+2})$.  This is demonstrated in
figure~\ref{fig:bif:19}(d) for $\nu\in\{69,149\}$ by comparing
the dashed error curves with the dotted lines showing
$C_1\epsilon^{71}$ and $C_2\epsilon^{151}$. The constants $C_i$ were
chosen to position the dotted lines near the corresponding error
curves without obscuring the plot.

The travelling water wave of maximum crest-to-trough height has
  a $120^\circ$ corner angle \cite{lhf:78}. By contrast, we find that
  the unit-depth standing wave of maximum height, solution $G$ in
  figure~\ref{fig:bif:19}(a,e,f,g), is smooth.  Snapshots of its time
evolution are given in figure~\ref{fig:bif:19}(e) at times
$t\in\mathscr{T}_{24}$ from equation \eqref{eq:scrT:def}. We located
solution $G$ using 8th degree polynomial interpolation
from the shooting method results to represent $\epsilon$ as a
function of $\hat\varphi_{19}$. The nine interpolation points
are the gray markers in figure~\ref{fig:bif:19}(g), where panels
  (f) and (g) give magnified views of the bifurcation curve in panel
  (a) near solution $G$.  Maximizing the polynomial gives
$\hat\varphi_{19}=-1.665013\times10^{-5}$ for solution $G$,
which has the maximum wave height of
$2\epsilon$ with $\epsilon=0.83016190$. The two Pad\'e
  approximants plotted in figure~\ref{fig:bif:19}(a) deviate from the
  shooting method solutions before solution $G$ is reached.  The
$139^\text{th}$-order approximant (orange dashed line) has a pole at
$\epsilon=0.6093$ that helps navigate the start of the $DEF$
bifurcation, but breaks down after that. The $149^\text{th}$-order
approximant (blue line) has a spurious pole at $0.5040$ and does not
`see' the $DEF$ bifurcation, but does a better job of tracking the
final turning point to the highest wave $G$. A pole is considered
spurious if it does not persist across multiple consecutive orders
$\nu$ of the Pad\'e approximation, or if there is no evidence of an
actual bifurcation at this location using the shooting method. The
pole in $[37/37]_\tau(\epsilon^2)$ at $\epsilon=0.4666$ in
figure~\ref{fig:bif:T1}(f) also appears to be spurious, and agrees
with a zero of the numerator to 11 digits. Such pole-zero pairs are
called Froissart doublets \cite{gonnet:13}.

\subsection{Branch cuts between turning points in the bifurcation curves}

Next we investigate how the Pad\'e approximants are able to navigate
the disconnections in the bifurcation curves so accurately.  The
$149^\text{th}$-order Pad\'e approximants of $T$ and
$\hat\varphi_{19}$ both contain four closely spaced poles and
zeros that lie in the gap between the turning points in $\epsilon$
shown in figure~\ref{fig:bif:T1}(d). From the shooting method, we find
that these turning points are located at $\epsilon_L=0.139289362345$
and $\epsilon_R=0.139289372366$.  The emergence of multiple Pad\'e
poles and zeros in this gap of width
  $\epsilon_R-\epsilon_L=1.0021\times10^{-8}$ suggests that each
function being approximated, $T(\epsilon)$ and
$\hat\varphi_{19}(\epsilon)$, has a branch cut from $\epsilon_L$ to
$\epsilon_R$ on the real $\epsilon$-axis \cite{stahl:97}.
Generalizing, it suggests that $a_p(t;\epsilon)$, $b_p(t;\epsilon)$,
$c_p(t;\epsilon)$, $S(\epsilon)$ and $h(t;\epsilon)$ in equation \eqref{Stokes}
each have a branch cut from $\epsilon_L$ to $\epsilon_R$. These
functions are real-valued for $\epsilon<\epsilon_L$ on the branches to
$O$ and to $K$ in figure~\ref{fig:bif:T1}(d), and for
$\epsilon>\epsilon_R$ on the branches to $M$ and to $L$, which
suggests that the branch points at $\epsilon_L$ and $\epsilon_R$ have
square root singularity structures.  There are no solutions in the gap
between $\epsilon_L$ and $\epsilon_R$. Analytic continuation around
the branch points into the gap (if it is possible) would lead to water
waves with a complex period, a complex fluid depth, etc., which would
be difficult to interpret physically.  The $149^\text{th}$-order
Pad\'e approximants of $T$ and $\hat\varphi_{19}$ also have two poles
in the gap between branches near $\epsilon=0.399$. Turning points have
been observed before \cite{smith:roberts:99,water2}, but we are not
aware of branch cuts being discussed previously in the context of
standing water waves.

\begin{figure}[t]
  \begin{center}
    \includegraphics[width=\linewidth]{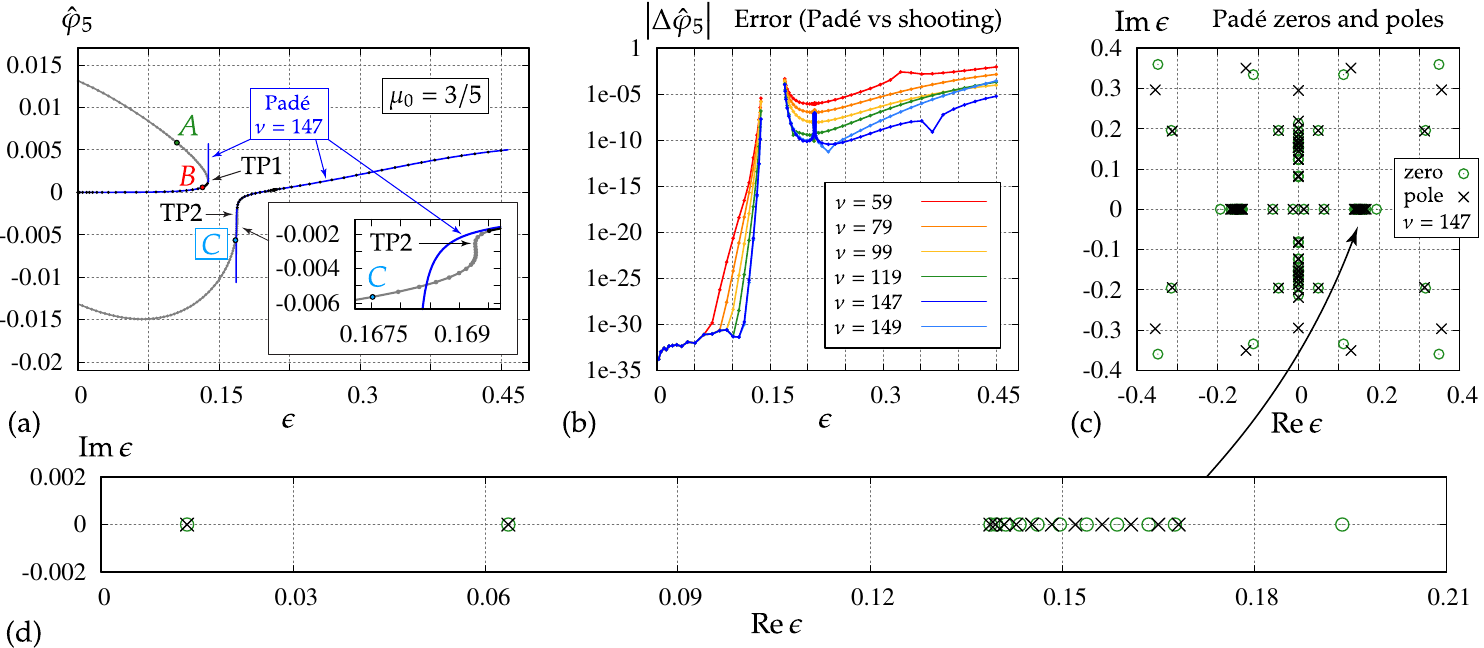}
  \end{center}
  \caption{\label{fig:pade:06} (a) The bifurcation curve for
      depth $\mu_0=3/5$ computed via the shooting method exhibits a
      large gap between the turning points labeled TP1 and TP2. (b)
      The error in the Pad\'e approximants of $\hat\varphi_5$ from equation
      \eqref{eq:shooting:fourier} generally decreases as the order
      $\nu$ increases, but not monotonically; $\nu=147$ is more
      accurate than $\nu=149$ for large $\epsilon$.  (c,d) a large
      cluster of interlaced poles and zeros of the 147th-order Pad\'e
      approximant of $\hat\varphi_5$ provides evidence of a branch cut
      in the gap.}
\end{figure}

Further evidence that standing wave families contain branch cuts is
given for the $\mu_0=3/5$ case in figure~\ref{fig:pade:06}. The blue
curve in figure~\ref{fig:pade:06}(a) is the
$147^\text{th}$-order Pad\'e approximant
$\epsilon^5[35/36]_{\tilde\tau_5}(\epsilon^2)$ of $\hat\varphi_5$
while the black and gray markers are the shooting method results of
figure~\ref{fig:bif:evol:06}.  In figure~\ref{fig:pade:06}(b),
we see that the error in the Pad\'e approximant of order $\nu$
generally decreases as $\nu$ increases, but not monotonically. The
most accurate approximant with $\nu\le149$ over the range
$0.25\le\epsilon\le 0.45$ turns out to be of order $\nu=147$.
Figure~\ref{fig:pade:06}(c) shows the zeros and poles of
$[35/36]_{\tilde\tau_5}(\epsilon^2)$ in the complex plane, without the
factor of $\epsilon^5$. These zeros and poles are the square roots of
the zeros of $P$ and $Q$ in equation \eqref{eq:pade:def}.  We obtain
  identical results whether $P$ and $Q$ are formed using continued
  fraction recurrence relations \cite{cuyt} or by finding the
nullspace of a Toeplitz matrix \cite{gonnet:13}. We use Mathematica to
find the zeros of $P$ and $Q$.  Each of these steps is done in
high-precision arithmetic to match the 192-digit precision of
the Stokes expansion of $\hat\varphi_5$ that we computed. The number
of zeros and poles that appear in the gap of width $0.030726$
between the turning points $\epsilon_L=0.13854288$ and
$\epsilon_R=0.16926877$ generally increases with the order $\nu$. We
interpret this to mean that $\hat\varphi_5(\epsilon)$ has a branch cut
on the real $\epsilon$-axis from $\epsilon_L$ to $\epsilon_R$.
Figure~\ref{fig:pade:06}(d) shows that for $\nu=147$, the
Pad\'e approximant has 11 poles and 10 zeros on this branch cut.  This
is consistent with the behavior one expects from Pad\'e approximants
of a Cauchy-Stieltjes integral \cite{allen:75} or a more general
analytic function with branch point singularities
\cite{stahl:97,yamada:14}.

\section{Conclusion}\label{sec:conclusion}

We have derived a recursive algorithm to compute successive terms of
the Stokes expansion for finite-depth standing water waves and
implemented it in arbitrary-precision arithmetic on a
supercomputer. One advantage of the conformal mapping framework over
previous graph-based approaches \cite{roberts:83,marchant:87} is that
the arguments of the hyperbolic functions in equation \eqref{eq:re:expand}
depend only on time, which reduces the cost of re-expanding the
composite power series that arise.

We carried out extensive numerical experiments to verify the
correctness of the Stokes expansions by comparing them to standing
waves computed by a shooting method \cite{water2} that we implemented
in double and quadruple-precision using adaptive meshes and numerical
continuation. While previous studies
\cite{mercer:94,smith:roberts:99,okamura:99,water2} have established a
connection between nearby harmonic resonances and the branching
structure of families of standing water waves, we look specifically at
how small divisors in the recurrence activate new growth patterns
among the Stokes coefficients. We find that the Stokes coefficients
rapidly settle into geometric growth patterns as the expansion order
$\nu$ increases, but the growth rate sometimes jumps in response to
new small divisors entering the recurrence. This led us, in equation
  \eqref{eq:rho:nu:def}, to define an inverse growth rate
factor~$\rho_\nu$ with the property that for fixed $\epsilon$,
  successive terms of the Stokes expansion transition from geometric
  decay to geometric growth when $\rho_\nu$ drops below $\epsilon$.
A clear connection between the large jumps in $\rho_\nu^{-1}$ in
figure~\ref{fig:growthFD}(a,b) and the corresponding small divisors is
demonstrated for $\mu_0\in\{1/4,3/5\}$ in the electronic supplementary
material.

In the examples we presented in detail, $\rho_\nu$ aligns with the
amplitude where an imperfect bifurcation is observed using the
shooting method. In these cases, we observe, for the first time,
clusters of poles and zeros in the Pad\'e approximants of the Stokes
expansion near $\epsilon=\rho_\nu$ that suggest that previously
observed \cite{smith:roberts:99} turning points in the bifurcation
curves are branch point singularities of an analytic function. These
poles and zeros allow the Pad\'e approximants to jump across
disconnections in the bifurcation curves with remarkable accuracy on
both sides of the branch cut. For unit-depth standing waves, the
$149^\text{th}$-order Pad\'e approximant of the period maintains 30
digits of accuracy for amplitudes up to $\epsilon=0.3$, which is
beyond the first two disconnections we identified, at
$\epsilon=\rho_{149}=0.139289$ and $\epsilon=0.273808$.  Neither of
these disconnections are `observable' in double-precision using the
shooting method alone. Mercer \& Roberts \cite{mercer:94} noted that
high-frequency resonances are likely to be extremely weak (made
  quantitative by estimates from \cite{roberts:81}), so only the
dominant, low-frequency resonances are observable in a finite
truncation of the problem carried out numerically. Our high-precision
numerics coupled with Pad\'e techniques make it possible to locate and
compute them.

For $\mu_0=1/16$, the smallest divisor for $2\le p\le 24773$ is
$\lambda_{2,2}$, which is not associated with a harmonic resonance
as $\lambda_{p,j}(\mu_0)=0$ has no solutions for
  $0<\mu_0<\infty$ and $p-j\in2\mbb Z$ unless $p\ge5$ and $j\ge3$; see
  \S\ref{sec:res:depths}.  As a result, there is no imperfect
bifurcation associated with this small divisor. Instead, as shown in
the electronic supplementary material, the closest poles to the origin
in the $109^\text{th}$ order Pad\'e approximant of the period lie on
the imaginary $\epsilon$-axis rather than the real axis. These
  poles are clustered just outside
  the radius $q(0)^{-1}$ computed from the Domb-Sykes plot in
  figure~\ref{fig:growthFD}(d), so there is likely a
  branch cut on the imaginary axis. This example shows that sometimes
  $q(0)^{-1}$ does not correspond to a real amplitude $\epsilon$ where
  a bifurcation exists.

Consistent with previous studies
\cite{mercer:94,smith:roberts:99,water2,rycroft:13,shelton:stand}, we
find that following the side branches of an imperfect bifurcation
using the shooting method activates secondary standing waves that
oscillate on top of the primary wave with different amplitudes and
phases on different bifurcation branches. This leads to non-uniqueness
for fixed values of wave amplitude or period. For example, solutions
$ABC$ in figure~\ref{fig:bif:evol:06}(c) have the same period
$T=8.45592$ but different wave heights $\epsilon$. Since solutions $A$
and $C$ appear to oscillate around solution $B$, we regard $B$ as the
primary wave and the perturbation from $B$ to $A$ or from $B$ to $C$
as a finite-amplitude secondary wave. Nonlinearity can distort these
secondary waves to deviate visibly from oscillating as scalar
multiples of $\cos(px)\cos(jt)$. This is shown in the
electronic supplementary material for both finite-amplitude and
small-amplitude secondary waves, the latter having nearly the
same period as the primary wave under the linearized equations of
motion about the primary wave.

Finite-amplitude secondary waves are specific perturbations that
maintain time-periodicity of the composite wave under the nonlinear
evolution equations. The linear stability of standing waves and
short-crested waves to arbitrary perturbations is an interesting
problem that has been studied, e.g., in
\cite{mercer:92,ioualalen:96,water:stab1}. Further exploration of the stability
of standing waves, e.g., near the bifurcations studied in the present
paper, as well as the long-time dynamics of unstable perturbations,
are interesting avenues for future research.

As one increases the order of the truncated Stokes expansion, new
small divisors occasionally enter the recurrence that lead to new
Pad\'e poles close to the origin.  We find that the Pad\'e
approximants continue to improve in accuracy at a given amplitude
$\epsilon$ as new features of the bifurcation curve emerge at lower
amplitude. This was shown in figure~\ref{fig:bif:T1}(b) for $\mu_0=1$
and in figure~\ref{fig:pade:06}(b) for $\mu_0=3/5$. In the latter
case, the poles at $\epsilon=0.013403$ and $\epsilon=0.063553$ do not
appear until $\nu=93$ and $\nu=117$, respectively. The pole at
$\epsilon=0.013403$ appears shortly before the large jump in
$\rho_\nu^{-1}$ for $\mu_0=3/5$ in figure~\ref{fig:growthFD}(b).
Rigorous existence proofs of standing waves
\cite{plotnikov01,iooss05,alazard15} and temporally quasi-periodic
water waves \cite{berti2016quasi} employ a Nash-Moser iteration to
rapidly converge to a solution through a sequence of less regular
spaces.  However, this only establishes existence for values of the
amplitude parameter in a Cantor set. The gaps
  $\epsilon_R-\epsilon_L$ between turning points in the bifurcation
  curves are generally smaller for higher wave number resonances.  For
  example, the gap for $p=19$ in figures~\ref{fig:bif:T1}(d)
  and~\ref{fig:bif:19}(c) is six orders of magnitude smaller than the
  gap for $p=5$ in figure~\ref{fig:pade:06}(a).  As more poles and
  branch cuts appear at higher orders in the Pad\'e approximants, new
  gaps in parameter space emerge in which there is no solution.  An
  interesting question is whether this process of removing smaller and
  smaller gaps leaves behind a cantor set of values of $\epsilon$
  where the Pad\'e approximants converge to a solution of the standing
  wave problem.

Theorem~\ref{thm:nonresonant} shows that for almost every fluid depth,
and every rational depth, the divisors $\lambda_{p,j}$ are bounded
below by a slowly decaying function of the wave number $p$.
These lower bounds appear to limit the rate at which $\rho_\nu$
  from \eqref{eq:rho:nu:def} approaches 0 as $\nu\to\infty$.  In the
  electronic supplementary material, we provide an example to show
  that if the analogue of $\rho_\nu$ approaches zero too rapidly, the
  Pad\'e approximants do not converge to the underlying analytic
  function between its branch cuts.

In the present work, we focused on continued fraction expansions of
single components of the Stokes expansion, namely $T$, $\hat\varphi_p$
and $\hat\eta_p$ in equations \eqref{eq:T:expand} and
\eqref{eq:hat:eta:phi:cfrac}.  Roberts \cite{roberts:83} and Marchant
\& Roberts \cite{marchant:87} also considered Pad\'e approximants of
scalar quantities associated with Stokes expansions of short-crested
waves. It would be natural in future work to study multivariate
  rational approximations of the solutions \cite{guillaume:00}.
  For example, if the continued fraction coefficients $d_n$ and
$\tilde d_{p,n}$ in equations \eqref{eq:T:expand} and
\eqref{eq:hat:eta:phi:cfrac} are well-approximated by rational
functions of the depth parameter and have simple pole singularities at
a resonant depth, the truncated Pad\'e approximants at a given order
$\nu$ would become bivariate rational functions of depth and
amplitude.  This would use Robert's idea \cite{roberts:81} to extend
the validity of a Stokes expansion past discontinuities in the
bifurcation curve (to larger values of $\epsilon$) for many depths
simultaneously. It would also provide a satisfactory answer to a
concern raised by Concus \cite{concus:64} that a small perturbation of
the depth would cause discontinuous changes in the Stokes expansion
coefficients.  Although these coefficients change discontinuously, the
solution itself (the Pad\'e approximant) depends continuously on depth
in a neighborhood of the resonant depth on the large-amplitude side of
the discontinuity.  There will only be finitely many resonant depths
for a given truncation order of the Stokes expansion, and this
approach could be used to represent solutions in a neighborhood of any
of them.  One could also use Pad\'e techniques for Fourier series
\cite{daras:pade:Lp} to obtain a rational function of $e^{-iw}$ to
approximate the Fourier series one obtains by truncating the sums in
equation \eqref{ansatz} to a finite range $1\le p\le p_\text{max}$.  This would
generalize (from travelling waves to standing waves) the results of
Dyachenko et al \cite{sergey:16} and Lushnikov \cite{lushnikov:jfm:16}
on branch point singularities in the upper half-plane obtained by
analytic continuation of the conformal map.  Standing waves would have
the new feature that these singularities evolve in time.

\vspace*{1.5pc}
\noindent
\textbf{Acknowledgments.}\,
The authors thank the anonymous reviewers for many valuable comments
and suggestions that strengthened the results of the paper.
AA was supported in part by the ARCS Foundation through the
ARCS Scholar program.  JW was supported in part by the National
Science Foundation under award number DMS-1716560 and by the
U.S. Department of Energy, Office of Science, Office of Advanced
Scientific Computing Research under Contract No. AC02-05CH11231.
This research used the Lawrencium cluster at the Lawrence Berkeley
National Laboratory and the Savio cluster at UC Berkeley.

\vspace*{1.5pc}
\noindent
\textbf{Data Accessibility.}\, \parbox[t]{4in}{
Our source code and data are available at \\
https://doi.org/10.5281/zenodo.15585007. \cite{abassi:semi1:zenodo}
}

\bibliographystyle{RS}




\newpage

\begin{center}
  \large Electronic supplementary material
\end{center}

\setcounter{section}{0}

\makeatletter
\def\@seccntformat#1{\@ifundefined{#1@cntformat}%
   {\csname the#1\endcsname\space}
   {\csname #1@cntformat\endcsname}}
\newcommand\section@cntformat{\thesection.\space} 
\makeatother
\renewcommand{\thesection}{S\arabic{section}}
\counterwithin{equation}{section}

In \S\ref{sec:sdiv:growth} we plot the smallest divisor that
arises at each wave number for the depths considered in
\S\ref{sec:growth} and examine how the growth patterns in the
Stokes coefficients change when unusually small divisors enter the
recurrence. We also fill in details on the Domb-Sykes plot analysis in
figure~\ref{fig:growthFD}(d,e,f) and give an example
where the rapid growth of the Stokes expansion coefficients
corresponds to Pad\'e poles on the imaginary $\epsilon$-axis near the
origin.
In \S\ref{sec:pos:rad}, we show that there is no fluid depth for
which $\lambda_{p,j}$ in equation \eqref{eq:lam:lamcap} can be uniformly
bounded away from 0 for $p\ge2$.  We also show that this is not
  the case for the gravity-capillary standing wave problem, i.e.,
  $\lambda_{p,j}^\text{cap}$ can sometimes be bounded away from zero
  in spite of the density of resonant bond numbers.  We then show
that the Pad\'e approximants of an analytic function with a sequence
of branch cuts that accumulate at the origin may or may not converge
to the function at larger amplitudes, depending on how rapidly the
analogue of $\rho_\nu$ from equation \eqref{eq:rho:nu:def} converges
to zero as $\nu\to\infty$.
In \S\ref{sec:proof32}, we prove Theorem~\ref{thm:nonresonant},
which states that for almost every fluid depth, the small
divisors are bounded below by a slowly-decaying function of the wave
number.  We demonstrate for $\mu_0\in\{1/16,1/4,0.2499\}$ that
Theorem~\ref{thm:nonresonant} gives a good prediction of how fast the
small divisors decay in practice. We also state and prove a theorem on
the presence of many large divisors in the proximity of any small
divisor.
In \S\ref{FDSecODEStokesCoefficients}, we present a brief
derivation of the forcing terms that appear in the ODEs of
\S\ref{FDSecSolutionAlgorithm}. In
\S\ref{sec:comp:aspects}, we discuss computational aspects of
the algorithm and provide implementation details for our
arbitrary-precision parallel algorithm. In
\S\ref{sec:floating:point}, we discuss the effects of
finite-precision arithmetic and how to estimate floating-point
errors. In \S\ref{sec:secondary:waves}, we investigate the
secondary standing waves that are activated with different phases and
amplitudes by following the side branches of the bifurcation curves of
\S\ref{sec:bif}, focusing on how nonlinearity affects the shapes
of the secondary waves.  Finally, in \S\ref{sec:identify:hr}, we
show how to identify which harmonic resonance is responsible for a
bifurcation branch by studying the singular vector corresponding to
the smallest singular value of the Jacobian of a solution near the
imperfect bifurcation.

\section{Small divisors, growth rates, and imaginary Pad\'e poles}
\label{sec:sdiv:growth}

The jumps in growth rate in figure~\ref{fig:growthFD}(a,b)
appear to be caused by new small divisors entering the recurrence at
certain orders when solving equation \eqref{eq:lam:pj:def} for
$\alpha_{p,n,j}$. This alters the growth patterns of the Stokes
expansion coefficients $\alpha_{p,n,j}$, $\beta_{p,n,j}$,
$\gamma_{p,n,j}$, $\mu_{n,j}$ and $\sigma_n$ on subsequent iterations.
This observation is implicitly made in \cite{roberts:83,marchant:87},
though the authors focus on zero divisors of nearby resonant depths
rather than small divisors of the actual recurrence.
Figure~\ref{fig:smallDiv} shows the smallest divisor $\lambda_p$
associated with each spatial mode, defined in equation \eqref{eq:lam:p:def}
above. Notable small divisors at each depth are labeled with triples
$(p,j,\lambda_p)$, where $j$ is the argmin in \eqref{eq:lam:p:def}.

These small divisors have a strong effect on the Stokes expansion
coefficients.  In the case $\mu_0=1/4$, the smallest divisor seen for
$2\le p\le 47$ is $\lambda_2=0.226$.  It then drops by a factor of
13.6 at $p=48$ to $\lambda_{48}=0.0166$.  The $(p,j)=(48,14)$ mode
becomes active at order $\nu=48$, but it starts out much smaller in
magnitude than the largest mode of that order. Specifically,
$\alpha_{48,0,14}=6.62\times 10^{66}$ while
$\alpha_{2,23,2}=-1.735\times 10^{82}$. From that point, as $\nu$
increases through even integers, $\alpha_{48,(\nu-48)/2,14}$ grows
faster than any other mode of order $\nu$. By the time $\nu=70$,
$\alpha_{48,11,14}=1.463\times10^{123}$ overtakes
$\alpha_{2,34,2}=4.22\times10^{121}$ as the largest (in magnitude)
coefficient $\alpha_{p,n,j}$ with $p+2n=\nu$ and $j\in E_\nu$. This is
precisely where the jump in the growth rate
$\rho_\nu^{-1}=\sqrt{A_\nu/A_{\nu-2}}$ of the norms $A_\nu$ defined in
equation \eqref{eq:A:nu:def} appears in figure~\ref{fig:growthFD}(a)
for the case $\mu_0=1/4$.  Similar observations hold for the case
$\mu_0=3/5$, where the small divisor $\lambda_{65,11}=0.03166$ leads
to the jump in $\rho_\nu^{-1}$ at $\nu=103$.  The case $\mu_0=1$ is
interesting as there is a cluster of 3 moderately small divisors
$\lambda_{p,j}$ with $(p,j)\in\{(7,3),(12,4),(19,5)\}$.  We find that
$\alpha_{19,n,5}$ is among the largest modes for $n\ge23$, which is
the transition region where $\rho_\nu^{-1}$ moves up to the next
plateau for $\nu=19+2n\ge65$ in figure~\ref{fig:growthFD}(b)
for $\mu_0=1$. Small divisors are not the only consideration in
determining which mode is largest: the right-hand side
$-S_{p,n,j}$ in equation \eqref{eq:lam:pj:def} depends on the previously
computed $\alpha_{q,m,l}$ in a complicated way. This causes other
modes $\alpha_{p,n,j}$ with $(p,j)$ near $(19,5)$ to also be large for
$n\ge23$ and $\mu_0=1$.

\begin{figure}[t]
  \begin{center}
    \includegraphics[width=\linewidth,trim=8 0 0 0,clip]{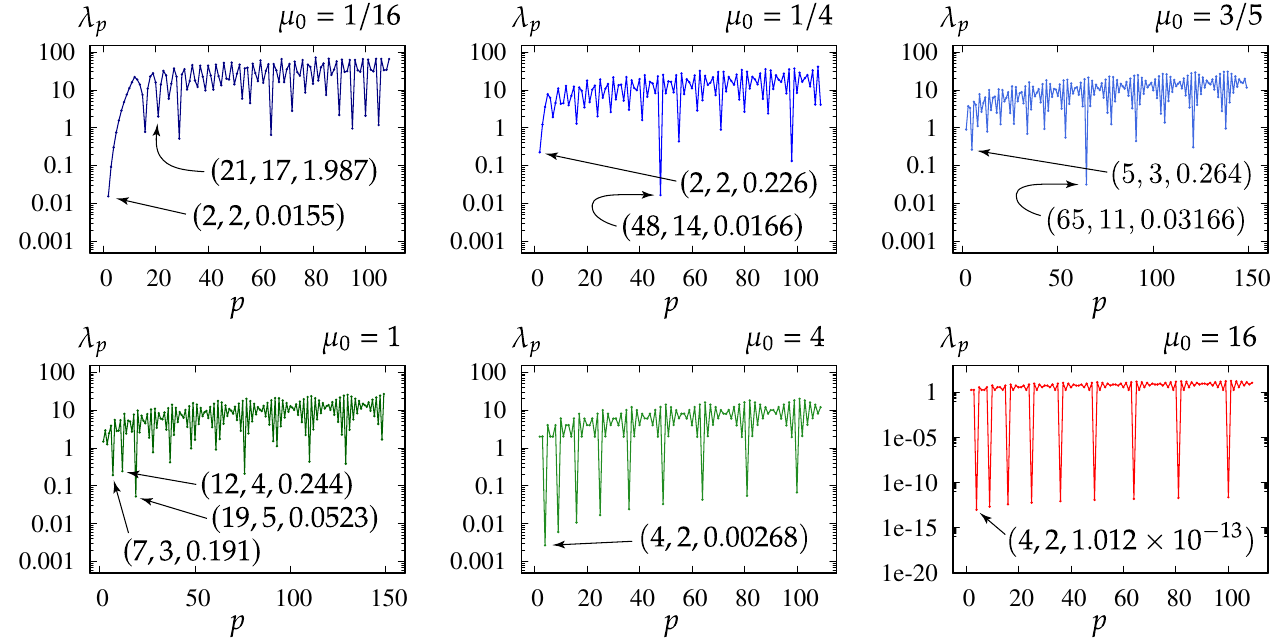}
  \end{center}
  \caption{\label{fig:smallDiv} Smallest divisor that will arise in
    the calculation for each spatial Fourier mode $p\ge2$. The labels
    $(p,j,\lambda_p)$ give the parameters of the smallest
    $\lambda_p$'s encountered, where $j$ is the argmin in
    equation \eqref{eq:lam:p:def}.  }
\end{figure}

Before discussing the case of imaginary Pad\'e poles, we fill in some
details omitted in \S\ref{sec:growth} on how we estimate the
radius of convergence of the Stokes expansion from the Domb-Sykes
plots in figure~\ref{fig:growthFD}(d,e,f). We fit the data
  $\big(s_\nu,\rho_\nu^{-1}\big)$ with the polynomial $q(s)$ of degree
$d$ that minimizes $f=\sum_{\nu\in \mc I}|\rho_\nu^{-1} -
q(s_\nu)|^2w_\nu$, where
\begin{equation}
  s_\nu=1/\nu, \qquad w_\nu=C/(D-\nu)^2, \qquad D=4+\max_{\nu\in\mc
    I}\,\nu,
\end{equation}
and $C$ is chosen so that $\sum_{\nu\in\mc I}w_n=1$.  We used
the parameters
\begin{equation*}
  \begin{array}{c|c|c|c}
      & \mu_0=1 & \mu_0=1/16 & \mu_0=\infty \\ \hline
    d & 2 & 4 & 8 \\
    \mc I & \{\nu\,:\,10\le\nu\le54\}\cap2\mbb Z &
    \{\nu\,:\,20\le\nu\le109\}\cap2\mbb Z &
    \{\nu\,:\,20\le\nu\le149\}
  \end{array}
\end{equation*}
After finding $q(s)$, we estimate
$\lim_{\nu\to\infty}\rho_\nu^{-1}=q(0)$ to obtain the extrapolated
radius of convergence $q(0)^{-1}$ from the ratio test.
Omitting the odd integers eliminates oscillations in the residual
$[\rho_\nu^{-1}-q(s_\nu)]$ that arise for $\mu_0=1/16$ and
$\mu_0=1$. The results for $q(0)$ agree to all the digits reported in
figure~\ref{fig:growthFD}(d,e,f) if we instead omit the even
integers. When $\mu_0=\infty$, these oscillations are not present, so
we include both odd and even integers in $\mc I$.  More details on
choosing the polynomial order $d$ to maximize accuracy without
over-fitting will be given elsewhere \cite{abassi:semi2}. This choice
of weight $w_\nu$ favors accuracy for larger values of $\nu$, where
$s_\nu$ is closer to 0. As discussed in \S\ref{sec:growth} and
shown in figure~\ref{fig:growthFD}(f), when $\rho_\nu^{-1}$
  jumps from one plateau height to another, the extrapolated value
  $q(0)$ increases, indicating that a new singularity
  $\epsilon_*\in\mbb C$ has been detected closer to the origin, near
  the smaller radius $|\epsilon_*|\approx q(0)^{-1}$.  These
  singularities appear as poles in the Pad\'e expansions and do not
  vanish when the order $\nu$ of the Pad\'e approximation increases
  and new poles emerge closer to $\epsilon=0$. Instead, as shown in
  figure~\ref{fig:pade:06}(d), they often `fill in' to become
  clusters of pole-zero pairs in what appear to be branch
  cuts. Regardless of whether $\rho_\nu\to0$ as $\nu\to\infty$, the
  predicted values of $q(0)^{-1}$ from the Domb-Sykes plots using
  different sets $\mc I$ for the extrapolation are useful for
  locating singularities.  It will be shown in \cite{abassi:semi2}
that no jumps in $\rho_\nu^{-1}$ are encountered up to order
$\nu=641$ for $\mu_0=\infty$, and the polynomial $q(x)$ computed here
from $20\le\nu\le149$ satisfies
$\max_{150\le\nu\le641}\big(|\rho_\nu^{-1}-q(s_\nu)|/\rho_\nu^{-1}\big)\le8.0\times10^{-11}$.
This suggests that the radius of convergence for the infinite depth
case (the Schwartz and Whitney expansion) is positive:
$q(0)^{-1}=0.301262103>0$.  However, we have some doubts about this
conclusion as Pad\'e approximants of individual components of the
solution ($T$, $\hat\varphi_p$ and $\hat\eta_p$) possess extremely
closely spaced Froissart doublets inside this radius, similar to what
happens in the $\mu_0=1/16$ case reported below. We will explore this
in more detail in \cite{abassi:semi2}. We are not yet able to reach
such high orders in finite depth.

For $\mu_0\in\{1/4,3/5,1\}$, the small divisors
$\big\{\{\lambda_{48,14}\}, \{\lambda_{5,3},\lambda_{65,11}\},
\{\lambda_{7,3},\lambda_{12,4},\lambda_{19,5}\}\big\}$ lead to
imperfect bifurcations in the family of solutions.  But for
$\mu_0=1/16$, the small divisor $\lambda_{2,2}$ is not associated with
a harmonic resonance since the first resonance in finite depth occurs
at $(p,j)=(5,3)$.  Figure~\ref{fig:imaginary:poles}(a,b) shows the poles and zeros of the
$109^\text{th}$-order Pad\'e approximant of the period~$T$.  As $\nu$
increases, the poles and zeros become more densely distributed on the
imaginary axis with an accumulation point emerging at the blue circle
of radius $q(0)^{-1}=0.000267885$, which is the extrapolated radius of
convergence of the Stokes expansion computed from the Domb-Sykes plot,
as described above.  The dotted red circle in figure~\ref{fig:imaginary:poles}(b) has radius
$\rho_{109}=0.000271628$, which is slightly larger than $q(0)^{-1}$
since $q(s)$ in figure~\ref{fig:growthFD}(d) increases as
$s\to0^+$. All but one pole on the positive imaginary axis in
figure~\ref{fig:imaginary:poles}(a,b) lie outside of the
blue circle. (The conjugate of any pole or zero is also a pole or
  zero, so we focus on those in the upper half-plane.)  The one
exception is the Froissart doublet labeled $FD$ in figure~\ref{fig:imaginary:poles}(b). This
doublet consists of a pole $z_p$ and a zero $z_0$ on the imaginary
axis near $0.00022524i$ that are separated by a relative distance of
only $|z_p-z_0|/|z_p|=4.4\times10^{-21}$. Details confirming that
64-digit (212-bit) floating-point arithmetic is sufficient to compute
this relative distance are given in \S\ref{sec:floating:point}
below.

\begin{figure}[t]
  \begin{center}
    \includegraphics[width=\linewidth]{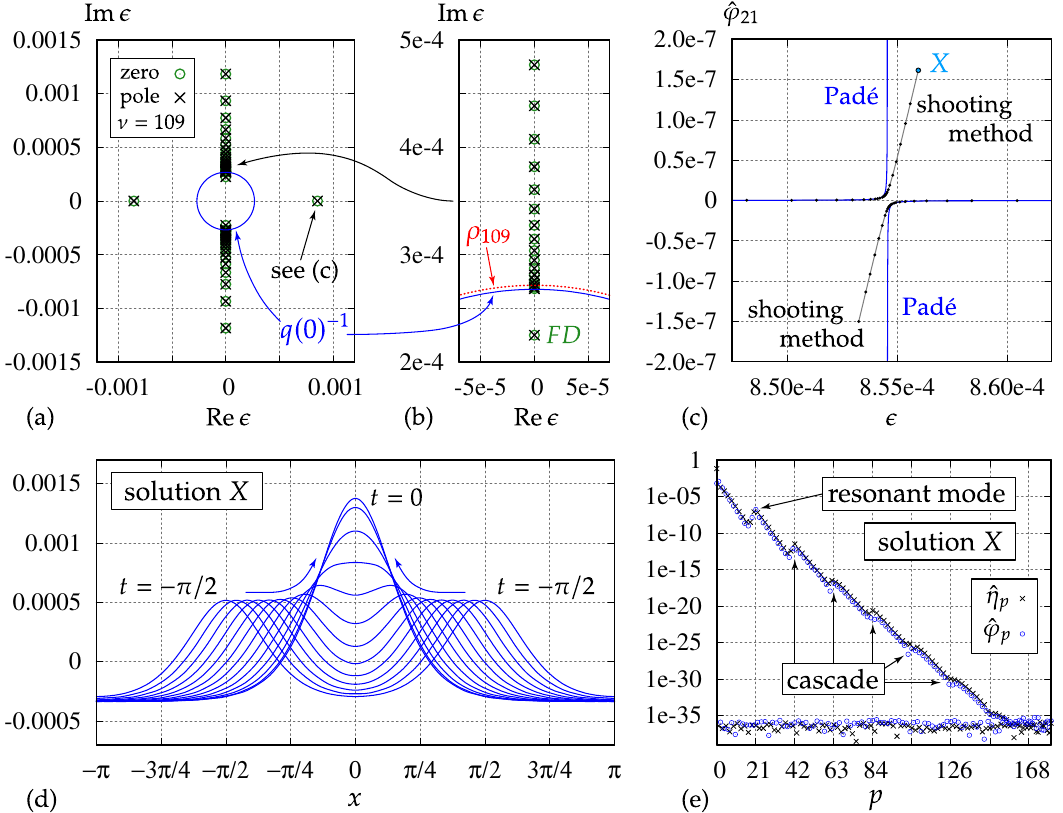}
  \end{center}
  \caption{\label{fig:imaginary:poles} Standing waves of
      dimensionless depth $\mu_0=1/16$.  (a,b) Poles and zeros of the
      $109^\text{th}$-order Pad\'e approximant of the period $T$. (c)
      A bifurcation plot of solutions near the real pole at
      $\epsilon=0.000854552$. (d) Snapshots of solution $X$ at times
      $t\in\mathscr{T}_{12}$ from equation \eqref{eq:scrT:def}. (e) The
      Fourier spectrum of the initial conditions of solution $X$.}
\end{figure}

Because the pole and zero of the Froissart doublet are so close
together, the Pad\'e approximant
$[27/27]_\tau(\epsilon^2)=P(\epsilon^2)/Q(\epsilon^2)$ agrees to more
than 20 digits on the real axis with the rational function
$[P(\epsilon^2)/(\epsilon^2-z_0^2)]/[Q(\epsilon^2)/(\epsilon^2-z_p^2)]$,
which has all its poles outside of the blue circle in
figure~\ref{fig:imaginary:poles}(a). The leading terms of their Taylor
expansions will also be close to each other, which helps explain why
$\rho_{109}$ and $q(0)^{-1}$ are close to the second-smallest pole
rather than the pole of the Froissart doublet. This doublet persists
over many consecutive orders, appearing first at order $\nu=67$, and
remains close to $0.00022524i$ for $75\le\nu\le109$. This suggests
that there is an actual singularity near this location that should
eventually cause $\rho_\nu$ to drop below $0.00022524$. If the
underlying singularity is not a simple pole, additional poles may
appear near this one in higher-order Pad\'e approximants. The other
poles on the imaginary axis in figure~\ref{fig:imaginary:poles}(a) are evidence that branch cuts
may exist along the imaginary axis above and below the blue circle.
We will see in \S\ref{sec:proof32} that
$\lambda_p\ge\max(0.0155,p^{-0.57})$ for $2\le p\le 2.5\times
10^{22}$, which is a slowly decaying lower bound. The rate at which
$\rho_\nu$ approaches zero as $\nu\to\infty$ (assuming it does so)
depends on how these small divisors interact with the terms in the
right-hand side $-S_{p,n,j}$ in equation \eqref{eq:lam:pj:def}, and on whether
the abundance of large divisors discussed in \S\ref{sec:proof32}
below are helpful in preventing rapid decay. (A slower decay rate in
  $\rho_\nu$ appears to improve the convergence of the Pad\'e
  approximants, as discussed in \S\ref{sec:pos:rad} below.)

In figure~\ref{fig:imaginary:poles}(a), we see that the
$109^\text{th}$-order Pad\'e approximant of $T$ has a pole-zero pair
on the real $\epsilon$-axis at $\epsilon=0.000854552$. (There is
  another such pair at $\epsilon=-0.000854552$).  The relative
distance between the pole and zero is $6.0\times10^{-11}$. This
pole-zero pair persists over many consecutive orders, appearing first
at order $\nu=69$ and remaining close to $\epsilon=0.00085$ for
$81\le\nu\le109$. We used the Pad\'e approximants of $T$ and of the
Fourier modes $\hat\varphi_p$ and $\hat\eta_p$ in equation
\eqref{eq:shooting:fourier} as an initial guess in the shooting method
to construct a bifurcation plot via numerical continuation near this
amplitude, shown in figure~\ref{fig:imaginary:poles}(c). The
most resonant mode, $\hat\varphi_{21}$, is plotted versus $\epsilon$.
The black markers show shooting-method solutions while the blue curve
shows the $109^\text{th}$-order Pad\'e approximant
$\epsilon^{21}[22/22]_{\tilde\tau_{21}}(\epsilon^2)$ of
$\hat\varphi_{21}$ obtained by truncating the continued fraction
\eqref{eq:hat:eta:phi:cfrac} at $n=44$. Instead of turning points that
leave a gap with no solutions, there are two solutions at each value
of $\epsilon$ in a neighborhood of $\epsilon=0.000854552$, one on each
bifurcation branch.

Figure~\ref{fig:imaginary:poles}(d) shows snapshots of the
time-evolution of solution $X$ over a quarter period, where $X$ is the
labeled solution on the upper bifurcation branch in panel (c).  Since
the depth $\mu_0=1/16$ is small in comparison to the wavelength
$2\pi$, standing waves take the form of counter-propagating solitary
waves that repeatedly collide at dimensionless times $t\in2\pi\mbb Z$
and $t\in\pi(1+2\mbb Z)$ to form rest states with localized peaks
centered at $x=0$ and $x=\pm\pi$, respectively. While secondary
standing waves are not visibly active in this solution, the Fourier
spectrum of the initial condition, shown in
figure~\ref{fig:imaginary:poles}(e), suggests that a
resonance in the $21^\text{st}$ spatial Fourier mode leads to a
cascade of peaks at wave numbers $p$ that are multiples of $21$. The
smallest divisor associated with $p=21$ is the $(p,j)=(21,17)$ mode
labeled in figure~\ref{fig:smallDiv}. Using the method described below
in \S\ref{sec:identify:hr}, we confirm that there is a solution
of the linearized water wave equations about solution $X$ that
contains 21 spatial oscillations and 17 temporal oscillations and is
nearly time-periodic with the same period as $X$, namely $T=25.0633$.
This small-amplitude, secondary standing wave solution corresponds to
the smallest singular value of the Jacobian of solution $X$, which is
$\sigma_\text{min}=1.179\times10^{-6}$. The second-smallest and
largest singular values are $0.00144$ and $1.407$, respectively.  We
omit a contour plot of the linearized solution in the interest of
space and defer a discussion of the method to
\S\ref{sec:identify:hr} below.  Solution $X$ was computed in
quadruple-precision using $M=432$ spatial gridpoints; 288 timesteps of
a $15^\text{th}$-order spectral deferred correction method \cite{dutt}
over a quarter period; and $d=140$ unknown initial conditions in equation
\eqref{eq:dof}, where $\hat\varphi_{21}$ is omitted from the list of
unknowns and specified as an amplitude parameter. It is interesting
that the $(21,17)$ mode becomes much more resonant evolving over
solution $X$ than over flat water. The divisor for this mode is not
particularly small: $|\lambda_{21,17}|=1.987$.

\section{Divergence of the Stokes expansion and convergence of its Pad\'e approximants}
\label{sec:pos:rad}

Roberts \cite{roberts:81,roberts:83} concludes from the density of
resonant depths that the asymptotic expansions of standing and
short-crested water waves have a zero radius of convergence for all
values of the depth parameter. This conclusion is re-iterated in
\cite{marchant:87,mercer:94,ioualalen:96,smith:roberts:99}.
We agree that this is likely to be true but disagree that it
automatically follows from the density of resonant depths. For a given
non-resonant depth, the recursion of \S\ref{sec:recursive:alg}
is a specific procedure and nearby resonances enter into it only
through division by $\lambda_{p,j}$ in equation \eqref{eq:lam:pj:def}. Resonant
depths lead to zero-divisors while non-resonant depths could
potentially lead to small divisors.

A distinction should be made between nearby resonant parameters and
small divisors. We illustrate this for the gravity-capillary
wave problem in infinite depth \cite{abassi:semi2}, where the Stokes
expansion recursion involves division by $\lambda^\text{cap}_{p,j}$
from equation \eqref{eq:lam:lamcap} instead of $\lambda_{p,j}$.  We have
discovered that there are many values of $B$ in equation \eqref{eq:lam:lamcap}
for which $\lambda^\text{cap}_p=\min_{j\in p+2\mbb
  Z}|\lambda^\text{cap}_{p,j}|$ is bounded away from 0 for all
$p\ge2$, even though $B$ is an accumulation point of resonant Bond
numbers. We will explore this in more detail elsewhere
\cite{abassi:semi2}, but give the example $B=1$ here.  The equation
$y^2=x^3+4x$ is an elliptic curve whose integral points are enumerated
in \cite{tate} by constructing a generating set for the Mordell-Weil
group of rational points on the curve.  The only integer solutions
(with $y\ge0$) turn out to be $(x,y)=(0,0)$ and $(x,y)=(2,\pm4)$;
see~Example 3.11 of \cite{tate}.  Setting $x=2p$ and $y=4j$, we obtain
$16j^2=8p^3+8p$, i.e., $\lambda^\text{cap}_{p,j}=0$.  If $p$ and $j$
are integers, so are $x$ and $y$, so the only solutions of
$\lambda^\text{cap}_{p,j}=0$ are $(p,j)=(0,0)$ and $(p,j)=(1,\pm1)$.
It follows that $\lambda^\text{cap}_p=\min_{j\in p+2\mbb
  Z}\big|\frac12\big(p+p^3\big)-j^2\big|\ge1$ for $p\ge2$. However,
setting $B^\e{m}=\big(1-\frac1{2m^4}\big)$ for $m\in\mbb N$, one finds
that $\lambda^\text{cap}_{p,j}(B^\e{m})=0$ for $j=2m^3$ and $p=2m^2$,
so $B^\e{m}$ is a sequence of resonant Bond numbers converging to
$B=1$. It is an open question whether the Stokes expansion has a
positive radius of convergence in such cases.

For the finite-depth problem with zero surface tension, there is no
depth $\mu_0$ for which $\lambda_p$ is uniformly bounded away from 0
for all $p\ge2$. This is equivalent to the assertion in
Theorem~\ref{thm:nonresonant} that $\mc E_{-1/2}$ is the empty set;
see \S\ref{sec:proof32} below.  Thus, for every depth, there
will be arbitrarily small divisors eventually, but for almost every
depth, $|\lambda_{p,j}|$ can only become small when $p$ is large, as
quantified by Theorem~\ref{thm:nonresonant}. Determining whether the
algorithm of \S\ref{sec:recursive:alg} leads to a series with a
positive radius of convergence is difficult since the formulas for the
forcing terms $T^r_{p,n}$, given in equations
\eqref{eq:T2:def}--\eqref{fdBernoulliGreekT4} below, are nonlinear and
contain factors such as $q\alpha_{q,k}$ or
$\dot\alpha_{q,l}(t)=\sum_{j\in E_{q+2l}}ij\alpha_{q,l,j}e^{ijt}$ in
which a spatial or temporal derivative amplifies higher-frequency
modes in proportion to the mode index $q$ or $j$.  We agree with
previous authors
\cite{roberts:81,roberts:83,marchant:87,mercer:94,ioualalen:96,smith:roberts:99}
that it is likely that the Stokes expansion diverges for every fluid
depth. We believe that the bound
$\lambda_p\ge\min(a,p^{-\frac12-\delta})$ in
Theorem~\ref{thm:nonresonant}, together with the presence of many
large divisors (see \S\ref{sec:proof32} below), will limit the
growth rate of the Stokes coefficients so that $\rho_\nu$ in equation
\eqref{eq:rho:nu:def} approaches zero slowly. A reasonable conjecture
is that if $\mu_0\in\mc E_\delta$, there exist $C>0$ and $\nu_0\ge3$,
both depending on $\mu_0$, such that $\rho_\nu\ge C\nu^\theta$ for
$\nu\ge\nu_0$. The dependence of $\theta$ on $\delta$ would have to be
determined in the course of proving the conjecture.  One can hope for
$\theta=-\frac12-\delta$, but accounting for factors in the recurrence
associated with differentiation such as $q$ and $j$ discussed above
might require a larger shift, e.g., $\theta=-1-\delta$ or
$\theta=-\frac32-\delta$.

Limiting the rate at which $\rho_\nu$ approaches zero via $\rho_\nu\ge
C\nu^\theta$ with $C>0$, $\theta<0$ and $|\theta|$ small appears to be
critical for the convergence of the Pad\'e approximants of the Stokes
expansion.  To gain intuition, let $\beta\in\{1,2,3,4,6,8,12,16\}$ and
consider the function
\begin{equation}\label{eq:fz:beta}
  f(z) = \sum_{l=0}^\infty \frac{z^{50l}}\pi\int_{-r_l}^{r_l}
  \frac{\sqrt{r_l^2-s^2}}{a_l+s-z}\,ds, \qquad
  a_l = \frac1{1+\big[(l+1)\ln(l+e)\big]^\beta},\qquad
  r_l = 10^{-2-5l},
\end{equation}
which contains an infinite sequence of branch cuts
$[a_l-r_l,a_l+r_l]\subset\mbb R$ that shrink in size as they approach
the origin. While $f(z)$ is infinitely differentiable at $z=0$, its
Maclaurin series has a radius of convergence of zero. The truncated
continued fraction $z^{-1}\gaussk_{n=0}^N \frac{d_nz}1$ of the formal
power series $\sum_{\nu=0}^\infty c_\nu z^\nu$ with
$c_\nu=\frac1{\nu!}f^\e{\nu}(0)$ gives the $[m/k]$ Pad\'e approximant
of $f(z)$ of order $N$, where $m=\lfloor N/2\rfloor$ and $k=\lceil
N/2\rceil$. The $d_n$ are obtained from the $c_\nu$ using the
quotient-difference (qd) algorithm
\cite{cuyt,lorentzen:book}. We plan to study this example in
  more detail in future work but report our preliminary results here.
  We can show that $\rho_\nu^{-1}:=\big|c_\nu/c_{\nu-1}\big|$ climbs
  through an infinite staircase with flat plateau regions separated by
  localized jumps, similar to what we imagine will happen in
  figure~\ref{fig:growthFD}(a,b) as $\nu\to\infty$. We can also show
  that $\rho_\nu^{-1}\sim (\nu/50)^\beta$, i.e.,
  $\lim_{\nu\to\infty}\frac{1/\rho_\nu}{(\nu/50)^\beta}=1$. Thus,
  increasing $\beta$ increases the growth rate of $\rho_\nu^{-1}$.
The question is whether the Pad\'e approximants converge to $f(x)$
outside of the branch cuts, e.g., at $x=3/4$.  Through
numerical tests up to order $N=1600$ using 5000 digits of precision
and $\beta\in\{1,2,3,4,6\}$, we find that the error $\big|f(x) -
  x^{-1}\gaussk_{n=0}^N \frac{d_nx}1\big|$ at $x=3/4$ decreases
geometrically as $N$ increases, but the decay rate gets worse as
$\beta$ increases. Repeating this for $\beta=16$, the error decreases
initially but reaches a barrier that prevents the Pad\'e approximants
from converging to $f(3/4)$. The cases $\beta\in\{8,12\}$ also
  exhibit barriers, but at smaller thresholds than $\beta=16$. It is not
  clear whether these barriers obstruct convergence of the Pad\'e
  approximants or merely delay it.

The reason for the breakdown in convergence has to do with whether
poles of the Pad\'e approximant continue to be distributed to
low-index branch cuts to improve the quadrature approximation
\cite{allen:75} of the Cauchy-Stieltjes integrals in
\eqref{eq:fz:beta} as new branch cuts are encountered.  The factor
$z^{50l}$ shifts the Maclaurin series of the $l^\text{th}$ integral in
\eqref{eq:fz:beta} to higher orders of the expansion, so a new branch
cut is encountered when $N$ is a multiple of 50. The newly
encountered branch cuts draw some of the Pad\'e poles away from
lower-index branch cuts, but the overall trend when $\beta$ is small
is that increasing $N$ increases the number of poles in each branch
cut of index $l\le\lfloor N/50\rfloor$, often favoring lower-index
branch cuts and leaving the most recently encountered branch cuts
devoid of poles. By contrast, when $\beta=16$, the high-index branch
cuts rapidly acquire poles at the expense of low-index branch cuts.
When $\beta=16$, the number of poles in the $l=0$ branch cut
$[a_0-r_0,a_0+r_0]=\big[\frac{49}{100},\frac{51}{100}\big]$ decreases
from $25$ at $N=49$ to $14$ at $N=524$. It then alternates between
14 and 15 for $524\le N\le798$ and remains equal to 14 for
$798\le N\le1600$, with no sign of rebounding. The error in the
$N^\text{th}$-order Pad\'e approximant at $x=3/4$ does not improve
after $N=11$ in this case. We plan to investigate this example
  in more detail in future work.

\section{Lower bounds on small and large divisors}
\label{sec:proof32}

In this section we prove Theorem~\ref{thm:nonresonant} and perform a
numerical test to show that the bounds in the theorem are indicative
of what happens in practice. We also state and prove a theorem that as
the wave number $p$ increases, the spacing between potentially small
divisors increases, as do the size and number of large divisors near
every potentially small divisor, defined as a divisor bounded by
$\coth(\mu_0)$ in magnitude.  Most of these will not be a new
`smallest divisor seen so far,' but even satisfying
$|\lambda_{p,j}|\le\coth(\mu_0)$ becomes increasingly unlikely. Recall
from \S\ref{sec:res:depths} that
\begin{equation}
  \lambda_{p,j}= p\frac{\tanh(p\mu_0)}{\tanh\mu_0} - j^2, \qquad\quad
  \lambda_p = \min_{j\in p+2\mbb Z} \big|\lambda_{p,j}\big|,
\end{equation}
and that we write $\lambda_{p,j}(\mu_0)$ and $\lambda_p(\mu_0)$ in
contexts where multiple depths $\mu_0$ are being discussed.

{
\renewcommand{\thetheorem}{\ref{thm:nonresonant}}
\begin{theorem}
  For each $\delta>0$, the set
  \begin{equation}\label{eq:E:del:def}
    \mc E_\delta = \Big\{\mu_0>0 \;\,\Big\vert\;\, \exists \; a>0
    \;\, \text{such that} \;\, \forall\;p\ge2, \; 
    \lambda_{p}(\mu_0)\ge \min\big(a,p^{-\frac12-\delta}\big) \Big\}
  \end{equation}
  has full Lebesgue measure. For $\delta\le0$, $\mc E_\delta$ has
  Lebesgue measure 0.  For $\delta\le-\frac12$, $\mc E_\delta$ is the
  empty set.  If $\delta>\frac12$ and $\mu_0>0$ is rational, then
  $\mu_0\in\mc E_\delta$.
\end{theorem}
\addtocounter{theorem}{-1}
}

\begin{proof}
To prove the first assertion, we will show that the complement $\mc
E_\delta^c=(0,\infty)\setminus\mc E_\delta$ has measure zero. Fix
$\delta>0$ and $\mu_0\in \mc E_\delta^c$. Then either $\mu_0$ is a
resonant depth or we can construct a sequence
$\{(p_i,a_i)\}_{i=1}^\infty$ with the properties that $a_1=1$ and, for
$i\ge1$, $a_{i+1}=\lambda_{p_i}$ with $p_i$ the smallest integer
$p\ge2$ satisfying $\lambda_p<\min(a_i,p^{-\frac12-\delta})$. An
induction argument shows that for $i\ge1$, the last element of the
finite sequence $\{\lambda_2,\lambda_3,\dots,\lambda_{p_i}\}$ is the
unique smallest element, and $p_{i+1}>p_i$. For each $i\ge1$, we can
choose $j_i\ge1$ to have the same parity as $p_i$ and to satisfy
$|\lambda_{p_i,j_i}|=\lambda_{p_i}$. (The argmin of the formula for
  $\lambda_p$ with $p\ge2$ is never $j=0$.) This procedure yields a
sequence $\{(p_i,j_i)\}_{j=1}^\infty$ satisfying
\begin{equation}\label{eq:pi1:seq}
  p_{i+1}>p_i\ge2, \qquad
  0<\lambda_{p_i}<p_i^{-\frac12-\delta}, \qquad
  \lambda_{p_{i+1}}<\lambda_{p_i}, \qquad
  \big|\lambda_{p_i,j_i}\big|=\lambda_{p_i}, \qquad \big(i\ge1\big).
\end{equation}
From $\big|\lambda_{p_i,j_i}\big|<p_i^{-\frac12-\delta}$ and the
triangle inequality, we have
\begin{equation}
  \begin{aligned}
    \big| p_i - j_i^2\tanh\mu_0\big|\;\, &= \;\, \Big| \tanh(\mu_0)\lambda_{p_i,j_i}
    + p_i\big[1 - \tanh(p_i\mu_0) \big] \Big| \\  &<\;\,
  \tanh(\mu_0)\,p_i^{-\frac12-\delta} + p_i\big[ 1 - \tanh(p_i\mu_0) \big].
  \end{aligned}
\end{equation}
Since $\mu_0>0$ and $0<(1-\tanh x)\le 2e^{-2x}$ for $x>0$, there
exists $p_*$ large enough that
\begin{equation}
  p^{\frac32+\delta}\big[ 1 - \tanh(p\mu_0)
  \big] < \big(1-\tanh\mu_0\big), \qquad\quad (p\ge p_*).
\end{equation}
Choose $i_*$ large enough that $p_i\ge p_*$ for $i\ge i_*$. Then
\begin{equation}\label{eq:tanh:bound1}
  \big|p_i - j_i^2\tanh\mu_0 \big|< p_i^{-\frac12-\delta}, \qquad \big(i\ge i_*\big)
\end{equation}
and, since $p_i\ge2$,
\begin{equation}\label{eq:ji:pi:bound}
  j_i^2\tanh\mu_0 - p_i
  \;<\; p_i^{-\frac12-\delta}
  \;<\; \jt\frac12p_i \quad \Rightarrow \quad
  \frac32p_i> j_i^2\tanh\mu_0, \qquad \big(i\ge i_*\big).
\end{equation}
Equation \eqref{eq:tanh:bound1} now gives
\begin{equation}
  \big|j_i^2\tanh\mu_0 - p_i\big|< p_i^{-\frac12\delta}\left(
    {\jt\frac23} j_i^2 \tanh\mu_0 \right)^{-\frac12-\frac12\delta}
   < j_i^{-1-\delta}, \qquad
  \big(i\ge i_*\big),
\end{equation}
where we increased $i_*$ if necessary to achieve
$p_i^{-\frac12\delta}<\left( \frac23 \tanh\mu_0
  \right)^{\frac12+\frac12\delta}$ for $i\ge i_*$.  We conclude that\,
$\tanh\mu_0$\, belongs to the set
\begin{equation}\label{eq:Fdelta:def}
  \mc F_\delta = \Big\{ x\in\mbb R \;\,\Big\vert\;\, \exists \;\,
  \text{infinitely many pairs}
  \;\,(p,j)\in\mbb Z\times\mbb N \;\,\text{s.t.} \;\,
  \Big| x - \frac{p}{j^2} \Big| < \frac1{j^{3+\delta}} \Big\}.
\end{equation}
Borosh \& Fraenkel \cite{borosh} proved that the Hausdorff dimension
of $\mc F_\delta$ is $\frac{3}{3+\delta}$. Since this is less than 1,
its Lebesgue measure is zero. We have established that
\begin{equation}
  \mu_0\in \wtil{\mc F}_\delta = \tanh^{-1}\big( \mc F_\delta\cap (0,1) \big).
\end{equation}
The inverse hyperbolic tangent function is absolutely continuous and
increasing on any compact interval $[x_1,x_2]\subset(0,1)$, so
$\tanh^{-1}\big(\mc F_\delta\cap [x_1,x_2] \big)$ has measure zero by
Theorem~7.18 of \cite{rudin:cx}. It follows that $\wtil{\mc F}_\delta$
has measure zero. We conclude that $\mc E_\delta^c$ is a subset of the
union of $\wtil{\mc F}_\delta$ with the countable set of resonant
depths, and hence has measure zero.

Next fix $\delta\le-\frac12$. Then $\mc E_\delta$ coincides with the
set
\begin{equation}\label{eq:mcE:def}
  \mc E = \Big\{\mu_0>0 \;\,\Big\vert\;\, \exists \; a>0 \;\,
  \text{such that} \;\, \forall\;p\ge2, \; 
  \lambda_{p}(\mu_0)\ge a\Big\}.
\end{equation}
Both inclusions $\mc E_\delta\subset\mc E$ and $\mc E\subset\mc
E_\delta$ follow from reducing $a$ to 1 if necessary and noting that
$\min\big(a,p^{-\frac12-\delta}\big)=a$ for $p\ge2$. We claim that
$\mc E$ is the empty set. Suppose $\mc E$ is not empty and
$\mu_0\in\mc E$. Then there is an $a>0$ such that
\begin{equation}
  \big|p - j^2\tanh\mu_0\big|\;\, \ge \;\, a\tanh\mu_0 - p\,\big[1 - \tanh(p\mu_0)\big],
  \qquad\big(\,p\ge2,\;j\in p+2\mbb Z\,\big).
\end{equation}
Since $0<(1-\tanh x)\le 2e^{-2x}$ for $x>0$, there is a $p_*\ge2$ such
that
\begin{equation}\label{eq:equidistr}
  \big|p - j^2\tanh\mu_0\big|\;\,\ge\;\,  \frac a2\tanh\mu_0, \qquad\quad
  \big(\,p\ge p_*,\; j\in p+2\mbb Z\, \big).
\end{equation}
If $\tanh\mu_0=m/d$ is rational, then setting $j=2p_*d$ and
$p=4p_*^2md$ causes the left-hand side to be zero, a
contradiction. (The factors of 2 and 4 ensure that $j\in p+2\mbb Z$.)
Now suppose $\tanh\mu_0$ is irrational.  We first observe that if $j$
is large enough that $j^2\tanh\mu_0\ge p_*$, then we can round
$j^2\tanh\mu_0$ down or up to obtain an integer $p\ge p_*$ with the
same parity as $j$ and such that the left-hand side of equation
\eqref{eq:equidistr} is less than or equal to 1. This implies that
$\frac a2\tanh\mu_0\le1$.  We know from Weyl's equidistribution
theorem \cite{stein} that $m^2\tanh\mu_0$ is equidistributed on the
unit interval modulo 1, so there exists an $m$ large enough that
$4m^2\tanh\mu_0\ge(p_*+1)$ holds, and such that $m^2\tanh\mu_0$ modulo
1 lies in the interval $\big(0,\frac a8\tanh\mu_0\big)$. This implies
there is an integer $l$ such that $\big|l-m^2\tanh\mu_0\big|<\frac
a8\tanh\mu_0$.  Multiplying by 4 and setting $(p,j)=(4l,2m)$, we
obtain $p$ and $j$ of the same parity such that
$\big|p-j^2\tanh\mu_0\big|<\frac a2\tanh\mu_0$, which contradicts
\eqref{eq:equidistr} once we confirm that $p\ge p_*$. For this, we use
$p-j^2\tanh\mu_0>-\frac a2\tanh\mu_0\ge-1$, which gives
$p>4m^2\tanh\mu_0-1\ge p_*$, as required.

Now suppose $\delta\le0$.  We claim that $\mc E_\delta$ has measure
zero.  We already proved that $\mc E_\delta$ is empty for
$\delta\le-\frac12$, so suppose $-\frac12<\delta\le0$. Let
$\theta=-\big[\delta-(-\frac12)\big]=-\frac12-\delta$, which satisfies
$-\frac12\le\theta<0$.  Let $x_1$ and $x_2$ satisfy $0<x_1<x_2<1$ and
suppose $\mu_0\in\mc
E_\delta\cap\tanh^{-1}\!\big([x_1,x_2]\big)$. Then there is an $a>0$
such that
\begin{equation}
  \big|p-j^2\tanh\mu_0\big| \ge \min\big(a,p^\theta\big)\tanh\mu_0 - p\big[1-\tanh(p\mu_0)\big],
  \qquad \big(\,p\ge 2,\, j\in p+2\mbb Z\big).
\end{equation}
Since $\theta<0$, there is a $p_*$ such that
$\min\big(a,p^\theta\big)=p^\theta$ for $p\ge p_*$. We can increase
$p_*$ if necessary so that
$p^{1-\theta}\big[1-\tanh(p\mu_0)\big]<\frac12\tanh\mu_0$ for $p\ge
p_*$.  Thus,
\begin{equation}\label{eq:pi:theta:lower}
  \big| p - j^2\tanh\mu_0\big| > \frac12p^\theta\tanh\mu_0 \ge
  \frac{x_1}2p^\theta, \qquad \big(\,p\ge p_*,\, j\in p+2\mbb Z\,\big).
\end{equation}
We will use this inequality to show that $\tanh\mu_0$ belongs to a set
of measure zero.  Let $C=(2x_2)^\theta x_1$ and
$C_1=\frac142^{2\theta}C$. Since $-\frac12\le\theta<0$ and
$0<x_1<x_2<1$, we have $0<C_1<C<2^\theta x_1^{1+\theta}<1$. Schmidt
\cite{schmidt:64} proved that for almost every $x\in\mbb R$, the
number $N(M,x)$ of integers $m$ in the range $1\le m\le M$ for which
there exists an integer $l$ satisfying $|m^2x-l|\le
\frac12C_1m^{2\theta}$ satisfies $N(M,x)=\Psi(M)+O(\Psi(M)^{2/3})$ as
$M\to\infty$, where $\Psi(M)=\sum_{m=1}^M C_1m^{2\theta}$. The
intervals
$I_m=\big[{-}\frac12C_1m^{2\theta},\frac12C_1m^{2\theta}\big]$ are
required to be nested ($I_m\supset I_{m+1}$) as $m$ increases, which
is true here since $\theta<0$.  We see that
$\lim_{M\to\infty}\Psi(M)=\infty$ since $C_1>0$ and $-1\le2\theta<0$.
We conclude from Schmidt's theorem that the following set has full
Lebesgue measure
\begin{equation}
  \mc G_{\theta,C_1} = \Big\{ x\in\mbb R \;\,\Big\vert\;\, \exists \;\,
  \text{infinitely many pairs}
  \;\,(l,m)\in\mbb Z\times\mbb N \;\,\text{s.t.} \;\,
  \big| m^2x - l \big| \le \frac12C_1 m^{2\theta} \Big\}.
\end{equation}
Freezing $x_1$, $x_2$ and the corresponding $C$ and $C_1$, the set
$\mc G=\mc G_{\theta,C_1}\cap[x_1,x_2]$ has full measure $x_2-x_1$.
If $x\in\mc G$, there is a sequence
$\big\{(l_i,m_i)\big\}_{i=1}^\infty$ such that
$\big|m_i^2x-l_i\big|\le\frac12C_1m_i^{2\theta}$ and $m_{i+1}>m_i$ for
$i\ge1$. Multiplying by 4 and setting $(p_i,j_i)=(4l_i,2m_i)$, we find
that $p_i$ and $j_i$ have the same parity and
$\big|j_i^2x-p_i\big|\le2C_1m_i^{2\theta}=\frac12Cj_i^{2\theta}$.
Since $C<1$, $\theta<0$ and $j_i\ge1$,
\begin{equation}
  -\frac12\le p_i-j_i^2x\le\frac12, \qquad\quad (i\ge1).
\end{equation}
Choose $i_*$ large enough that $j_i^2x_1>\frac12$ for $i>i_*$. Then
\begin{equation}
  0\quad < \quad j_i^2x_1-\frac12 \quad\le\quad p_i \quad\le\quad j_i^2x_2+\frac12
  \quad < \quad 2j_i^2x_2, \qquad \big( i\ge i_* \big).
\end{equation}
For $i>i_*$, we have $j_i^2>p_i/(2x_2)>0$, so
\begin{equation}\label{eq:pi:theta:upper}
  \big|p_i-j_i^2x\big| \quad\le\quad\frac12C(j_i^2)^\theta \quad < \quad
  \frac12C\Big(\frac {p_i}{2x_2}\Big)^\theta \quad =\quad \frac{x_1}2p_i^\theta,
  \qquad\quad \big(i>i_*\big).
\end{equation}
Since there are infinitely many pairs $(p_i,j_i)$ with the same parity
satisfying \eqref{eq:pi:theta:upper}, $\tanh\mu_0$ satisfying
\eqref{eq:pi:theta:lower} does not belong to $\mc G$.  This shows that
\begin{equation}\label{eq:Edelta:subset}
  \mc E_\delta\cap \tanh^{-1}\big([x_1,x_2]\big) \quad\subset\quad
  \tanh^{-1}\Big( [x_1,x_2]\setminus \mc G \Big).
\end{equation}
Since $[x_1,x_2]\setminus\mc G$ has measure zero and $\tanh^{-1}(x)$
is monotonic and absolutely continuous on the compact interval
$[x_1,x_2]$, Theorem~7.18 of \cite{rudin:cx} ensures that the
right-hand side of equation \eqref{eq:Edelta:subset} has measure zero.  Since
$[x_1,x_2]$ was an arbitrary subinterval of $(0,1)$, we conclude that
$\mc E_\delta$ has measure zero.

Finally, we claim that if $\delta>\frac12$ and $\mu_0>0$ is rational,
say $\mu_0=m/d$ with $m,d\in\mbb N$, then $\mu_0\in\mc
E_\delta$. Lambert's continued fraction \cite{lorentzen:book} is
$\tanh(\mu_0)=\frac
md\raisebox{-3pt}{$+$}\frac{m^2}{3d}\raisebox{-3pt}{$+$}\frac{m^2}{5d}
\raisebox{-3pt}{$+$}\frac{m^2}{7d}
\raisebox{-3pt}{$+\cdots$}$. Theorem 4.1 of \cite{hancl:15} implies
that $\tanh\mu_0$ has irrationality exponent $2$. This means that for
any $\beta>2$, there exists a constant $C>0$ such that
$\big|\tanh\mu_0-\frac pq\big|\ge Cq^{-\beta}$ for all $(p,q)\in\mbb
N^2$. Let $\delta'=\frac12\big(\frac12+\delta\big)$, which satisfies
$\frac12<\delta'<\delta$. We set $\beta=\frac32+\delta'>2$ to obtain
$C$.  Specializing to $q=j^2$ then gives
\begin{equation}\label{eq:j2p:0}
  \big|j^2\tanh\mu_0 - p\big|\ge \frac{C}{j^{1+2\delta'}}, \qquad\quad
    \big( p,j\,\in\,\mbb N \,\big).
\end{equation}
If $\mu_0$ were not in $\mc E_\delta$, then since it is also
non-resonant by Lemma~\ref{fdTranscendentalNumberTheoryLemma}, we
could construct a sequence $\{(p_i,j_i)\}_{i=1}^\infty$ satisfying
equations \eqref{eq:pi1:seq}--\eqref{eq:ji:pi:bound}. From
equations \eqref{eq:tanh:bound1} and \eqref{eq:ji:pi:bound}, we have
\begin{equation}\label{eq:j2p:1}
  \big|j_i^2\tanh\mu_0-p_i\big| \; < \; \Big(\frac23j_i^2\tanh\mu_0\Big)^{-\frac12-\delta}
  \; \le \; \frac{C}{j_i^{1+2\delta'}}, \qquad\qquad \big(i\ge i_*\big),
\end{equation}
where we increased $i_*$ if necessary so that
$\big(\frac23\tanh\mu_0\big)^{-\frac12-\delta}j_i^{2(\delta'-\delta)}\le
C$ for $i\ge i_*$. This contradicts \eqref{eq:j2p:0}, so $\mu_0\in\mc
E_\delta$ as claimed.  \hspace*{\fill} \raisebox{-3pt}{\qed}
\end{proof}

We performed three numerical experiments, with
  $\mu_0\in\{1/16,0.25,0.2499\}$, to study the rate at which finite-depth small
  divisors approach zero in practice. We chose $\mu_0=1/16$ since
there were no jumps in $\rho_\nu^{-1}$ in figure~\ref{fig:growthFD}(a)
for that case.  Figure~\ref{fig:sd:1:16}(a) shows the first
24 pairs $(p,\lambda_p)$ for which $\lambda_p<\big(\min_{2\le q\le
    p-1}\lambda_q\big)$.  We find that $\lambda_p\ge\lambda_2$ for
$2\le p\le 24\,773$. It is difficult to imagine a code ever being
implemented that could reach $p=24\,774$. While
Theorem~\ref{thm:nonresonant} only guarantees that $\mu_0=1/16$
belongs to $\mc E_\delta$ for $\delta>1/2$ (by virtue of being
  rational), it appears to belong to $\mc E_{0.07}$, using $a=0.0155$
in the definition \eqref{eq:E:delta:def} to capture the first point
$\lambda_2$. Decreasing $\delta$ much further would require reducing
$a$. The first divisor $\lambda_{p,j}$ smaller than $10^{-5}$ in
magnitude is $p=714\,638\,949\,293$ and $j=3\,383\,653$, and the first
below $10^{-11}$ is $p\approx2.504\times10^{22}$ and
$j\approx6.333\times10^{11}$, with precise integer values given as
subscripts in the figure. We checked all possibilities with $1\le
j\le10^{12}$, which covers $2\le p\le 6.24\times10^{22}$.
We repeated this for $\mu_0=0.25$, which exhibits a large jump in
$\rho_\nu^{-1}$ in figure~\ref{fig:growthFD}(a) due to a nearby harmonic
resonant depth (at $\mu_0\approx 0.249977976$), and for $\mu_0=0.2499$, to
compare to the $\mu_0=0.25$ case. The results are shown in
figure~\ref{fig:sd:1:16}(b). In both cases, $\mu_0$ appears to
belong to $\mc E_\delta$ with $\delta=0.1$, where $a=\lambda_{6037}=0.000447$
for $\mu_0=0.25$ and $a=\lambda_{48}=0.0587$ for $\mu_0=0.2499$.  As
with $\mu_0=1/16$, new small divisors roughly follow the dashed orange
line $p^{-1/2}$ in the log-log plot, and lie above the line
$p^{-0.6}$, except for a few initial outliers that are accounted for
with the parameter $a$ in equation \eqref{eq:E:delta:def}. These three
experiments confirm that the sets $\mc E_\delta$ in
Theorem~\ref{thm:nonresonant} are well-suited to describe what
actually happens in practice.

\begin{figure}[t]
  \begin{center}
    \includegraphics[width=\linewidth]{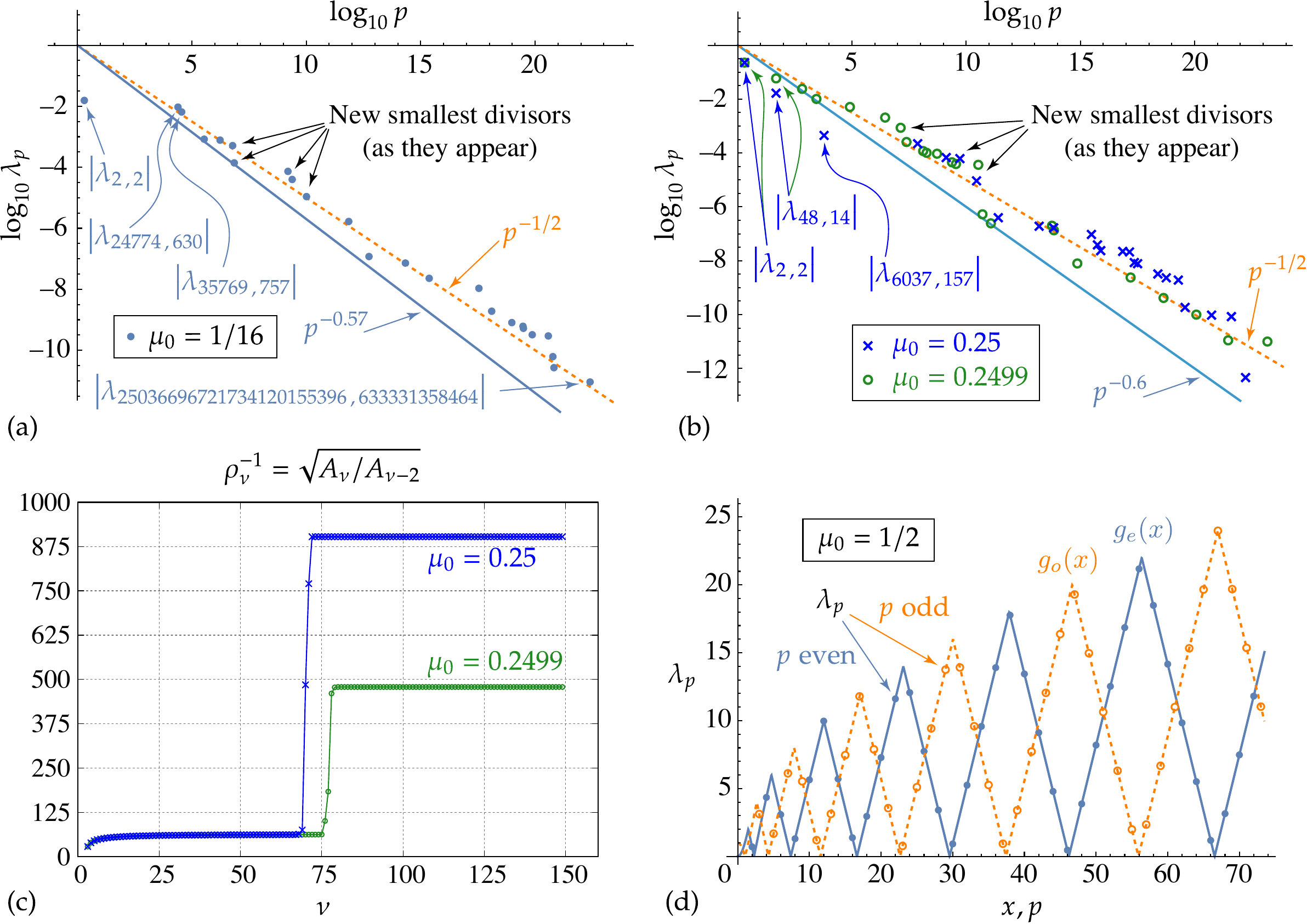}
  \end{center}
  \caption{\label{fig:sd:1:16} Behavior of $\lambda_p=|\lambda_{p,j}|$
    for $\mu_0\in\{1/16,0.25,0.2499,1/2\}$. (a,b) On the rare
    occasion that a new smallest divisor enters the recurrence, it
    does so near $p^{-1/2}$ (orange dashed line), remaining
      above the blue line $p^{-\frac12-\delta}$ once $p$ is large
      enough, and above $\min\big(a,p^{-\frac12-\delta}\big)$ for
      $p\ge2$.  Here $a=\lambda_2=0.0155$ and $\delta=0.07$ for
      $\mu_0=1/16$; $a=\lambda_{6037}=0.000447$ and $\delta=0.1$ for
      $\mu_0=0.25$; and $a=\lambda_{48}=0.0587$ and $\delta=0.1$ for
      $\mu_0=0.2499$.  This suggests that $1/16\in\mc E_{0.07}$ and
      $\{0.25,0.2499\}\subset\mc E_{0.1}$ in
      Theorem~\ref{thm:nonresonant}.  (c) The $(p,j)=(48,14)$ mode is
      nearly resonant for both $\mu_0=0.25$ and $\mu_0=0.2499$, but
      $|\lambda_{48,14}|$ is $0.282$ times smaller for $\mu_0=0.25$
      than for $\mu_0=0.2499$, which causes a larger and earlier jump
      in $\rho_\nu^{-1}$ in the former case by the mechanism described
      in \S\ref{sec:sdiv:growth}. (d) Each $\lambda_p$ lies on
    a sawtooth-shaped curve, one for $p$ even (blue lines) and the
    other for $p$ odd (orange dashed lines). }
\end{figure}

The main difference in the pattern of new smallest divisors for
$\mu_0=0.25$ versus $\mu_0=0.2499$ in figure~\ref{fig:sd:1:16}(b)
is that $\mu_0=0.25$ is nearly resonant for
$(p,j)\in\big\{(48,14),(6037,157)\big\}$, which causes
$\lambda_p=|\lambda_{p,j}|$ to lie well below the line $p^{-0.6}$ in
the plot for $p\in\{48,6037\}$. By contrast, $\mu_0=0.2499$ has a
weaker resonance at $(p,j)=(48,14)$, with $\lambda_p$ slightly below
$p^{-0.6}$, and $\lambda_{6037}=9.46$ is not small in this
case. Because the $(p,j)=(48,14)$ resonance is weaker when
$\mu_0=0.2499$ than when $\mu_0=0.25$, the jump in
$\rho_\nu^{-1}$ in figure~\ref{fig:sd:1:16}(c) is smaller
for $\mu_0=0.2499$ than for $\mu_0=0.25$, and is also delayed, since it
takes longer for the new growth pattern in the expansion coefficients
(excited by $\lambda_{48,14}$) to become the dominant mechanism for
growth, as described in \S\ref{sec:sdiv:growth} for
$\mu_0=0.25$. If it were possible to compute the series to much higher
order, we expect there would be another large jump in $\rho_\nu^{-1}$
for the $\mu_0=0.25$ case due to the small value of
$\big|\lambda_{6037,157}\big|$, but that eventually both cases
$\mu_0\in\{0.2499,0.25\}$ would have similar sequences of jumps as new
small divisors are encountered, since both of these choices of $\mu_0$
belong to $\mc E_\delta$ with $\delta=0.1$ and have similar patterns
of new smallest divisors for large $p$.

Next we discuss large divisors and the spacing between potentially
small divisors.  Figure~\ref{fig:sd:1:16}(d) shows that
$\lambda_p$ is obtained by sampling two sawtooth-shaped curves with
progressively larger `teeth.' The blue curve is
$g_e(x)=\min_{j\in2\mbb Z}|f(x)-j^2|$ and the dashed orange curve is
$g_o(x)=\min_{j\in2\mbb Z+1}|f(x)-j^2|$, where
\begin{equation}
  f(x)=x\frac{\tanh(x\mu_0)}{\tanh(\mu_0)}.
\end{equation}
Here $x\ge0$ is a continuous variable and
\begin{equation}
  \lambda_p=g_e(p), \qquad p\in\{2,4,6,\dots\}, \qquad\qquad
  \lambda_p=g_o(p), \qquad p\in\{3,5,7,\dots\}.
\end{equation}
We plotted the case of $\mu_0=1/2$ (instead of $1/16$ or $1/4$) to reduce
the oscillation frequency of $g_e(x)$ and $g_o(x)$ and show more sampled
values of $(p,\lambda_p)$ on each monotonic segment of these
functions.  Theorem S3.3 below confirms the idea that $\lambda_p$ can
only be small for $p$ near the zeros of $g_e(x)$ or $g_o(x)$, which
are close to the centers of the intervals $I_j$ that parameterize the
V-shaped troughs of these curves. If $p$ is close enough to a zero of
one of the curves that $\lambda_p\le\coth(\mu_0)$, we will show that
the next opportunity for this to happen again is near a zero of the
other curve. The spacing between these zeros grows linearly with the
index $j$, as does the height of the $j^\text{th}$ peak of $g_e(x)$ or
$g_o(x)$. Since a zero of one curve is close to the peak of the other,
there will be many large values of $\lambda_q$ for $q$ near $p$ of the
opposite parity. We will show that if a small divisor excites growth
in the Stokes coefficients of mode $(p,j)$, there will be large
divisors suppressing growth in modes $(q,k)$ with $k\ne j$ or $q\ne p$
of the form $q=p\pm l$ for $l=\{1,3,5,\dots,l_\text{max}\}$, where
$l_{\text{max}}\approx j\tanh(\mu_0)$.  As an extreme example, when
$\mu_0=1/16$, the smallest value of $\lambda_p$ that occurs in the
range $2\le p\le 6.24\times 10^{22}$ is $9.13\times 10^{-12}$, where
$p\in I_j$ with $j=633\,331\,358\,464$. The two nearest neighbors have
$\lambda_{p\pm1}\ge1.267\times10^{12}$ and each odd $q$ satisfying
$|q-p|\le3.95\times10^{10}$ yields $\lambda_q\ge 6.33\times10^{11}$.
We note that $\lambda_q\lambda_p\ge5.78$ for each of these $q$'s, so
the large divisors suppress modes $(q,k)$ more strongly than the small
divisor amplifies mode $(p,j)$.

Before stating the theorem, we need to establish notation. Let
$\{x_j^e\}_{j=0}^\infty$ and $\{x_j^o\}_{j=0}^\infty$ denote the
location of successive peaks and zeros of $g_e(x)$ and $g_o(x)$,
respectively. At $x_0^e=x_0^o=0$, $g_o(x)$ has a peak while $g_e(x)$
has a zero. Since $f(1)=1$, $g_o(x)$ has a zero at $x_1^o=1$. Let
$\mbb N_o=\{1,3,5,\dots\}$, $\mbb N_e=\{2,4,6,\dots\}$,
$I_j=(x_{j-1}^o,x_{j+1}^o]$ for $j\in\mbb N_o$ and
$I_j=(x_{j-1}^e,x_{j+1}^e]$ for $j\in\mbb N_e$. Let $*$ denote the
symbol $e$ or $o$ and let $j\in\mbb N_*$. Then $g_*(x)=|f(x)-j^2|$ for
$x\in I_j$. In our enumeration of peaks and zeros, $x_j^*$ is the zero
of $g_*(x)$ in $I_j$ and $x_{j+1}^*$ is the peak at the right endpoint
of $I_j$. So $x=x_j^*$ is the solution of $f(x)=j^2$ while
$x=x_{j+1}^*$ is the solution of $f(x)-j^2=(j+2)^2-f(x)$, i.e.,
$f(x)=[(j+1)^2+1]$.  The value of $g_*(x)$ at $x=x_{j+1}^*$ is
$2j+2$. Between a zero and peak of one curve is a peak and zero of the
other, $\{f(x^*_j),f(x^\dagger_j),f(x^\dagger_{j+1}),
f(x^*_{j+1})\}=\{j^2,j^2+1,(j+1)^2,(j+1)^2+1\}$, where
$\dagger\in\{e,o\}$ and $\dagger\ne *$. The order of the points is
then $\{x_1^o,x_1^e,x_2^e,x_2^o, x_3^o,x_3^e,x_4^e,x_4^o,\dots\}
=f^{-1}\big(\{1,2,4,5,9,10,16,17,\dots\}\big)$. Since $f(x)$ is an
increasing bijection of $[0,\infty)$ to $[0,\infty)$, so is
$f^{-1}(y)$. For $*\in\{e,o\}$ and $j\in\mbb N_*$, we define $\mc
P_j=\{p\in\mbb N_*\,:\,p\ge2\,,\,p\in I_j\}$ so that
$\lambda_p=|\lambda_{p,j}|$ for $p\in\mc P_j$.  For each $p\ge2$ there
is precisely one $j$ such that $p\in\mc P_j$.  This is because
$\cup_{j\in\mbb N_o}I_j=(0,\infty)$ and $\cup_{j\in\mbb
  N_e}I_j=(x_1^e,\infty)$. The only question is whether $x_1^e<2$,
which follows from $2<f(2)$. (We showed that $p<f(p)<p^2$ for $p\ge2$
  in \S\ref{sec:res:depths}.) Since $x_1^e>x_1^o=1$, $\mc P_j=
\mbb N_*\cap I_j$ when $j\ge2$.

\begin{lemma} If $x\mu_0>1$ then $f'(x)>\coth(\mu_0)$. \end{lemma}

\begin{proof} Use
  $\tanh(x\mu_0)>[1-\sech^2(x\mu_0)]$\, in\, $f'(x)=\frac{\tanh(x\mu_0)+
    x\mu_0\sech^2(x\mu_0)}{\tanh(\mu_0)}$. \hspace*{\fill} \qed
\end{proof}

\begin{lemma}
  Suppose $M>0$ is large enough that $0<u(1-\tanh u)<\frac13\mu_0$ for
  all real $u\ge M$. Then if $y\ge \frac M{\mu_0\tanh\mu_0}$ and
  $x=f^{-1}(y)$, there exists $\theta\in\big(0,\frac13\big)$ such
  that\, $x=y\tanh(\mu_0)+\theta$.
\end{lemma}

\begin{proof}
  We know $x=f^{-1}(y)$ exists and $x>0$. Since $\tanh(x\mu_0)<1$ and
  $x\frac{\tanh(x\mu_0)}{\tanh\mu_0}=y$, we have $x>y\tanh(\mu_0)\ge
  M/\mu_0$. So $u = x\mu_0\ge M$ and $\theta = \frac{u(1-\tanh
      u)}{\mu_0}\in\big(0,\frac13\big)$.  Finally, $y = f(x) =
  x\frac{1-(1-\tanh(x\mu_0))}{\tanh\mu_0}
  =\frac{x-\theta}{\tanh\mu_0}$. \hspace*{\fill} \qed
\end{proof}

\begin{theorem}
  Let $M=\max\big(6,\log\big(\frac{36}{\mu_0}\big)-6\big)$ and suppose
  $j\in\mbb N$ with $j\ge\big(\sqrt{M}\coth(\mu_0)+1\big)$.  Then
  there is at most one $p\in\mc P_j$ with $\lambda_p\le\coth(\mu_0)$.
  If $p\in\mc P_j$ with $\lambda_p\le\coth(\mu_0)$ and $q\ne p$ is an
  integer satisfying $|q-p|\le\big[(j-1)\tanh(\mu_0)-\frac43\big]$,
  then $q\ge2$ and $\lambda_q>\coth(\mu_0)$.  If, additionally, $q-p$
  is odd, then $\lambda_q>j$. If $k$ has the same parity as $p$ and
  $k\ne j$, then $|\lambda_{p,k}|>\frac83j$.
\end{theorem}

\begin{proof}
  First we check that $M$ satisfies the hypotheses of lemma
  S3.2. Since $0<u(1-\tanh u)= \frac{2u}{e^{2u}+1}<2ue^{-2u}$, it
  suffices to show that $2ue^{-2u}\le\frac13\mu_0$ for $u\ge M$. Since
  $u\ge M\ge6$, we have $u\le 6e^{u-6}$ and $2ue^{-2u}\le
  12e^{-(u+6)}\le 12e^{-(M+6)}\le \frac13\mu_0$.

  We observe here that if $\mu_0\ge36e^{-12}=2.21\times 10^{-4}$ then
  $M=6$, which covers typical fluid depths.  Since
  $\tanh(\mu_0)<\mu_0$, the condition on $j$ ensures that $(j-1)^2\ge
  M\coth^2(\mu_0)> \frac{M}{\mu_0\tanh\mu_0}$. Since $j$ is an integer
  and $j\ge\big(\sqrt{6}\coth(\mu_0)+1\big)> \big(\sqrt6+1\big)\approx
  3.45$, we also have $j\ge4$.

  Let $*\in\{e,o\}$ denote the parity of $j$.  Since $j\ge4\ge2$, $\mc
  P_j=\mbb N_*\cap I_j$ and the endpoints of $I_j$ satisfy
  $x_{j\pm1}^*=[(j\pm1)^2+1]$. For each $x\in I_j$, $y=f(x)\ge
  f(x_{j-1}^*)= [(j-1)^2+1]> \frac M{\mu_0\tanh\mu_0}$. By lemmas S3.2
  and S3.1, $x\mu_0=\mu_0(y\tanh\mu_0+\theta)> M>1$ and
  $f'(x)>\coth\mu_0$. By the mean value theorem, for any $p,q\in\mc
  P_j$ we have $\lambda_p+\lambda_q=|f(p)-j^2|+|f(q)-j^2|\ge
  |f(p)-f(q)| = |f'(r)(p-q)|>|p-q|\coth(\mu_0)$, where $r$ is a real
  number between $p$ and $q$. Since $p,q\in\mbb N_*$,
  $|p-q|\ge2$. Choosing $p\in\mc P_j$ to minimize $\lambda_p$ and
  assuming $\lambda_q\le\coth(\mu_0)$ with $q\ne p$ forces
  $\lambda_p+\lambda_q\le 2\coth(\mu_0)$, a contradiction.

  Suppose $p\in\mc P_j$ with $\lambda_p\le\coth(\mu_0)$.  Let
  $\dagger\in\{e,o\}$ with $\dagger\ne*$. Recall that $x_j^*$ is the
  zero of $g_*(x)$ on $I_j$ and $x_{j\pm1}^\dagger$ are the adjacent
  zeros of $g_\dagger(x)$, so
  $f(\{x_j^*,x_{j\pm1}^\dagger\})=\{j^2,(j\pm1)^2\}$.  Since
  $\lambda_p=|f(p)-j^2|\le\coth(\mu_0)$, we know $f(p)\ge
  y_1:=(j^2-\coth\mu_0)>(j-1)^2$, where we used $\coth(\mu_0)<(2j-1)$
  in the last inequality, which follows from
  $(j-1)\ge\sqrt{M}\coth(\mu_0)$.  Since
  $(j-1)^2\ge\frac{M}{\mu_0\tanh\mu_0}$, there exist
  $\theta_1,\theta^\dagger_{j-1}\in(0,\frac13)$ such that
  \begin{equation}
    x_1=f^{-1}(y_1)=j^2\tanh(\mu_0)-1+\theta_1, \qquad
    x_{j-1}^\dagger=(j-1)^2\tanh(\mu_0)+\theta^\dagger_{j-1}.
  \end{equation}
  Since $f^{-1}(y)$ is monotonic, $p\ge x_1$. Thus,
  $p-x_{j-1}^\dagger\ge x_1-x_{j-1}^\dagger >
  \big[(2j-1)\tanh(\mu_0)-\frac43\big].$ Since
  $|q-p|\le\big[(j-1)\tanh(\mu_0)-\frac43\big]$, we conclude that
  $q-x_{j-1}^\dagger>j\tanh(\mu_0)$. The mean value theorem then gives
  $f(q)-(j-1)^2=f'(r)(q-x_{j-1}^\dagger)>j$, where $r$ is between
  $x_{j-1}^\dagger$ and $q$. Similarly, $\lambda_p\le\coth(\mu_0)$
  gives $f(p)\le y_2:=(j^2+\coth\mu_0)$, so $p\le x_2=f^{-1}(y_2)$ and
  \begin{equation*}
      x_{j+1}^\dagger-p \;\ge\; x_{j+1}^\dagger-x_2 \;=\;
      \big[(j+1)^2-j^2\big]
      \tanh(\mu_0)+\theta_{j+1}^\dagger-1-\theta_2
      \;>\; (2j+1)\tanh(\mu_0)-\frac43.
  \end{equation*}
  The bound on $|q-p|$ gives $x_{j+1}^\dagger-q>(j+2)\tanh(\mu_0)$.
  Applying the mean value theorem again gives $(j+1)^2-f(q)>(j+2)$.
  We have shown that $(j-1)^2+j<f(q)<(j+1)^2-(j+2)$. Since $j\ge4$,
  $f(1)=1<13<f(q)$, so $q\ge2$. If $q\in\mbb N_\dagger$,
  $\lambda_q=g_\dagger(q)= \min\big(f(q)-(j-1)^2,(j+1)^2-f(q)\big)>
  \min(j,j+2)=j$. Otherwise, we use $(j-1)^2+1<f(q)<(j+1)^2+1$ to
  conclude that $q\in\mc P_j=\mbb N_*\cap I_j$, and therefore
  $\lambda_q>\coth(\mu_0)$.

  Finally, if $k\in\mbb N_*$ and $k\ne j$, then
  $\lambda_{p,k}=|f(p)-k^2|\ge |j^2-k^2|-|f(p)-j^2|$. The first term
  is minimized by $k=j-2$, and $|f(p)-j^2|\le\coth(\mu_0)\le
  M^{-1/2}(j-1)\le 6^{-1/2}(j-1)$. Using $j\ge4$ and $M\ge6$, we have
  $\lambda_{p,k}\ge (4j-4) -
  6^{-1/2}(j-1)\ge(4-6^{-1/2})\frac34j>\frac83j$.
   \hspace*{\fill} \raisebox{-3pt}{\qed}
\end{proof}

\section{Derivation of the ODEs governing the Stokes coefficients}
\label{FDSecODEStokesCoefficients}

In this section we briefly derive the equations of motion for the
Stokes expansion coefficients \eqref{Stokes} from the governing
equations \eqref{CRAnsatzed}--\eqref{BernoulliAnsatzed} of the spatial
Fourier modes.  Using \eqref{Stokes} to expand the algebraic equation
(\ref{CRAnsatzed}) in powers of $\epsilon$, we obtain
\begin{align}
  \label{CRGreekT2}
  & \beta_{p,n} + p\gamma_{p,n} +  T^2_{p,n} = 0, \qquad\qquad
  \Big( p\in\mbb N \;,\; n\in \mbb N\cup\{0\} \Big), \\
  \notag
  & T^2_{p,n} = \cosh(p\mu_0) \Bigg(\sum_{q=1}^{p-1} 
    \sum_{k=0}^{n} 
    \frac{q}{2\cosh(q\mu_0)\cosh[(p-q)\mu_0]} \alpha_{q,k} \beta_{p-q,n-k}  
    \\
    \label{eq:T2:def}
    & \qquad\qquad + 
    \sum_{q=1}^{n}\sum_{k=0}^{n-q} 
    \frac{q}{2\cosh(q\mu_0)\cosh[(p+q)\mu_0]}  \alpha_{q,k} 
    \beta_{p+q,n-q-k}
    \\ \notag
    & \qquad\qquad - \sum_{q=1}^{n}\sum_{k=0}^{n-q} 
    \frac{p+q}{2\cosh(q\mu_0)\cosh[(p+q)\mu_0]} \alpha_{p+q,k} 
    \beta_{q,n-q-k}  
    \Bigg).
\end{align}
We match the notation $T^2_{p,n}$ introduced by Amick \& Toland
\cite{amick1987semi} for the analogous forcing term in the
infinite-depth case. To avoid listing special cases, empty sums are
always taken to mean zero.  Equation (\ref{fdKFSp0}) gives
differential equations for the $\dot\mu_n$, namely
\begin{equation}\label{fdKinematicGreekP0}
  \begin{split}
    & \dot{\mu}_{n} + T^1_{0,n} = 0, \qquad\qquad \Big( n\in\mbb N\cup\{0\} \Big), \\
    & T^1_{0,n} =  \sum_{q=1}^{n} 
    \sum_{k=0}^{n-q}\sum_{l=0}^{n-q-k} \frac{q}{2\cosh^2(q \mu_0)}  
    \alpha_{q,k} 
    \dot{\alpha}_{q,l} s_{2q,n-q-k-l} \\  &  \qquad\qquad + \sum_{q=1}^{n} 
    \sum_{k=0}^{n-q}\sum_{l=0}^{n-q-k} \sum_{m=0}^{n-q-k-l} 
    \frac{q^2}{2\cosh^2(q\mu_0)} \alpha_{q,k} \alpha_{q,l} \dot{\mu}_{m}
    c_{2q,n-q-k-l-m}.
  \end{split}
\end{equation}
There is no analogous forcing term $T^1_{0,n}$ in the infinite-depth
case, but Amick and Toland only defined $T^r_{p,n}$ for
$r\in\{2,3,4\}$, so we make use of the omitted $r=1$ index.  We note
that
\begin{equation}
  T^1_{0,0}=0 \qquad \Rightarrow \qquad \dot\mu_0=0,
\end{equation}
consistent with $\mu_0 L/2\pi$ being the depth of the bottom boundary
in physical space, which remains stationary as the standing wave
evolves in time.  Finally, (\ref{fdKFSp1}) gives the differential
equation
\begin{align}
  \label{fdKinematicGreekP1}
  \dot{\alpha}_{p,n} &- p\gamma_{p,n} + T^3_{p,n} = 0, \qquad\qquad \Big(\,
    p \in \mbb N\;,\; n\in\mbb N\cup\{0\} \, \Big), \\
  \label{kinematicGreekT3}
  \begin{split}
          T^3_{p,n} &=  \frac{1}{ s_{p,0}} \vast[ 
	\sum_{q=0}^{n-1} 
	\lp \dot{\alpha}_{p,q} - p\gamma_{p,q} \rp s_{p,n-q}
	+ 2 p \sum_{q=0}^{n-1} 
	\sum_{k=1}^{n-q}
	\alpha_{p,q} \dot{\mu}_k  c_{p,n-q-k} \\ & \quad\; + \cosh(p\mu_0) 
	\vast( - \sum_{q=1}^{p-1} \sum_{k=0}^{n} \sum_{l=0}^{n-k} 
	  \frac{(p-q)\alpha_{p-q,k}\dot{\alpha}_{q,l}
            s_{p-2q,n-k-l}}{2\cosh[(p-q)\mu_0]\cosh(q\mu_0)}
	\\ &  \qquad +  \sum_{q=1}^{n} 
	\sum_{k=0}^{n-q}\sum_{l=0}^{n-q-k} 
	\frac{ \big[\lp p+q\rp 
	    \alpha_{p+q,k}\dot{\alpha}_{q,l} +
            q\alpha_{q,k}\dot{\alpha}_{p+q,l}\big]
		s_{p+2q,n-q-k-l}}{2\cosh(q\mu_0)\cosh((p+q)\mu_0)}
	\\ &  \qquad + \sum_{q=1}^{p-1} \sum_{k=0}^{n-1} \sum_{l=0}^{n-k-1} 
	\sum_{m=1}^{n-k-l} 
	\frac{(p-q)q\alpha_{p-q,k}\alpha_{q,l} \dot{\mu}_m 
		c_{p-2q,n-k-l-m}}{2\cosh[(p-q)\mu_0]\cosh(q\mu_0)}
	\\ &  \qquad + \sum_{q=1}^{n-1} 
	\sum_{k=0}^{n-q-1}\sum_{l=0}^{n-q-k-1} \sum_{m=1}^{n-q-k-l} 
	\frac{ 
		q(p+q)\alpha_{q,k}\alpha_{p+q,l} \dot{\mu}_m 
		c_{p+2q,n-q-k-l-m}}{\cosh(q\mu_0)\cosh[(p+q)\mu_0]}
	\vast)\vast],
	\end{split}
\end{align}
while (\ref{BernoulliAnsatzed}) gives
\begin{align}
  \label{fdBernoulliGreekP1}
  \dot{\gamma}_{p,n} &+ \sigma_0 \tanh(p\mu_0) \alpha_{p,n} + T^4_{p,n} = 0, \qquad
  \Big(\,p\in\mbb N\;,\; n\in\mbb N\cup\{0\}\,\Big), \\
  \label{fdBernoulliGreekT4}
  \begin{split} T^4_{p,n} &= \frac{1}{c_{p,0}} \vast[
      \sum_{q=0}^{n-1} 
      \dot{\gamma}_{p,q}c_{p,n-q} 
      +
      \sum_{q=0}^{n-1} 
      \sum_{k=0}^{n-q} \alpha_{p,q} \sigma_k
      s_{p,n-q-k}
      \\ & \quad + \cosh(p\mu_0)
      \vast(
	\sum_{q=1}^{n} 
	\sum_{k=0}^{n-q}\sum_{l=0}^{n-q-k}  
	\frac{\beta_{q,k}\beta_{p+q,l}c_{p+2q,n-q-k-l}}{2\cosh(q\mu_0)\cosh[(p+q)\mu_0]}
	\\
	&  \qquad - \sum_{q=1}^{p-1} \sum_{k=0}^{n} \sum_{l=0}^{n-k} 
	\frac{\beta_{p-q,k} \beta_{q,l} 
	  c_{p-2q,n-k-l}}{4\cosh[(p-q)\mu_0]\cosh(q\mu_0)}
	\\
	&  \qquad -\sum_{q=1}^{n} 
	\sum_{k=0}^{n-q} \sum_{l=0}^{n-q-k} \frac{\lp \beta_{p+q,k} 
	  \dot{\alpha}_{q,l}+\beta_{q,k}\dot{\alpha}_{p+q,l} \rp c_{p,n-q-k-l}
	}{2\cosh(q\mu_0)\cosh[(p+q)\mu_0]}
	\\ &   \qquad +  \sum_{q=1}^{p-1} \sum_{k=0}^{n} \sum_{l=0}^{n-k} 
	\frac{\beta_{p-q,k} 
	  \dot{\alpha}_{q,l} c_{p,n-k-l}
	}{2\cosh[(p-q)\mu_0]\cosh(q\mu_0)} 
	\vast) \vast].
  \end{split}
\end{align}

\section{Computational aspects}\label{sec:comp:aspects}

In this section we examine the practical aspects of computing the
Stokes coefficients in Fourier space efficiently on a parallel
computer.  We represent the functions $\mu_n(t)$, $\alpha_{p,n}(t)$,
$\beta_{p,n}(t)$, $\gamma_{p,n}(t)$ and $B_{p,n}(t)$ through the real
coefficients ($\mu_{n,j}$, $\alpha_{p,n,j}$, etc.)  that appear in the
trigonometric polynomial representations \eqref{fdFourierSeriesMu},
\eqref{fdFourierSeriesAlpha}, \eqref{fdFourierSeriesBeta},
\eqref{fdFourierSeriesGamma} and \eqref{eq:Bqm:form}.  We reduce
memory costs by only storing the Fourier modes that are present in the
primed sums in those equations.  We also do not store $c_{q,n,j}$ or
$s_{q,n,j}$ in equation \eqref{eq:cqm:sqm:form} since they equal
$\frac12(B_{q,m,j}\pm B_{-q,m,j})$, due to \eqref{eq:cqn:sqn:bell}.

In the algorithm of \S\ref{sec:recursive:alg}, summarized in
figure~\ref{FDAlgorithmImage}, from the point that $\mu_N(t)$ has just
been computed in the previous iteration to the point that
$T^1_{0,N+1}(t)$ is evaluated in order to compute $\mu_{N+1}(t)$, the
coefficients $B_{q,m,j}$ that will be needed by any of the
$c_{q,m}(t)$ and $s_{q,m}(t)$ that appear in the formulas for the
$T^r_{p,n}$ satisfy $0\le m\le N$ and $|q|+2m\le 2N+3$.  Thus,
immediately after $\mu_N(t)$ becomes known, we compute the new Bell
polynomials $B_{q,m}(t)$ and $B_{-q,m}(t)$ with $(q,m)\in\{(1,N)\}\cup
L_{2N+2}^\circ\cup L_{2N+3}^\circ$.

It is clear that the time complexity of the recursive algorithm is
dominated by the computation of the forces $T^2_{p,n}$, $T^3_{p,n}$,
and $T^4_{p,n}$ for $(p,n)\in L_\nu^\circ$ with $\nu\in\{2N+2,2N+3\}$.
Unlike the infinite-depth case in \cite{amick1987semi}, our forces
$T^3_{p,n}$ and $T^4_{p,n}$ are no longer quadratic functions of
previously computed quantities ($\alpha_{j,k}$, $\dot\alpha_{j,k}$,
  $\beta_{j,k}$, etc.), but are now quartic and cubic, respectively.
This is because of the conformal depth function $h(t)$ and the
hyperbolic trigonometric functions it introduces. It may be possible
to introduce additional auxiliary variables to accumulate intermediate
pairwise products to reduce this complexity. We did not pursue this
idea for the finite-depth case but succeeded with this strategy for
the infinite-depth case with or without surface tension. These results
will be reported elsewhere \cite{abassi:semi2}.

Although the triple and quadruple sums reduce the maximum order
$\nu_\text{max}$ that is feasible with available computational
resources relative to the infinite-depth problem, we were able to
compute the solution to very high order ($\nu_\text{max}=149$) by
designing our code to run on a supercomputer using a hybrid MPI/OpenMP
framework \cite{chopp:book} using MPFR \cite{mpfr:toms} for
multiple-precision arithmetic. Every sum appearing in the forces,
regardless of the number of indices, is computed in parallel using MPI
and OpenMP reductions. Each thread of each MPI task accumulates a
partial sum of the terms it is responsible for. For example, the sum
\begin{equation}\label{eq:a:mu:B:term}
\sum_{q=1}^{p-1} \sum_{k=0}^{n-1} \sum_{l=0}^{n-k-1}
	\sum_{m=1}^{n-k-l} 
	\frac{(p-q)q\alpha_{p-q,k}\alpha_{q,l} \dot{\mu}_m 
	  c_{p-2q,n-k-l-m}}{2\cosh[(p-q)\mu_0]\cosh(q\mu_0)}
\end{equation}
appears in the formula \eqref{kinematicGreekT3} for $T^3_{p,n}$. When
a thread processes one of the terms of this sum, it computes the
inverse FFTs of the temporal Fourier coefficients of each
factor, namely
\begin{equation}\label{eq:a:mu:B}
  \alpha_{p-q,k,j}, \quad 
  \alpha_{q,l,j}, \quad 
  ij\mu_{m,j}, \quad 
  B_{p-2q,n-k-l-m,j}, \quad 
  B_{2q-p,n-k-l-m,j},
\end{equation}
to obtain values for $\alpha_{p-q,k}(t)$, $\alpha_{q,l}(t)$,
$\dot\mu_m(t)$ and $B_{\pm(p-2q),n-k-l-m}(t)$ for $t\in[0,2\pi)$ on a
uniform grid $\mc G_M = \{2\pi j/M\}_{j=0}^{M-1}$, with enough grid
points $M$ to resolve $T^3_{p,n}(t)$ with no aliasing errors.  Since
$T^3_{p,n}(t)$ is a trigonometric polynomial of degree $p+2n$, the
minimum grid size is $M_\text{min} = 2(p+2n+1)$. We choose the
smallest integer $M\ge M_\text{min}$ of the form
$M=2^{m_2}3^{m_3}5^{m_5}$ with $m_2\ge1$, $m_3\in\{0,1\}$ and
$m_5\in\{0,1\}$, which are grids for which the FFT and inverse FFT are
particularly fast. The value of $M$ increases as the computation
progresses to higher orders $\nu=p+2n$. Examples include $M=240$ for
$\nu=109$ and $M=320$ for $\nu=149$.  We wrote a custom FFT library to
work efficiently with the MPFR data type to avoid allocation of
temporary variables as much as possible; otherwise it is a standard
radix-2, 3 and 5 FFT algorithm, optimized as in \cite{fft:2345}. We
also wrote specialized MPI communication routines to send sequences of
MPFR numbers using character strings for the mantissas (exported in
  base 32) and integers for the exponents.

Continuing with the example in \eqref{eq:a:mu:B:term}, the Fourier
modes in equation \eqref{eq:a:mu:B} are written into complex arrays of size
$M/2+1$, indexed by $0\le j\le M/2$. Each set of modes in equation
\eqref{eq:a:mu:B} fits in this array size without truncation, and is
zero-padded to fill up the space. Multiplying $\mu_{m,j}$ by $ij$
gives the Fourier coefficients of $\dot\mu_m(t)$. We use the c2r
version \cite{recipes} of the inverse FFT, which assumes
negative-index Fourier modes are the complex conjugate of
positive-index modes (without storing them) and returns real function
values on the uniform grid $\mc G_M$. We then evaluate
$c_{p-2q,n-k-l-m}$ from $B_{\pm(p-2q),n-k-l-m}$ on $\mc G_M$. All the
factors in \eqref{eq:a:mu:B:term} are now known on the uniform grid,
and are multiplied together pointwise. Each thread of each MPI task is
assigned a subset of the indices $q$, $k$, $l$ and $m$ in the sum
\eqref{eq:a:mu:B:term} and accumulates the partial sum over these
indices.  This is repeated for the other sums in the formula
\eqref{kinematicGreekT3} for $T^3_{p,n}(t)$.  These results are
combined with those of the other threads and nodes at the end via
parallel reduction.  Finally, a forward FFT is taken to convert from
physical space back to Fourier space, where the solution of the ODEs
for $\alpha_{p,n}(t)$, etc., is `read off' from the Fourier
representations of the forces. Computing the time derivative $\dot
T^3_{p,n}$ in \eqref{eq:II:star} is also easily performed in Fourier
space.  We compute the Bell polynomials through a similar procedure in
which $B_{q,n}(t)$ is accumulated on a uniform grid in time via the
recursion \eqref{eq:Bqn:recur}. Taking the FFT of the sum gives the
Fourier coefficients $B_{q,n,j}$, which are the representation stored
in memory.

Although it would be possible to process all the lattice points within
$L_{2N+2}^\circ$ independently in parallel, followed by all the points
in $L_{2N+3}^\circ$, we elected to process the lattice points
sequentially and parallelize the computation at the level of
individual sums in the forces. This is simpler and leads to
near-perfect load balancing without having to worry about how the
number of terms in the sums in the forcing terms $T^r_{p,n}$ varies
with $p$ and $n$ at a given level $p+2n=\nu$.

\section{Effects of finite-precision arithmetic}
\label{sec:floating:point}

We computed the expansion coefficients $\alpha_{p,n,j}$,
$\beta_{p,n,j}$, $\gamma_{p,n,j}$, $\mu_{n,j}$ and $\sigma_n$ for the
dimensionless fluid depths listed in \eqref{eq:mu0:list}.  Our code
employs MPFR with a fixed mantissa size, so running the calculation
multiple times with different precisions allows us to observe the
accumulation of roundoff errors in the lower-precision results.
Figure~\ref{fig:sigErr}(a) shows the relative error in $\sigma_n$
in a 64-digit (212-bit) calculation at each depth $\mu_0$ using a
90-digit (300-bit) calculation for the reference solution. We use
$\sigma_n$ since it is a scalar quantity that is influenced by
roundoff errors in all the other coefficients up to order
$\nu=2n$. Other measures of error, such as the relative error in the
vector $\vec\alpha^\e\nu$ containing the $\alpha_{p,n,j}$ of order
$p+2n=\nu$ with $j\in E_\nu$, lead to similar results.

\begin{figure}[t]
  \begin{center}
    \includegraphics[width=\linewidth]{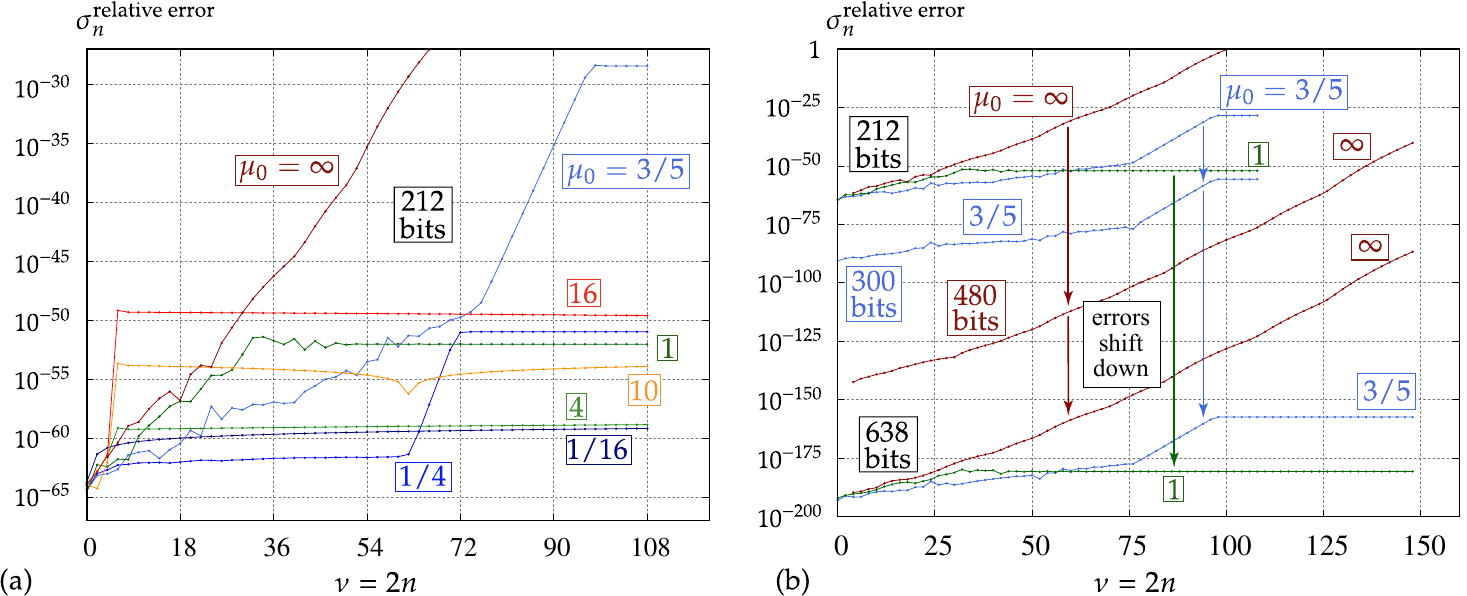}
  \end{center}
  \caption{\label{fig:sigErr} Relative errors in the Stokes expansion
    coefficient $\sigma_n$ due to floating-point errors in the
    algorithm of \S\ref{sec:recursive:alg}. Errors are estimated
    by repeating the calculation with more precision. Unlabeled
    numbers in boxes give the fluid depth, $\mu_0$, of each curve.
    The vertical shifts in panel (b) are close to
    $2^{212-480}=2.1\times10^{-81}$, $2^{480-638}=2.7\times10^{-48}$,
    $2^{212-638}=5.8\times10^{-129}$, $2^{212-300}=3.2\times10^{-27}$,
    and $2^{300-638}=1.8\times10^{-102}$.}
\end{figure}

In all seven finite-depth cases, we find that the relative error
exhibits two types of behavior, one where it saturates to a
steady-state value over several iterations, and one where it grows
until it reaches another plateau level. In the plateau regions, the
absolute error grows at a similar rate to $\sigma_n$ itself (so the
  relative error remains flat), while in the growth regions, the
absolute error grows faster than $\sigma_n$. This growth could be
partly due to an increasing amount of cancellation in the formulas for
the forcing terms $T^r_{pn}$ at higher order, where large terms of
similar size and opposite sign are added together. Additionally, the
recursion may cause these roundoff-error perturbations to grow at a
faster rate than the solution itself, e.g., through a similar process
to losing digits when computing minimal solutions of three-term scalar
recurrence relations. In the infinite-depth case, the relative error
grows steadily without entering any plateau regions.

Figure~\ref{fig:sigErr}(b) shows the relative error in $\sigma_n$ for
fluid depths $\mu_0\in\{3/5,1,\infty\}$ computed to order $\nu=149$
using 64 digits (212 bits) and 192 digits (638 bits). Also plotted are
a 90-digit (300-bit) calculation for $\mu_0=3/5$ and a 144-digit
(480-bit) calculation for $\mu_0=\infty$. We used a 256-digit (850-bit)
calculation for the reference solution when computing errors
in the 192-digit cases. For both finite and infinite depth, we find that
increasing the precision by $b$ bits causes the error curves to shift
down by a factor of approximately $2^{-b}$ while retaining their shape
(aside from small fluctuations). One could potentially use the
lower-precision calculation to estimate the error in the
higher-precision result by assuming that nearly identical growth and
plateau regions will occur. However, all errors reported in this paper
are from a lower-precision calculation checked against an auxiliary
higher-precision calculation.

To compute the continued fraction expansion coefficients in equations
\eqref{eq:T:expand} and \eqref{eq:hat:eta:phi:cfrac} for $\mu_0=1$, we
use both the standard and progressive forms of the qd-algorithm
\cite{cuyt} in 192-digit (638-bit) floating-point arithmetic and
compare the results to each other to estimate the accuracy of $d_n$
and $\tilde d_{19,n}$. The relative error between the two calculations
is zero for $d_0$ and $\tilde d_{19,0}$ and grows from $10^{-192}$ for
$d_1$ to $10^{-137}$ for $d_{74}$, and from $10^{-192}$ for $\tilde
d_{19,1}$ to $10^{-135}$ for $\tilde d_{19,65}$.  This observed loss
of precision in the continued fraction coefficients is consistent with
the condition numbers one encounters (namely $1.6\times10^{54}$ for
  $T$ and $7.7\times10^{55}$ for $\hat\varphi_{19}$) if one solves for
the polynomial coefficients of $P(x)$ and $Q(x)$ in equation
\eqref{eq:pade:def} directly from $\tau_0,\dots,\tau_{74}$ or
$\tilde\tau_{19,0},\dots,\tilde\tau_{19,65}$ by computing the
nullspace of a Toeplitz matrix \cite{gonnet:13}.  The errors in
$\tau_n$ and $\tilde\tau_{19,n}$ from computing the Stokes expansions
in finite-precision arithmetic will also affect the accuracy of the
continued fraction expansions. We repeated the entire calculation with
256 digits (850 bits) and find that the relative errors in $d_n$ and
$\tilde d_{19,n}$ for the 192-digit calculation are uniformly less that
$10^{-119}$, which is far smaller than the errors in the shooting
method.

For the results of figure~\ref{fig:imaginary:poles} in the
$\mu_0=1/16$ case, we computed the Stokes expansion and its Pad\'e
approximants twice, once with 64 digits (212 bits) and once with 128
digits (424 bits). Using the latter calculation to measure error in the
former shows that the maximum relative error in any pole or zero in
figure~\ref{fig:imaginary:poles} is bounded by
$9.6\times10^{-29}$. Thus, the 64-digit calculation has enough accuracy
to distinguish the pole $z_p$ from the zero $z_0$ in the Froissart
doublet labeled $FD$ in figure~\ref{fig:imaginary:poles}(b),
which differ from each other by $|z_p-z_0|/|z_p| = 4.4\times10^{-21}$.

\section{Secondary standing waves and the nonlinear
  deformation of resonant modes}
\label{sec:secondary:waves}

In this section we investigate the secondary standing waves that
oscillate on top of the primary wave with different amplitudes and
phases on different bifurcation branches. Such secondary waves have
been reported previously for standing waves in finite depth
\cite{mercer:94,smith:roberts:99,okamura:99,water2}, three-dimensional
standing waves \cite{rycroft:13}, and gravity-capillary standing waves
\cite{water2,shelton:stand}. Here we explore the effects of
nonlinearity on the shapes of the secondary waves, which deviate from
the sinusoidal patterns one would get from linearization about the
flat rest state.

\begin{figure}[b]
  \begin{center}
    \includegraphics[width=\linewidth]{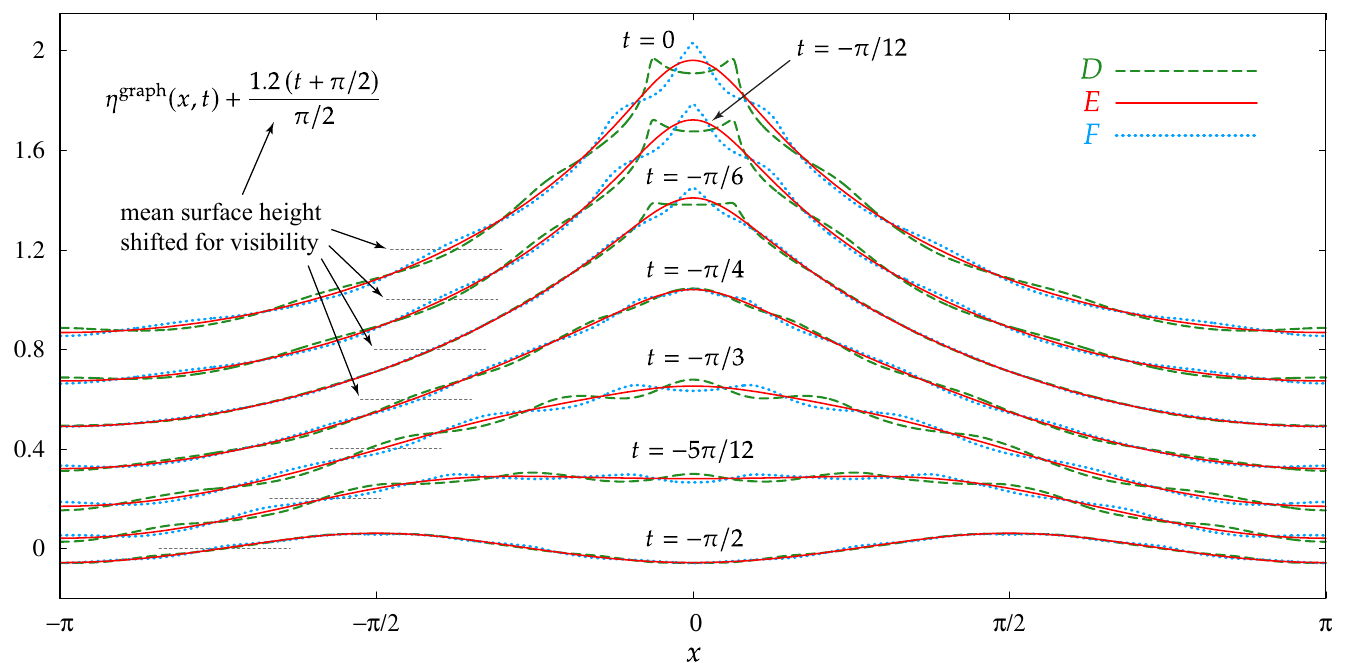}
  \end{center}
  \caption{\label{fig:evol1defBig} Snapshots of the time-evolution of
    the unit-depth standing wave solutions $D$, $E$ and $F$ from the
    bifurcation plots in
    figures~\ref{fig:bif:evol:1}--\ref{fig:bif:19} at the times
    $t\in\mathscr{T}_6$. Vertical offsets were added to the wave
    profiles at successive times for visibility.}
\end{figure}

Figure~\ref{fig:evol1defBig} shows snapshots of the wave profile
$\eta^\text{graph}(x,t)$ for solutions $D$, $E$ and $F$ from the
$\mu_0=1$ bifurcation plots in
figures~\ref{fig:bif:evol:1}--\ref{fig:bif:19} at the dimensionless
times $t\in\mathscr{T}_6$ from equation \eqref{eq:scrT:def}. These three
solutions have a common period, $T=7.267295$, which is 9.4\% larger
than small-amplitude waves in the linear regime at this depth; see
figure~\ref{fig:bif:T1}. Just like solutions $ABC$ at depth
$\mu_0=3/5$ in figure~\ref{fig:bif:evol:06}, the non-uniqueness of
solutions with this period is due to three possible amplitudes of a
secondary standing wave with characteristics of a nearby harmonic
resonance that evolves on top of the primary wave.  Solutions $ABC$
are near the $(5,3)$ resonant depth ($0.6232354$) while solutions
$DEF$ are near the $(7,3)$ resonant depth ($1.039719$). We define the
primary wave to be solution $E$. The secondary wave of solution $F$ is
in phase with solution $E$, which sharpens the crest at $t=0$ and
increases the crest-to-trough height, $\epsilon$, relative to solution
$E$. For solution $D$, the secondary wave is out of phase with
solution $E$, causing a dimple to form at the wave crest at $t=0$ and
decreasing $\epsilon$.  These changes in $\epsilon$ are also evident
in the bifurcation plot of figure~\ref{fig:bif:T1}(a). In
figure~\ref{fig:evol1defBig}, solutions $D$ and $F$ oscillate around
solution $E$ with seven spatial oscillations that deviate visibly
from being sinusoidal perturbations. The oscillations are largest near
the wave crest at $x=0$ when the wave reaches maximum amplitude at $t=0$.

\begin{figure}[t]
  \begin{center}
    \includegraphics[width=\linewidth]{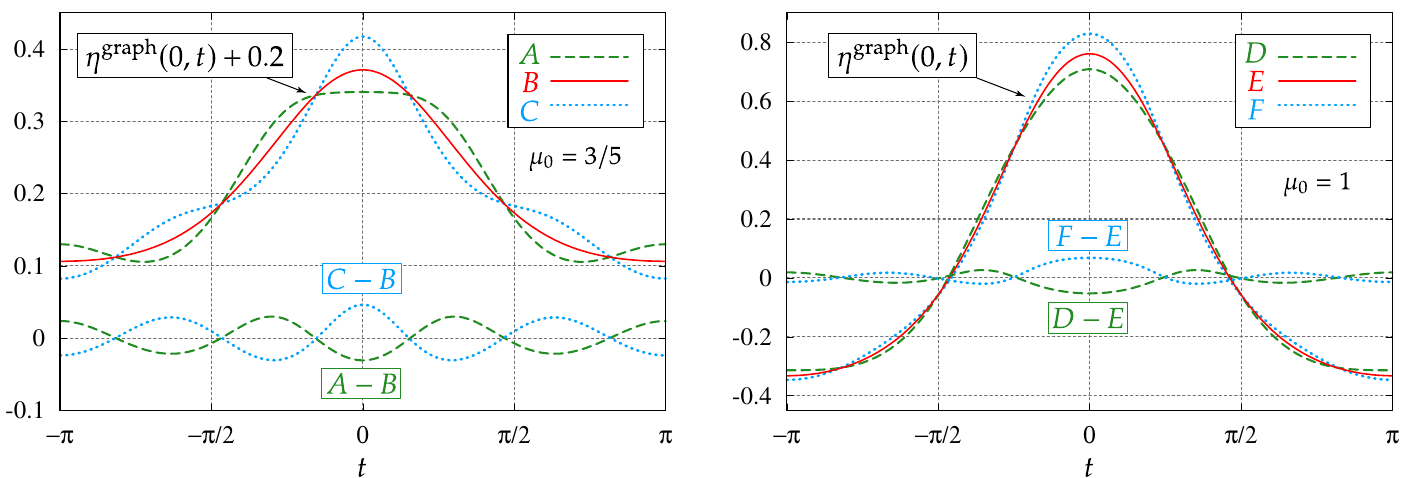}
  \end{center}
  \caption{\label{fig:x0evol1and06} Time-evolution of the wave profile
    above the symmetry point $x=0$ over one period for the $\mu_0=3/5$
    standing waves $ABC$ from figure~\ref{fig:bif:evol:06} (left) and
    the $\mu_0=1$ standing waves $DEF$ from
    figures~\ref{fig:bif:evol:1}--\ref{fig:bif:19}
    and~\ref{fig:evol1defBig} (right).  The curve labeled $A-B$ shows
    the difference
    $\eta^\text{graph}_A(0,t)-\eta^\text{graph}_B(0,t)$, with similar
    formulas for the other cases.}
\end{figure}

In figure~\ref{fig:x0evol1and06}, we plot the time evolution of the
wave profile at $x=0$ over one cycle of the wave for solutions $ABC$
from figure~\ref{fig:bif:evol:06} (left) and solutions $DEF$ from
figures~\ref{fig:bif:evol:1}--\ref{fig:bif:19}
and~\ref{fig:evol1defBig} (right).  We also plot the perturbations
required to move from $B$ to $A$ and $B$ to $C$ (left) and from $E$ to
$D$ and $E$ to $F$ (right). These are specific perturbations from one
standing wave solution of the fully nonlinear water wave equations to
another, viewing solutions $B$ and $E$ as the primary waves and these
perturbations as the secondary waves. We have not investigated the
stability of solutions $B$ and $E$ to arbitrary perturbations
\cite{ioualalen:96,water:stab1}. In the left panel, a
vertical offset of $0.2$ was added to the wave profiles to separate
them from the perturbation plots.  In both panels, the secondary waves
execute three cycles over one period of the composite wave. They
deviate visibly from being sinusoidal perturbations, with non-uniform
oscillations that grow largest near $t=0$.  This is especially true in
the right panel due to nonlinear effects being stronger for
larger-amplitude waves.

\begin{figure}[p]
  \begin{center}
    \includegraphics[width=.89\linewidth]{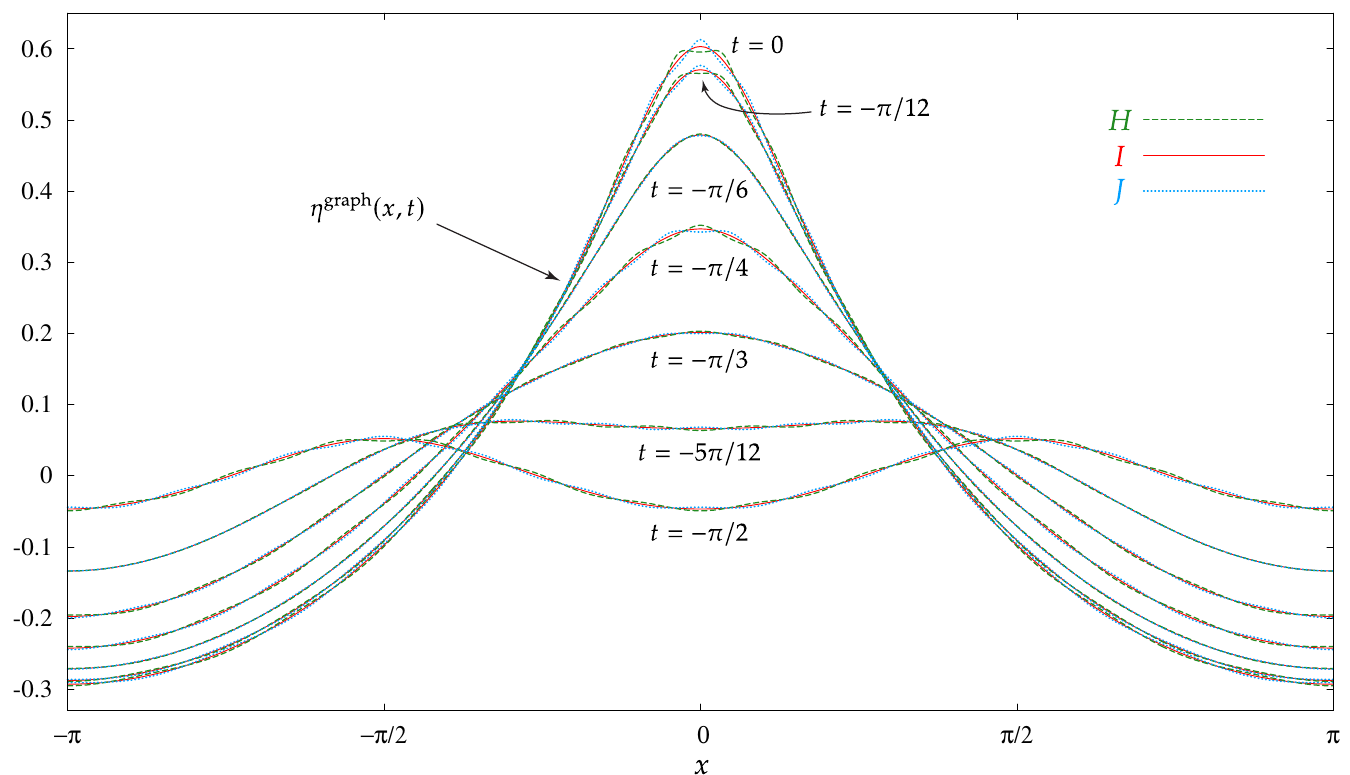}\\[5pt]
    \includegraphics[width=.89\linewidth]{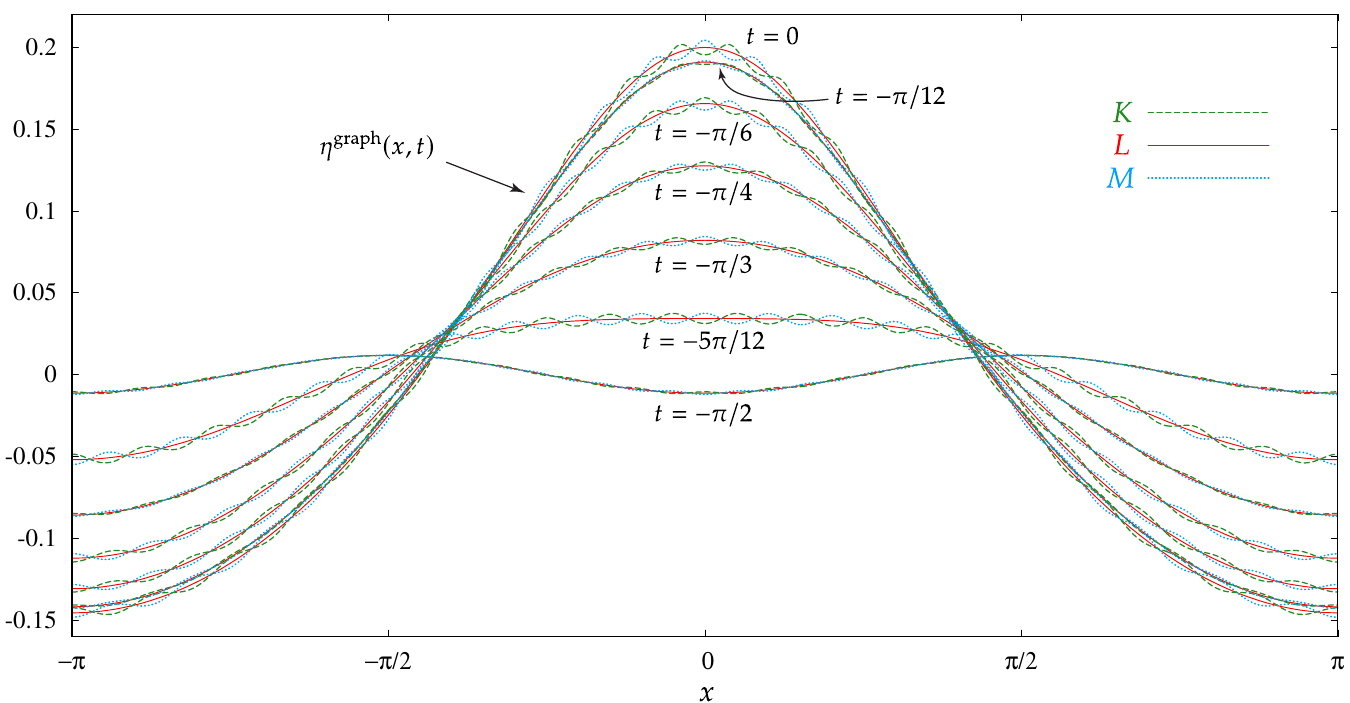}
  \end{center}
  \caption{\label{fig:evol1:hm} Snapshots of the unit-depth standing
    waves labeled $HIJ$ and $KLM$ in
    figures~\ref{fig:bif:evol:1}--\ref{fig:bif:19} at times
    $t\in\mathscr{T}_6$.}
\end{figure}

\begin{figure}[p]
  \begin{center}
    \includegraphics[width=\linewidth]{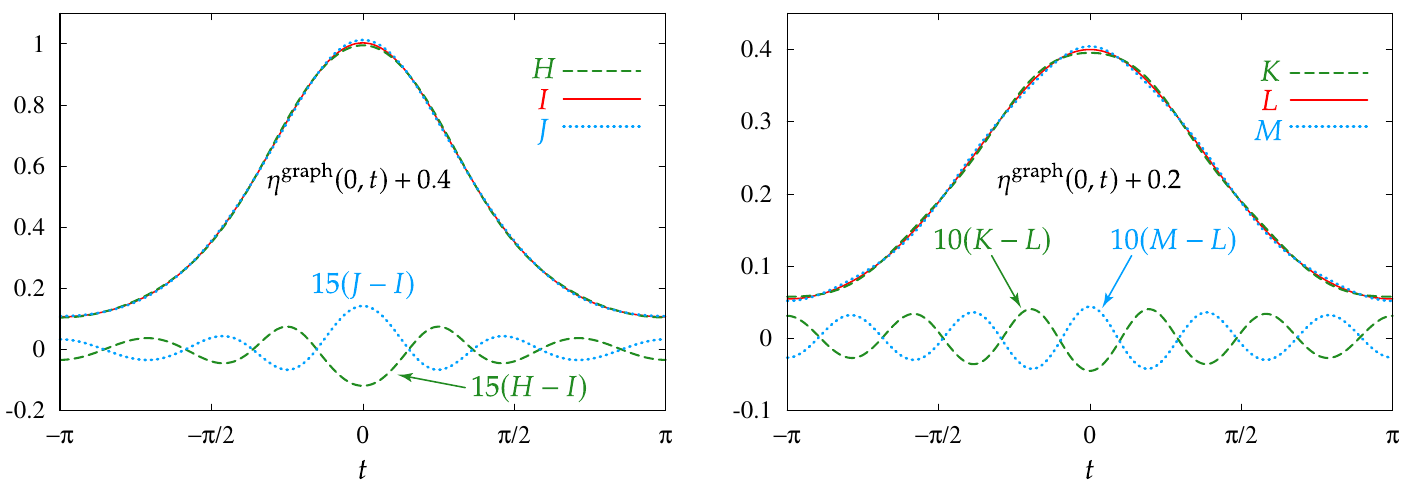}
  \end{center}
  \caption{\label{fig:evol1:x0} Time-evolution of the unit-depth
    standing waves labeled $HIJ$ and $KLM$ evaluated at $x=0$. The
    curve labeled $15(J-I)$ shows
    $15\big[\eta^\text{graph}_J(0,t)-\eta^\text{graph}_I(0,t)\big]$,
    with similar formulas for the other cases.}
\end{figure}

Figure~\ref{fig:evol1:hm} shows snapshots of the unit-depth standing
waves $HIJ$ and $KLM$ from the bifurcation plots of
figures~\ref{fig:bif:evol:1}--\ref{fig:bif:19} at times
$t\in\mathscr{T}_6$, while figure~\ref{fig:evol1:x0} shows the
time-evolution of the wave profiles above the symmetry point $x=0$
over one period.  The waves labeled $HIJ$ have common period
$T=7.227964$ while the waves labeled $KLM$ have common period
$T=7.195747$.  Solutions $H$ and $J$ oscillate around solution $I$
with twelve spatial oscillations and four temporal oscillations
while solutions $K$ and $M$ oscillate around solution $L$ with 19 spatial
oscillations and five temporal oscillations.  Combined with the results
in figures~\ref{fig:evol1defBig} and~\ref{fig:x0evol1and06}, this
confirms that these bifurcation branches correspond to the three
approximate resonances $(p,j)\in\{(7,3),(12,4),(19,5)\}$ in the
cluster of small divisors for $\mu_0=1$ in figure~\ref{fig:smallDiv}.
The secondary standing wave sharpens the wave crest at $t=0$ for
solutions $J$ and $M$ and leads to dimples at the wave crest at $t=0$
for solutions $H$ and $K$. This also causes the crest-to-trough height
$\epsilon$ in figure~\ref{fig:bif:T1} to increase for solutions $J$
and $M$ relative to $I$ and $L$, respectively, and to decrease for
solutions $H$ and $K$. In figure~\ref{fig:evol1:x0}, we multiplied the
perturbation plots by 15 (left) or 10 (right) to better see the
deviation from sinusoidal behavior in the secondary standing
waves. This deviation is more pronounced for solutions $HIJ$ as they
have larger amplitude than solutions $KLM$.

\section{A method of identifying harmonic resonances}
\label{sec:identify:hr}

After using numerical continuation to follow the $DEF$, $HIJ$ and
$KLM$ bifurcation branches in figure~\ref{fig:bif:evol:1}, we noticed
a persistent pole-zero pair in the Pad\'e approximant of $T$ in
figure~\ref{fig:bif:T1} near $\epsilon_*=0.2738080600$. Although the
pole and zero agree with each other to 24 leading digits for the
$149^\text{th}$-order Pad\'e approximant of $T$, it turns out to be an
actual imperfect bifurcation rather than a spurious Froissart doublet
\cite{gonnet:13}. Our goal in this section is to develop a method of
identifying which harmonic resonance is responsible for such a
bifurcation that has been located via Pad\'e techniques. We wish to
avoid relying on numerical continuation to extend the bifurcation
branches far enough that the secondary waves become visible, as this
is expensive.

We used the shooting method to compute sixteen additional solutions at
amplitudes $\epsilon_*\pm\delta_k$, where
$\delta_k=10^{-4.4433-0.4772k}$ for $0\le k\le 7$.  (This was an
  arbitrary choice with the feature that the distance to $\epsilon_*$
  decreases geometrically as $k$ increases.) These solutions had to be
computed in quadruple-precision to see the effects of the
bifurcation. We plotted the Fourier modes $\hat\eta_p$ ($p$ even) and
$\hat\varphi_p$ ($p$ odd) of the initial conditions
\eqref{eq:shooting:fourier} of the shooting method results as
functions of $\epsilon$ for $1\le p\le50$ and found that
$\hat\eta_{36}$ undergoes the largest jump when $\epsilon$ crosses
$\epsilon_*$.  We then computed the $149^\text{th}$-order Pad\'e
approximant $\epsilon^{36}[28/28]_{\tilde\tau_{36}}(\epsilon^2)$ of
$\hat\eta_{36}$ to see if it accurately predicts the shooting method
results near this bifurcation. This is confirmed in
figure~\ref{fig:bif027fit}(a), where all sixteen values of $\hat\eta_{36}$
from the shooting method results lie on the Pad\'e curve. (The errors,
  not shown, range from $3.5\times10^{-28}$ at $\epsilon_*-\delta_0$
  to $2.3\times10^{-22}$ at $\epsilon_*+\delta_7$.)  The four
solutions closest to $\epsilon_*$, with
$\epsilon=\epsilon_*\pm\delta_k$, $k\in\{6,7\}$, are labeled $N$, $P$,
$Q$ and $R$.

Rather than follow the side branches further by numerical continuation
in order to directly observe the secondary waves, we make use of the
fact that the Jacobian $J_{mk}=\partial r_m/\partial\theta_k$ from the
shooting method is nearly singular near an imperfect bifurcation,
where $r_m$ and $\theta_k$ were defined in equations \eqref{eq:obj:fcn} and
\eqref{eq:dof}. In this step, we drop $T$ from the vector $\theta$ in
\eqref{eq:dof} instead of one of the Fourier modes of the initial
condition. This is the Jacobian in the variant of the algorithm where
$T$ is specified as the bifurcation parameter.  At solution $P$, the
smallest singular value of $J$ is $3.4\times 10^{-9}$. The second
smallest is $2.3\times10^{-6}$ and the largest is $1.395$.  The green
circles in figure~\ref{fig:bif027fit}(b) show the magnitudes
of the components of the singular vector corresponding to the smallest
singular value. This is a right singular vector, which we denote by
$\dot\theta$. Here we use a dot for a perturbation direction or a
variational derivative with respect to this perturbation, not for a
time derivative. The components of $\dot\theta$ are the Fourier modes
of the initial conditions of the linearized water wave equations about
solution $P$, given in \cite{water2}, that minimize the norm of $\dot
r=J\dot\theta$, subject to the constraint $\|\dot\theta\|=1$. The
corresponding linearized solution about solution $P$ is denoted
$\big(\dot\eta^\text{graph}(x,t),\dot\varphi^\text{graph}(x,t)\big)$.
The components of $\dot\theta$ are ordered by interlacing
$\dot{\hat\varphi}_k^\text{graph}(t_0)$ for $k$ odd with
$\dot{\hat\eta}_k^\text{graph}(t_0)$ for $k$ even, for $1\le k\le d$.
We set $d=120$ in this calculation and used $M_1=324$ gridpoints. We
did not use adaptive grids, so $N=1$ in \eqref{eq:adapt:grids} and
\eqref{eq:obj:fcn}.

\begin{figure}[b]
  \begin{center}
    \includegraphics[width=\linewidth]{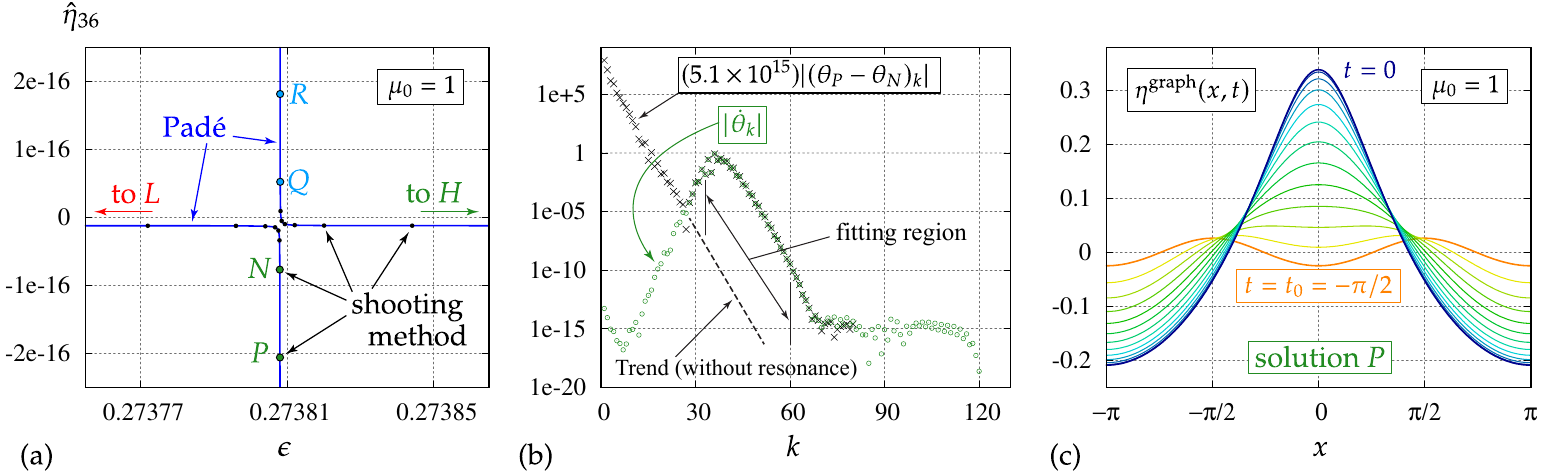}
  \end{center}
  \caption{\label{fig:bif027fit} An imperfect bifurcation near
    $\epsilon_*=0.2738080600$ is predicted by a pole in the
    $149^\text{th}$-order Pad\'e approximant of $T$. (a) We computed
    sixteen shooting method solutions near this pole, which are the black
    markers and points labeled $NPQR$. (b) To identify the harmonic
    resonance responsible for the bifurcation, we computed the
    perturbation direction $\dot\theta$ corresponding to the right
    singular vector of the Jacobian at solution $P$ with the smallest
    singular value and compared the high-frequency components of
    $(\theta_P-\theta_N)$ and $(\theta_R-\theta_Q)$ to those of
    $\dot\theta$. (c) Time-evolution of solution $P$ for
    $t\in\mathscr{T}_{12}$.}
\end{figure}

\begin{figure}[p]
  \begin{center}
    \vspace*{1.5pc}
    \includegraphics[width=\linewidth]{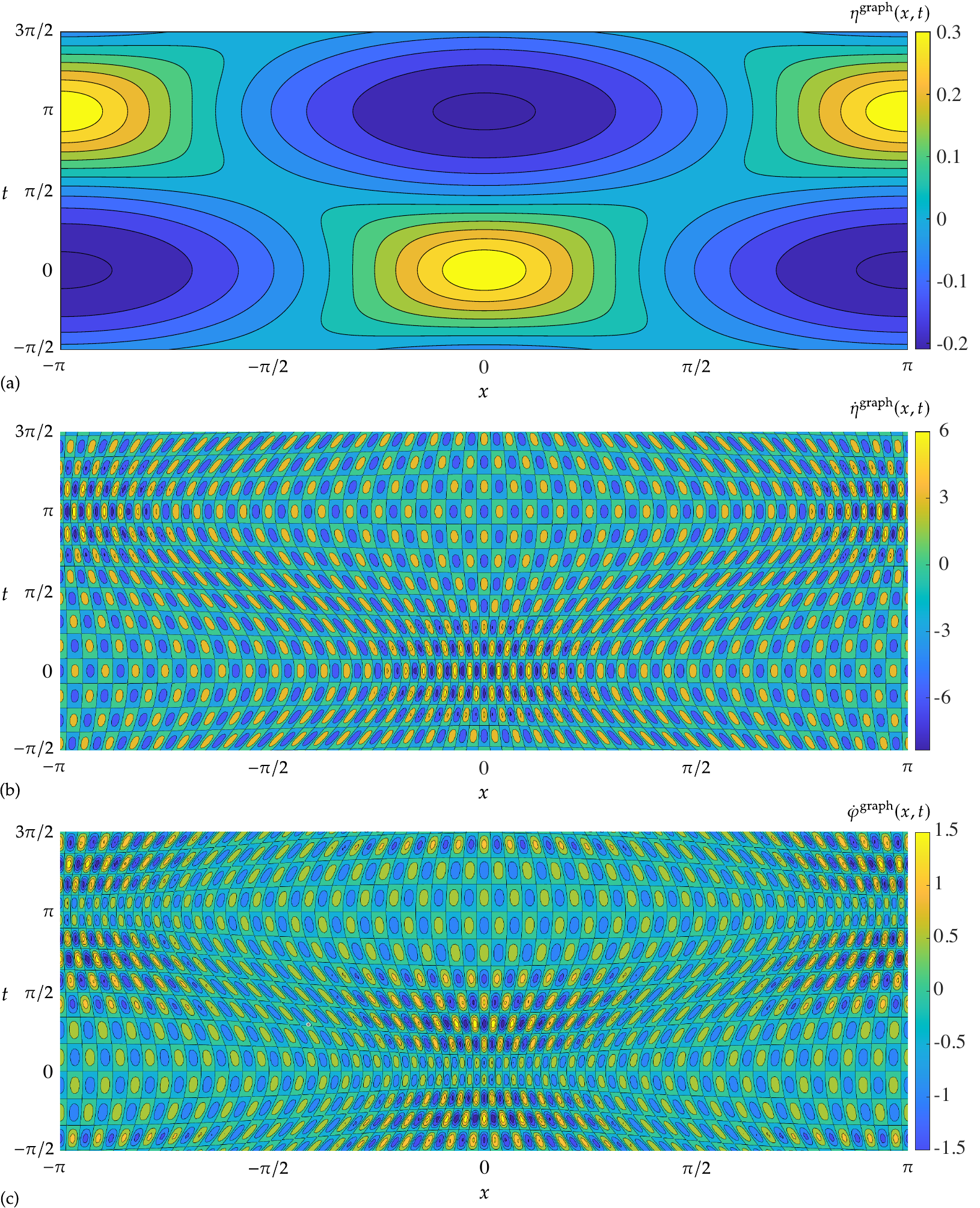}
  \end{center}
  \caption{\label{fig:lin:eta:phi} Contour plot of
    $\eta^\text{graph}(x,t)$ for solution $P$ over a full period and
    both components of the linearized solution about $P$ with initial
    conditions given by $\dot\theta$, the right singular vector of $J$
    at $P$ with the smallest singular value
    $\sigma_\text{min}=3.4\times 10^{-9}$. This linearized solution
    gives the perturbation direction to an approximate secondary
    standing wave of the same period as solution $P$. There are 37
    spatial oscillations and 7 temporal oscillations, but they are not
    uniform.}
\end{figure}

Figure~\ref{fig:bif027fit}(c) shows snapshots of solution
$P$ for $t\in\mathscr{T}_{12}$.  The same data is shown as a contour
plot in figure~\ref{fig:lin:eta:phi}(a), except that
solution $P$ is evolved over a full period $-\frac\pi2\le t\le
\frac{3\pi}2$ instead of a quarter period. The wave crest that forms
at $(x,t)=(0,0)$ appears again, shifted in space and time, at
$(x,t)=(\pm\pi,\pi)$.  Figures~\ref{fig:lin:eta:phi}(b)
and~\ref{fig:lin:eta:phi}(c) show contour plots of the linearized
solution
$\big(\dot\eta^\text{graph}(x,t),\dot\varphi^\text{graph}(x,t)\big)$
with initial conditions $\dot\theta$. The solution was normalized to
make $\dot\theta$ a unit vector in $\mbb R^d$.  The linearized
velocity potential $\dot\varphi^\text{graph}(x,t)$ is
indistinguishable from zero at $t=0$ and $t=\pi$ in the contour plot
of figure~\ref{fig:lin:eta:phi}(c). This is because
$\dot r=J\dot\theta$ satisfies
\begin{equation}\label{eq:rdot:norm}
  \|\dot r\| = \left(\frac1{M_1}
    \sum_{m=0}^{M_1-1} \dot\varphi^\text{graph}(x_{1m},0)^2\right)^{1/2} =
  \sigma_\text{min} = 3.4\times 10^{-9}.
\end{equation}
If $\dot\varphi^\text{graph}$ were exactly zero at $t=0$, a symmetry
argument \cite{mercer:92,mercer:94,water2,water:stab1} would ensure
that $(\dot\eta^\text{graph},\dot\varphi^\text{graph})$ is
time-periodic with period $T$ and $\dot\varphi^\text{graph}$ is zero
again at $t=\pi$.  The small value of $\|\dot r\|$ in equation
\eqref{eq:rdot:norm} nearly achieves the same result, where we find
that
\begin{equation}
  \begin{aligned}
   & \left(\frac1{M_1}\sum_{m=0}^{M_1-1}\left\{\Big[
        \dot\eta^\text{graph}\Big(x_{1m},\frac{3T}4\Big)
        - \dot\eta^\text{graph}\Big(x_{1m},-\frac T4\Big)\Big]^2 + \right.\right. \\
      & \hspace*{1in} \left.\left.
    \Big[\dot\varphi^\text{graph}\Big(x_{1m},\frac{3T}4\Big)
      - \dot\varphi^\text{graph}\Big(x_{1m},-\frac T4\Big)\Big]^2\right\}\right)^{1/2} =
    8.2\times10^{-8}
  \end{aligned}
\end{equation}
and
\begin{equation}\label{eq:secondary:nrms}
  \begin{aligned}
    &\left(\frac1{M_1} \sum_{m=0}^{M_1-1}
  \dot\varphi^\text{graph}(x_{1m},T/2)^2\right)^{1/2} = 1.08\times10^{-8}.
  \end{aligned}
\end{equation}
For reference on the size of the discrete $L^2$ norms in equations
\eqref{eq:rdot:norm}--\eqref{eq:secondary:nrms}, we have
\begin{equation}
  \left(\frac1{M_1}\sum_{m=0}^{M_1-1}\left[
      \dot\eta^\text{graph}\big(x_{1m},-T/4\big)^2 +
      \dot\varphi^\text{graph}\big(x_{1m},-T/4\big)^2
      \right]\right)^{1/2} = \sqrt2,
\end{equation}
which follows from discrete orthogonality of the functions $e^{ikx}$
on the grid $\{x_{1m}\}_{m=0}^{M_1-1}$ for $|k|<M_1/2$ together with
$\|\dot\theta\|=1$ and the fact that $\theta$ only contains
positive-index Fourier modes in equation \eqref{eq:dof}. We interpret
$\dot\eta^\text{graph}(x,t)$ and $\dot\varphi^\text{graph}(x,t)$ as
the perturbation direction of a nearly time-periodic, infinitesimal
secondary standing wave.  Counting the oscillations in
figures~\ref{fig:lin:eta:phi}(b) and~\ref{fig:lin:eta:phi}(c)
shows that this bifurcation corresponds to the $(p,j)=(37,7)$
harmonic resonance, but the sinusoidal pattern of the wave has been
significantly distorted as it evolves over solution $P$, the primary
wave of figure~\ref{fig:lin:eta:phi}(a).

Our final task is to determine the phase of this secondary standing
wave on the two bifurcation branches passing through $NP$ and $QR$ in
figure~\ref{fig:bif027fit}(a). The main challenge is hidden
by the extreme aspect ratio of the figure. The change in $\epsilon$
from point $N$ to point $P$ is $2.6\times10^8$ times larger than the
change in $\hat\eta_{36}$, even though it looks like the bifurcation
curve is nearly vertical from $N$ to $P$ in the plot.  Most of the
change in the initial condition $\theta$ from $N$ to $P$ is due to the
dependence on $\epsilon$ of the underlying primary wave rather than
the excitation of the secondary wave.  Our idea is to filter this out
by studying the alignment of the higher-frequency components of
$(\theta_P-\theta_N)$ with those of $\dot\theta$.  In
figure~\ref{fig:bif027fit}(b), we plot the magnitudes of the components
of $C(\theta_P-\theta_N)$ on top of those of $\dot\theta$, where
$C=5.1\times10^{15}$. This factor of $C$ visually aligns the
magnitudes of the components of $(\theta_P-\theta_N)$ with those of
$\dot\theta$ over the range $33\le k\le 60$.

The low-frequency components of $C(\theta_P-\theta_N)$ are large but
decay rapidly. The dashed line shows the trend line if these modes had
continued to decay geometrically at their initial decay rate.
Instead, there is a growth phase beginning at $k=28$ where the
components of $(\theta_P-\theta_N)$ grow by five orders of magnitude
before decaying again. We formed vectors $u$ and $v$ containing
components $33\le k\le 60$ of $(\theta_P-\theta_N)$ and $\dot\theta$,
rescaled to make $u$ and $v$ unit vectors in $\mbb R^{28}$. We find
that the angle $\Theta$ between $u$ and $-v$, computed via
$\sin(\Theta/2)=\frac12\|u-(-v)\|$, is $\Theta=5.54\times 10^{-8}$,
which shows that the high-frequency components of
$(\theta_P-\theta_N)$ are nearly perfectly aligned with those of
$-\dot\theta$.  Similarly, if we replace $u$ by components $33\le k\le
60$ of $(\theta_R-\theta_Q)$, the angle $\Theta$ between $u$ and $v$,
computed via $\sin(\Theta/2)=\frac12\|u-v\|$, is also
$\Theta=5.54\times 10^{-8}$.  It was not necessary to recompute
$\dot\theta$ at $R$ when switching from $(\theta_P-\theta_N)$ to
$(\theta_R-\theta_Q)$. The contour plots in
figure~\ref{fig:lin:eta:phi}(b,c) look identical whether we linearize
around $P$ or $R$. Since the sign of $\dot\eta^\text{graph}(0,0)$ is
positive in figure~\ref{fig:lin:eta:phi}(b), we learn that
following the bifurcation branch passing through $Q$ and $R$ leads to
a secondary standing wave that is in phase with the primary wave,
which sharpens the crest at $(x,t)=(0,0)$ and increases $\epsilon$.
Following the branch passing through $N$ and $P$ leads to a secondary
standing wave that is out of phase with the primary wave, which
flattens the crest and decreases $\epsilon$. A dimple would likely
form at the crest if one follows the bifurcation branch far enough in
that direction.

This method of studying the solution of the linearized problem about a
standing wave near an imperfect bifurcation predicted by a Pad\'e pole
on the real $\epsilon$-axis to classify the resonance responsible for
the bifurcation is, to our knowledge, new, and is much less expensive
than using numerical continuation to compute fully nonlinear solutions
far out on the bifurcation branches to directly observe the secondary
standing waves that are excited by the resonance. It is interesting
that $\hat\eta_{36}$ in figure~\ref{fig:bif027fit} responds more
strongly to the $(37,7)$ resonance than $\hat\varphi_{37}$. This shows
that the strong deformation of the shape of the $(37,7)$ resonance in
figure~\ref{fig:lin:eta:phi}(b,c) away from the
tensor product form $\cos(37x)\cos(7t)$ and the change of variables
from the graph-based formulation plotted in the figure to conformal
variables have large effects on the Fourier modes $\hat\eta_p$ and
$\hat\varphi_p$ of the initial conditions in equation
\eqref{eq:shooting:fourier}.

\end{document}